\documentclass[10pt,journal,compsoc]{IEEEtran}
\usepackage{xcolor}
\usepackage[nocompress]{cite}
\usepackage{graphicx}
\usepackage{amsmath}
\usepackage{epsfig}
\usepackage[noend]{algorithmic}
\usepackage[caption=false, font=footnotesize]{subfig}
\usepackage{multirow}
\usepackage{tabularx}
\usepackage{enumitem}
\usepackage[mathscr]{eucal}

\usepackage{url}

\usepackage[linesnumbered,ruled,algonl,vlined,noend]{algorithm2e}

\SetKwRepeat{Do}{do}{while}

\usepackage{amsthm} 
\newtheorem{lemma}{Lemma}

\newtheorem{example}{Example}

\theoremstyle{definition}
\newtheorem{definition}{Definition}

\usepackage[noend]{algorithmic}

\newcommand{\algorithmiccontinue}{\textbf{continue}}
\newcommand{\CONTINUE}{\STATE \algorithmiccontinue}

\renewcommand{\algorithmiccomment}[1]{\bgroup\hfill//~#1\egroup}

\newcommand{\hui}[1]{{\color{red}{\bf{Hui says:}} \emph{#1}}}

\newcommand{\shane}[1]{{\color{orange}{\bf{Shane says:}} \emph{#1}}}

\newcommand{\myparagraph}[1]{\vspace{0.45\baselineskip}\noindent{\textbf{#1.}}~}

\newcommand{\rider}{\ensuremath{r_{j}}}
\newcommand{\riderone}{\ensuremath{r_1}}

\newcommand{\ridertwo}{\ensuremath{r_2}}
\newcommand{\driver}{\ensuremath{d_{i}}}
\newcommand{\tripschedule}{\ensuremath{tr}}
\newcommand{\ntripschedule}{\ensuremath{tr^{\prime}}}

\newcommand{\augdistance}{\ensuremath{\mathscr{U}}}

\newcommand{\optaugdistance}{\ensuremath{\mathscr{U}^{*}}}

\newcommand{\algoDistancefirst}{{\textit{Distance-first}}\xspace}
\newcommand{\algoGreedy}{{\textit{Greedy}}\xspace}

\newcommand{\partitionnum}{\ensuremath{\tau}}

\newcommand{\source}{\ensuremath{l^{s}}}
\newcommand{\des}{\ensuremath{l^{d}}}


\newcommand{\dsa}{\ensuremath{DSA}\xspace}
\newcommand{\df}{\ensuremath{DF}\xspace}
\newcommand{\dfp}{\ensuremath{DF}+\ensuremath{P}\xspace}
\newcommand{\gr}{\ensuremath{GR}\xspace}
\newcommand{\grp}{\ensuremath{GR}+\ensuremath{P}\xspace}
\newcommand{\sa}{\ensuremath{SA}\xspace}

\newcommand{\sriders}{\ensuremath{R_s}}

\newcommand{\detourdis}{\ensuremath{\triangle d}}

\newcommand{\ridersource}{\ensuremath{l^{s}}}
\newcommand{\riderdes}{\ensuremath{l^{d}}}

\newcommand{\shortestpathquery}{\ensuremath{\Omega}} 
\newcommand{\detour}{\ensuremath{\theta}} 

\newcommand{\wait}{\ensuremath{w}\xspace} 


\newcommand{\ms}{\ensuremath{c}} 

\newcommand{\updatetimeinter}{\ensuremath{\triangle t}\xspace}
\newcommand{\loc}{\ensuremath{l}\xspace} 
\newcommand{\filter}{{\textit{Filter}}\xspace}
\newcommand{\refine}{{\textit{Refine}}\xspace}

\newcommand{\utilitydf}{\ensuremath{U_{df}}\xspace} 
\newcommand{\utilitygr}{\ensuremath{U_{gr}}\xspace} 

\newcommand{\capacity}{\ensuremath{cp[k]}\xspace}
\newcommand{\slack}{\ensuremath{sd[k]}\xspace}

\newcommand{\additiondis}{\ensuremath{AD}\xspace}

\newcommand{\speed}{\ensuremath{sp}\xspace}

\newcommand{\dist}{\ensuremath{dis}}

\newcommand{\distactual}{\ensuremath{dis_a}}
\newcommand{\distsd}{\ensuremath{\distactual(\source,l^d)}}
\newcommand{\distminsd}{\ensuremath{dis(\source,l^d)}}

\newcommand{\timesd}{\ensuremath{t_a(\source,l^d)}}
\newcommand{\timeminsd}{\ensuremath{t(\source,l^d)}}


\begin{document}
	\title{Dynamic Ridesharing in Peak Travel Periods}
	
	\author{Hui~Luo,~Zhifeng~Bao,~Farhana~M.~Choudhury,~and~J.~Shane~Culpepper
		\IEEEcompsocitemizethanks{
			\IEEEcompsocthanksitem H. Luo, Z. Bao (corresponding author) and J. Culpepper are with the School of Science, Computer Science, and Information Technology, RMIT University, Melbourne, VIC 3000, Australia.\protect\\
			E-mail: firstname.surname@rmit.edu.au
			\protect
			\IEEEcompsocthanksitem F. Choudhury is with the School of Computing and Information Systems, The University of Melbourne, Melbourne, VIC 3000, Australia. \protect\\
			E-mail: farhana.choudhury@unimelb.edu.au
			}
	}


	\IEEEtitleabstractindextext{%
		\begin{abstract}
			The abstract goes here.
		\end{abstract}
		
		\begin{IEEEkeywords}
			Computer Society, IEEE, IEEEtran, journal, \LaTeX, paper, template.
	\end{IEEEkeywords}}

	\IEEEtitleabstractindextext{%
	\begin{abstract}
In this paper, we propose and study a variant of the dynamic
ridesharing problem with a specific focus on peak hours: Given a set
of drivers and a set of rider requests, we aim to match drivers to
each rider request by achieving two objectives:
maximizing the served rate and minimizing the total additional distance, subject to a series of spatio-temporal constraints.
Our problem can be distinguished from existing ridesharing solutions
in three aspects: (1) Previous work did not fully explore the impact
of peak travel periods where the number of rider requests
is much greater than the number of available drivers.
(2) Existing ridesharing solutions usually rely on single objective
optimization techniques, such as minimizing the total travel cost
(either distance or time).
(3) When evaluating the overall system performance, the runtime spent on updating drivers' trip schedules as per newly coming rider requests should be incorporated, while it is unfortunately excluded by most existing solutions.  
In order
to achieve our goal, we propose an underlying index structure 
on top of a partitioned road network, and compute the lower bounds of the
shortest path distance between any two vertices.
Using the proposed index together with a set of new pruning rules, we
develop an efficient algorithm to dynamically include new riders
directly into an existing trip schedule of a driver.
In order to respond to new rider requests more effectively, we
propose two algorithms that bilaterally match drivers with rider
requests.
Finally, we perform extensive experiments on a large-scale test
collection to validate the effectiveness and efficiency of the
proposed methods.
\end{abstract}

	\begin{IEEEkeywords}
		Dynamic ridesharing, peak hour, index structure, pruning rules
	\end{IEEEkeywords}
}

	\maketitle
	
	\IEEEdisplaynontitleabstractindextext
	\IEEEpeerreviewmaketitle

	\IEEEraisesectionheading{\section{Introduction}\label{sec:introduction}}

\IEEEPARstart{M}{illions} of drivers provide transportation services
for over ten million passengers every day at Didi
Chuxing~\cite{didi}, which is a Chinese counterpart of
UberPOOL~\cite{uber}.
In peak travel periods, Didi needs to match more than a
hundred thousand passengers to drivers every
second~\cite{zhang2017taxi}, and rider demand often greatly exceeds
rider capacity.
Two approaches can be used to mitigate this problem.
The first method attempts to predict areas with high travel
demands using historical data and statistical predictions or a
heat map, and taxis are strategically deployed in the corresponding
areas in advance.
An alternative approach is to serve multiple riders with
fewer vehicles using a ridesharing service: riders with similar
routes and time schedules can share the same vehicle
\cite{teubner2015economics,alonso2017demand}.
According to statistical data from the Bureau of Infrastructure,
Transport and Regional Economics~\cite{bitre},
there are less than 1.6 persons per vehicle per kilometer in
Australia.
If only 10\% of vehicles had more than one passenger, then it would reduce
annual fuel consumption by 5.4\%~\cite{jacobson2009fuel}.
Therefore, increasing vehicle occupancy rates would provide many
benefits including the reduction of gas house emissions.
Moreover, it has been reported that a crucial imbalance exists in
supply and demand in peak hour scenarios, where the rider demand is
double the rider availability based on historical data statistical
analysis at Didi Chuxing~\cite{didireport}.
Alleviating traffic congestion challenges during peak commuter times
will ultimately require significant government commitment dedicated
to increasing the region’s investment in core transportation
infrastructure~\cite{downs2005still}.
In this paper, we focus on the dynamic ridesharing problem,
specifically during peak hour travel periods.

From an extensive literature study, we make the following
observations to motivate this work -- (1) Existing ridesharing
studies~\cite{zheng2018order,chen2018priceAndtime} do not fully
explore the scenario where the number of riders is much greater than
the number of available drivers, and so the scalability of current
solutions in this setting remains unclear.
(2) Prior
studies~\cite{ta2018efficient,ma2013t,zheng2018order,huang2014large,cheng2017utility,xing2009smize}
primarily focus on single objective optimization solutions, such as
minimizing the total travel distance from the perspective of
drivers~\cite{ma2013t,huang2014large}, or maximizing
the served rate of ridesharing system~\cite{xing2009smize}.
In contrast, we aim to optimize two objectives: maximize the served
rate and minimize the total additional distance.
(3) Previous
studies~\cite{ma2013t,fu2017top,tong2018unified,chen2018priceAndtime,cao2015sharek}
report the processing time mainly based on the rider request matching
time, but not the \emph{trip schedule update time}, which encompasses
a driver's current trip schedule and the underlying index structure
updates.
However, in a dynamic ridesharing scenario where vehicles are
continuously moving, accounting for these additional costs produces
a more realistic comparison of the algorithms being studied.

Our goal in this work is to determine a series of trip schedules with the
minimum total additional distance capable of accommodating as many
rider requests as possible, under a set of spatio-temporal
constraints.
In order to achieve this goal, we must account for three features:
(1) each driver has an initial location and trip schedule, but rider
requests are being continuously generated in a streaming manner; (2)
before a driver receives a new incoming rider request, the arrival
time and location of the rider are unknown; (3) the driver and the
rider should be informed at short notice about whether a matching is
possible or not.
Specifically, the driver should be notified quickly if a new rider is
added, and similarly the rider should be informed quickly if her
travel request can be fulfilled based on the current preference
settings, such as waiting time tolerance to be picked up.

The following challenges arise in addressing this problem: 

(1)\textit{ How can we find eligible driver-rider matching pairs to
maximize the served rate and also minimize the additional
distance? } 
If more rider requests are satisfied, it implies that the driver has
to travel further to pick up more rider(s).
For instance, if a driver $d$ already has passengers on board but
receives a new rider request $r_1$, the driver might detour to pick
up $r_1$, resulting in an increase in the served rate and the
distance traveled.

(2)\textit{ How can we make a decision on the best service sequence
for a trip schedule?}
Every rider has their own maximum allowable waiting time, and detour
time tolerance.
When a rider request is served, these constraints should not be
violated.
For example, a rider request $r_1$ is already in the trip schedule of
a driver $d$, but a new rider request $r_2$ is received while the
driver is serving $r_1$.
The driver needs to determine if $r_2$ can be picked up first without
violating $r_1$'s constraints.

(3)\textit{ How can we efficiently support the ridesharing problem
for streaming rider requests?}
The time used for determining the updated trip schedule as per new
rider requests should not exceed the time window size.

In order to address the above challenges, we make the following
contributions:
\begin{itemize}[leftmargin=*]
\item We define a variant of the dynamic ridesharing problem, which
aims to optimize two objectives subject to a series of
spatio-temporal constraints presented in Section
\ref{sec:def_pf}. 
In addition, our work mainly focuses on a common yet important
scenario where the number of drivers is insufficient to serve all
riders in peak travel periods. 

\item We develop an index structure on top of a partitioned road
network and compute the lower bound of the shortest path distance
between any two vertices in constant time in Section \ref{sec:is}. 
We then propose a pruning-based rider request insertion algorithm
based on several pruning rules to accelerate the matching process in
Section \ref{sec:riderinsertion}.

\item We further propose two algorithms to find matchable and eligible
driver-rider pairs, which aim to maximize the served rate and
minimize the total additional distance, in Section
\ref{subsec:df} and Section \ref{subsec:gr} respectively.

\item We conduct extensive experiments on a real-world
dataset in order to validate the efficiency and effectiveness of our
methods under different parameter settings in Section
\ref{sec:exp}.
\end{itemize}

In addition, we review the related work in Section~\ref{sec:related}
and conclude the work in Section~\ref{sec:conclusion}.
\vspace{-8px}

	\section{Related Work} \label{sec:related}

Ridesharing has been intensively studied in recent years, in both
\textit{static} and \textit{dynamic} settings.
A typical formulation for the \textit{static} ridesharing problem
consists of designing drivers' routes and schedules for a set of rider
requests with departure and arrival locations known beforehand
\cite{ma2013analysis,lin2012research,yan2011optimization,wang2016pickup},
while the \textit{dynamic} ridesharing problem is based on the
setting that new riders are continuously added in a stream-wise
manner~\cite{hall1997dynamic,tao2007dynamic,chan2012ridesharing,morency2007ambivalence}.

\textbf{Static Ridesharing Problem. }
Ta et al.~\cite{ta2018efficient} defined two kinds of ridesharing models to
maximize the shared route ratio, under the assumption that at most
one rider can be assigned to a driver in the vehicle. 
Cheng et al.~\cite{cheng2017utility} proposed a utility-aware
ridesharing problem, which can be regarded as a variant of the
dial-a-ride problem~\cite{cordeau2007dial,psaraftis1980dynamic}.
The aim was to maximize the riders' satisfaction, which is defined as
a linear combination of vehicle-related utility, rider-related
utility, and trajectory-related utility,
such as rider sharing preferences.
Bei and Zhang~\cite{bei2018algorithms} investigated the assignment
problem in ridesharing and designed an approximation algorithm with a
worst-case performance guarantee when only two riders can share a
vehicle.
In contrast, we focus primarily on the \textit{dynamic} ridesharing
problem, and allow the maximum number of riders to be greater than
two.

\textbf{Dynamic Ridesharing Problem. }
Several existing techniques are well illustrated and outlined by two
recent surveys
\cite{agatz2012optimization,furuhata2013ridesharing}. 
We describe the literature in chronological order.
Agatz et al.~\cite{agatz2011dynamic} explored a ride-share
optimization problem in which the ride-share provider aims to
minimize the total system-wide vehicle-miles.
Xing et al.~\cite{xing2009smize} studied a multi-agent system with
the objective of maximizing the number of served riders.
Kleiner et al.~\cite{kleiner2011mechanism} proposed an auction-based
mechanism to facilitate both riders and drivers to bid according to
their preferences. 
Ma et al.~\cite{ma2013t,ma2015real} proposed a dynamic taxi
ridesharing service to serve each request by dispatching a taxi with
the minimum additional travel distance incurred.
Huang et al.~\cite{huang2014large} designed a kinetic tree to
enumerate all possible valid trip schedules for each driver.
Duan et al.~\cite{duan2016real} studied the personalized ridesharing
problem which maximizes a satisfaction value defined as a
linear function of distance and time cost.
Zheng et al.~\cite{zheng2018order} considered the platform profit as
the optimization objective to be maximized by dispatching the orders
to vehicles.
Tong et al.~\cite{tong2018unified} devised route plans to maximize
the unified cost which consists of the total travel distance and a
penalty for unserved requests.
Chen et al.~\cite{chen2018priceAndtime} considered both the pick-up
time and the price to return multiple options for each rider request
in dynamic ridesharing.
Xu et al.~\cite{xu2019efficient} proposed an efficient
rider insertion algorithm that minimizes the maximum flow time of all
requests or minimizes the total travel time of the driver.

Our work is different from existing ridesharing studies in two
aspects:
(1) Our focus is the peak hours scenario where there are too few
drivers to satisfy all of the rider requests, which has not been
explored in previous work.
Later in the experimental study we extend the
state-of-the-art~\cite{chen2018priceAndtime} to the peak hour
scenario and show that our approach outperforms it in both efficiency
and effectiveness.
(2) Most previous studies aim to optimize a single
objective~\cite{huang2014large,agatz2011dynamic,ma2013t,xing2009smize,zheng2018order,cheng2017utility,xu2019efficient} or a customized linear function~\cite{duan2016real,tong2018unified}.
This differs from our problem scenario where we solve the
dual-objective optimization problem.
Chen et al.~\cite{chen2018priceAndtime} also considered two criteria
(price and pick-up time) from the perspective of riders, but not the
same criteria (served rate and additional distance) as we use to
satisfy riders, drivers and ridesharing system requirements.
In our problem, riders can provide their personal sharing preferences
by giving constraint values, such as waiting time tolerance,
drivers and ridesharing system expect to
serve more riders with less detour distance, which coincides with our
goal, i.e., maximizing the served rate
and meanwhile minimizing the additional distance.
Kleiner et al. \cite{kleiner2011mechanism} took both minimizing the total travel
distance and maximizing the number of served riders into
consideration, where riders and drivers can select each other by adjusting bids. 
They assumed that each driver can only share the trip
with one other rider, which limited the potential of ridesharing system. 
Other recent studies on task assignment
\cite{tong2016onlineminimum,tong2016onlinemobile} exploit bilateral
matching between a set of workers and tasks to achieve one single
objective, minimizing the total distance
\cite{tong2016onlineminimum} or maximizing the total utility
\cite{tong2016onlinemobile}.
However, when the ordering of multiple riders must also be mapped
to a single driver, bilateral mapping is not sufficient. 


\vspace{-5px}

\section{Problem formulation}\label{sec:def_pf}
\subsection{Preliminaries}
Let $G = \langle V,E \rangle$ be a road network represented as a
graph, where $V$ is the set of vertices and $E$ is the set of edges.
We assume drivers and riders travel along the road network.
Let $D$ be a set of drivers, where each $d \in D$ is a tuple $\langle
\loc,\ms, tr \rangle$.
Here, \loc $\in V$ is the current location, \ms \xspace is the
maximal seat capacity, and $tr$ is the current trip schedule of the
driver $d$.
If {\loc} does not coincide with a vertex, we map the location to the
closest vertex for ease of computation.
Each trip schedule $tr=\langle o_0,o_1,o_2,\ldots,o_n \rangle$ is a
sequence of points, where $o_0$ is the driver's current location, and
$o_k$ ($1 \leq k \leq n$) is a source or destination point of a
rider. 
We assume that the rider requests arrive in a streaming fashion.
We impose a time-based window model, where we process the set of
requests that arrive in the most recent timeslot.

\begin{definition}
(\textit{Rider Request}).
Let $R$ be a set of rider requests.
Each $r \in R$ is a tuple
$\langle t,\ridersource,\riderdes,rn,w,\detour \rangle$, where $t$ is
the request submission time, $\ridersource \in V$ is the source
location, $\riderdes \in V$ is the destination
location, $rn$ is the number of riders in that request, $w$ is the waiting time threshold (i.e., the maximum period $r$ needs to be picked up after submitting a request), and {\detour} is a detour time threshold
(explained later in Definition~\ref{def:problem}).
\end{definition}
\vspace{-5px}
\myparagraph{Additional distance} 
As mentioned before, each driver $d \in D$ maintains a trip
schedule $tr=\langle o_0,o_1,o_2,\ldots,o_n \rangle$, where the
corresponding riders are served sequentially in $tr$.
If a new rider must be served by $d$, the trip schedule changes. 
The additional distance $\additiondis =
\dist_{\ntripschedule}-\dist_{\tripschedule}$ is defined as the
difference between the travel distance of the updated trip schedule
{\ntripschedule} after a new rider request is inserted, and the
travel distance of the original trip schedule {\tripschedule}.

Here, the travel distance of a trip schedule is computed as:
$dis_{tr}=\sum_{k=1}^{n}\dist(o_{k-1},o_k)$, where the
distance between two points is computed as the shortest path in the
road network $G$.
For any two points $o_i$ and $o_j$ ($0 \leq i \textless j \leq n$)
which are not adjacent in \tripschedule, the travel distance from
$o_i$ to $o_j$ is defined as follows:
$dis_{tr}(o_i,o_j)=\sum_{k=i}^{j-1}\dist(o_k,o_{k+1})$.

\myparagraph{Served Rate}
The served rate $SR = |\sriders| / |R|$ is defined as the ratio of the number of
served riders ${\sriders}$ (i.e., matched with a driver) over the total number of riders $|R|$. 
Now we formally define the dynamic ridersharing problem.
\vspace{-5px}
\subsection{Problem Definition}
\begin{definition}\label{def:problem}
(\textit{Dynamic Ridesharing}).
Given a set of drivers $D$ and a set of new incoming riders $R$ on
road network, the dynamic ridesharing problem finds the
optimal driver-rider pairs such that (1) the served rate
$SR$ is maximal; (2) the total additional distance $\sum_{i=1}^{|D|} AD_{i}$ is minimal, subject to the following
spatio-temporal constraints: 

(a) \textbf{Capacity constraint.}
The number of riders served by any driver $d$ should not exceed the
corresponding maximal seat capacity $c$.

(b) \textbf{Waiting time constraint.}
The actual time for the driver to pick up the rider after receiving
the request should not be greater than the rider's waiting time
constraint $w$.

(c) \textbf{Detour time constraint.}
A driver may detour to pick up other riders, so the actual
travel time {\timesd} by any rider $r$ in the road network
should be bounded by the shortest travel time {\timeminsd} multiplied by
the corresponding rider's detour threshold {\detour}, which is 
$\timesd \leq (1+\detour)\times \timeminsd$. 
\end{definition}

Note that time and distance are interchangeable using
reasonable travel speeds collected from historical data. 
Therefore, we emphasize the following two points for clarity of exposition: (i) we adopt a uniform travel speed assumption henceforth, while we conduct the experiments under different settings of travel speed in experiments (Section \ref{sec:exp}) to simulate varying road conditions in peak hours; (ii) in accordance with our index structure (Section \ref{sec:is}), which is a distance-based framework, $w$ is stored as distance value computed by a multiplication of waiting time and travel speed. In regard to the detour time constraint, we represent it as an inequation based on distance, i.e., $\distsd \leq (1+\detour)\times \distminsd$, where \distsd \xspace is the actual
travel distance, and \distminsd \xspace is the shortest travel distance. 

\vspace{-2ex}
\subsection{Solution Overview}
The dynamic ridesharing problem is a classical constraint-based
combinatorial optimization problem, which was proven to be
NP-hard~\cite{zheng2018order}. 
Our objective is to match the set of new rider requests $R$
in each timeslot with the set of drivers $D$ and update the
corresponding drivers' trip schedules, such that all constraints are met, the served rate is maximized, and the total
additional distance is minimized.

We first propose an underlying index structure on top of a
partitioned road network to compute the lower bound distance between
any two vertices in Section \ref{sec:is}.
Then, one crucial problem is to accommodate a new incoming rider
request $r$ in an existing trip schedule for a driver efficiently.
We present a pruning based algorithm to efficiently insert the source
and destination points of a rider into an existing trip schedule of a
driver in Section \ref{sec:riderinsertion}.
Note that, when inserting new points into a trip schedule, the
constraints of existing riders in that trip schedule cannot be
violated.
Moreover, multiple drivers may be able to serve a rider, and there
can be multiple options to insert a rider's source and destination
points into a trip schedule.
Thus, the dynamic insertion of a rider's request into a trip schedule is a
difficult optimization problem.

Next, we propose two different algorithms in Section~\ref{sec:mb} to
find the match between $R$ and $D$ using the insertion algorithm.
The first is the \algoDistancefirst algorithm in
Section~\ref{subsec:df}.
Specifically, we process each rider request one by one in a
first-come-first-serve manner according to the request submission time.
For each rider request, we invoke the insertion algorithm
(Algo.~\ref{riderInsertion}) to find an eligible driver who
generates the minimal additional distance.
Although {\algoDistancefirst} can match each rider request with a
suitable driver efficiently, the served rate of the ridesharing
system is neglected in the process.
Therefore, we propose the {\algoGreedy} algorithm in
Section~\ref{subsec:gr}.
In this approach, we consider the batch of rider requests within the
most recent timeslot (e.g., 10 seconds) altogether and match them with
a set of drivers optimally by trading-off two metrics: served rate
and additional distance.
\vspace{-10px}

	\vspace{-2ex}
\section{Index Structure}\label{sec:is}
\begin{table}[t]
\renewcommand{\arraystretch}{1.3}
\caption{Symbol and description}
\label{table:notation}
\vspace{-2ex}
\centering
\begin{tabular}{lp{5.6cm}}
\hline
{\bfseries Symbol} & {\bf Description}\\
\hline

$d=\langle id,\loc,\ms,tr \rangle$ & A driver $d$ with a unique $id$, current location $\loc$, maximal seat capacity $\ms$, and current trip schedule $tr$ \\ \hline

$tr=\langle o_0,o_1,\ldots,o_n \rangle$	&	A trip schedule $tr$ consists of a sequence of points, where $o_0$ is the driver's current location, and $o_i$ ($1 \leq i \leq n$) is a source or destination point of a rider \\ \hline

$r=\langle t,\ridersource,\riderdes,rn,w,\detour \rangle$ & A rider request $r$ with the submission time $t$, a source point \ridersource, a destination point \riderdes, the number of riders $rn$, a waiting time threshold $w$, and a detour threshold $\detour$ \\ \hline


$\distactual(s, d)$ & The actual travel distance between $s$ and $d$\\ \hline

$\dist(s, d)$ & The shortest travel distance between $s$ and $d$\\ \hline




$\dist(G_i,G_j)$	&	The lower bound distance between two subgraphs $G_i$ and $G_j$	\\ \hline

$dis^{\downarrow}(u,G_v)$	&	The lower bound distance between a vetex $u$ and a subgraph $G_v$\\ \hline

$dis^{\downarrow}(u,v)$	&	The lower bound distance between any two vertices $u$ and $v$ \\ \hline

$\detourdis(o_a,o_b,o_c)$	&	The incremental distance by inserting $o_b$ between $o_a$ and $o_c$, then $\detourdis(o_a,o_b,o_c)=\dist(o_a,o_b) + \dist(o_b,o_c)-\dist(o_a,o_c)$	\\ \hline

\utilitydf, \utilitygr	&	Two optimization utility functions	\\ \hline

$\updatetimeinter$	&	The update time window	\\ \hline

$\speed$	&	The travel speed	\\ \hline
\end{tabular}
\vspace{-10pt}
\end{table}

In this section, we propose a new index structure on top of a partitioned
road network.
First we present the motivation behind the index design, and then we
present the details of the index in Section~\ref{index-roadnetwork}. Table~\ref{table:notation} presents the notation used throughout this
work.
\vspace{-7px}
\subsection{Motivation}
The distance computation from a driver's location to a rider's pickup
or drop-off point can be reduced to the shortest path computation
between their closest vertices in the road network, which can be easily
solved using an efficient hub-based labelling
algorithm~\cite{abraham2011hub}.
Although invoking the shortest path computation once only requires a
few microseconds, a huge number of online shortest path computations
are required when trying to optimally match new incoming riders with
the drivers with constraints, and update the trip schedules
accordingly.
Such computations lead to a performance bottleneck.

A straightforward way is to precompute the shortest path distance
offline for all vertex pairs and store them in memory or disk.
Then the shortest path query problem is simply reduced to a direct
look-up operation.
Although the query can be processed efficiently, this approach is
rarely used in a large road network in practice, especially when many variables
may change in a dynamic or streaming scenario.
Therefore, it is essential but non-trivial to devise an efficient
index over road network which can be used to estimate the actual
shortest path distance between any two locations.

Since road networks are often combined with non-Euclidean distance
metrics, a traditional spatial index cannot be directly used. 
For example, a grid index is widely used in existing ridesharing
studies~\cite{chen2018priceAndtime,tong2018unified,ma2013t,cao2015sharek}
to partition the space.
Generally, they divide the whole road network into multiple
equal-sized cells and then compute the lower or upper bound
distance between any two grids.
These distance bounds are further used for pruning.
However, as the density distribution of the vertices vary widely in
urban and rural regions, most grids are empty, and contain no 
vertex.
For example, more than 80\% grids are empty in the grid index, resulting in very weak  pruning power in ridesharing scenarios \cite{chen2018priceAndtime}.
Although a quadtree index can divide the road network structure
in a density-aware way, it has to maintain a consistent
hierarchical representation such that each child node update may lead
to a parent node update, which can increase the update costs when
available drivers are moving (which is often true in real world
scenarios).
Therefore, we choose to adopt a density-based road network partitioning
approach, which can efficiently estimate shortest path distances.
\vspace{-10px}
\subsection{Road Network Index}\label{index-roadnetwork}
The index is constructed in two steps --
(1) Partition the road network into subgraphs such that 
closely connected vertices are stored in the same subgraph, while the
connections among subgraphs are minimized.
(2) For each subgraph, it stores the information necessary to
efficiently estimate the shortest path between any pair of
vertices in the road network.

\begin{figure*}[t]
	\centering
	\begin{minipage}{0.62\textwidth}
		\subfloat[Road Network Partition example.
		The red vertices represent the bridge vertices in each
		subgraph, while the red edges denote the cut edges connecting
		two disjoint subgraphs.\label{subfig-graphpartition}]
		{\includegraphics[width=0.46\linewidth]{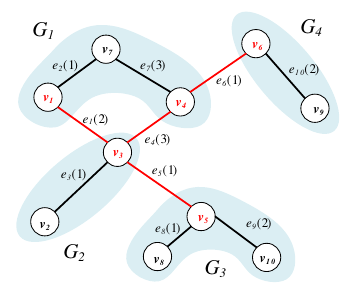}}
		\subfloat[Road Network Index example \label{subfig-graphindex}]{
			\includegraphics[width=0.46\linewidth]{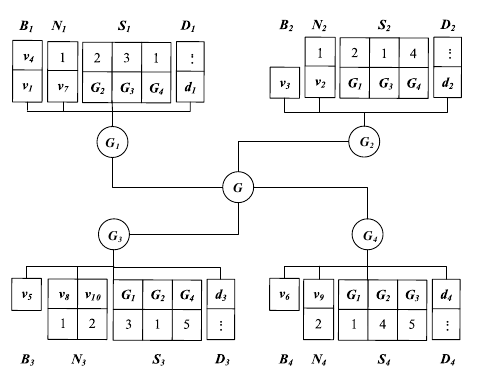}}
		\caption{An illustration example of road network indexing}
		\label{fig-roadnetwork} 
	\end{minipage}
	\begin{minipage}{0.375\textwidth}
		\subfloat[The source insertion position $i$ and the
		destination insertion position $j$ are both after the $n$-th
		position (after finishing the existing trip schedule)
		\label{insertion_1}]{
			\includegraphics[width=0.46\linewidth]{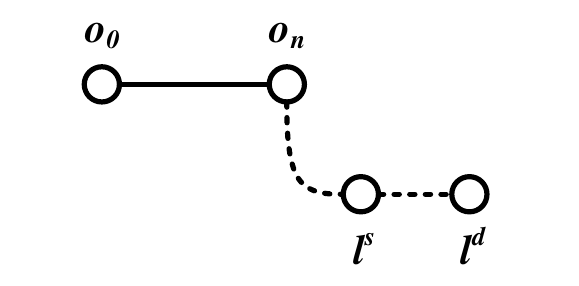}}
		\quad
		\subfloat[$i$ and $j$ are consecutive and inserted before the $n$-th position \label{insertion_2}]{
			\includegraphics[width=0.43\linewidth]{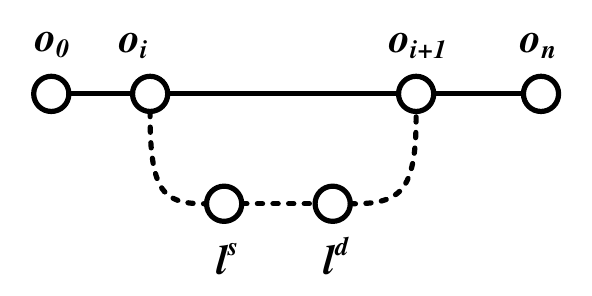}}
		\vspace{-0.08in} 
		\subfloat[$i$ and $j$ are not consecutive and inserted before the $n$-th position\label{insertion_3}]{
			\includegraphics[width=0.46\linewidth]{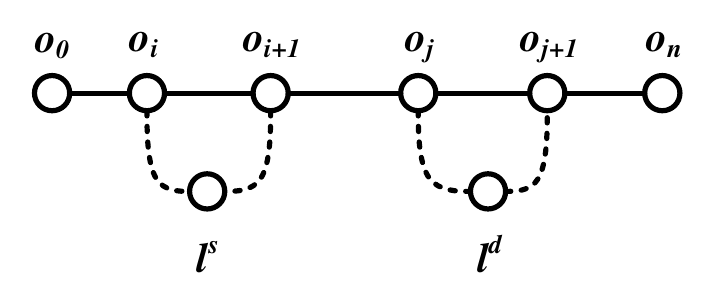}}
		\quad
		\subfloat[$i$ is before the $n$-th position but $j$ is inserted after the $n$-th position\label{insertion_4}]{
			\includegraphics[width=0.43\linewidth]{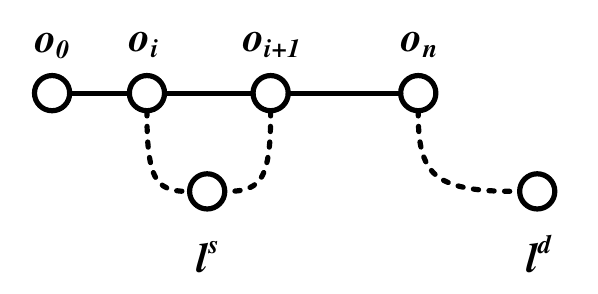}}
		\caption{Four different ways the source and the destination
			points of a new rider can be inserted in an existing trip
			schedule}
		\label{fig-insertion} 
	\end{minipage}
	\vspace{-10px}
\end{figure*}

\myparagraph{Density-based Partitioning}
We present a density-based partitioning approach which divides the
road network $G$ into multiple subgraphs.
The partition process follows two criteria: (1) \textbf{Similar
number of vertices:} each subgraph maintains an approximately equal
number of vertices such that the density distributions of the
subgraphs are similar.
(2) \textbf{Minimal cut:} an edge $e$ is considered as a \textit{cut}
if its two endpoints belong to two disjoint subgraphs.
The number of cuts is minimal, so that the vertices in a subgraph are
closely connected.

Given a road network $G= \langle V,E \rangle$ with a vertex set $V$
and an edge set $E$, a partition number \partitionnum, $G$ is divided
into a set of subgraphs $G_1, G_2, \ldots, G_\tau$, where each
subgraph $G_u$ contains a subset of vertices $V_u$ and a subset of
edges $E_u$ such that (1) $V_1 \cup V_2 \cup \ldots
\cup V_{\tau} = V$, $E_1 \cup E_2 \cup \ldots 
\cup E_{\tau} \subseteq E$; (2) if $1 \leq u, v \leq \tau$, and $u
\neq v$, then $V_u \cap V_v = \emptyset$, $E_u \cap E_v = \emptyset$;
(3) $|E_1|
\approx |E_2| \approx \ldots 
\approx |E_{\tau}|$; (4) the number of edge cuts
$|E|-\sum_{u=1}^{\tau}|E_u|$ is minimized.

The graph partitioning problem has been well-studied in the literature 
and is not the primary focus of this paper.
Instead, we focus on how to define the distance bounds in order to
reduce unnecessary shortest path distance computations.
Thus we use a state-of-the-art method~\cite{karypis1998fast} to
obtain the density based partitioning of the road network $G$.


\myparagraph{Subgraph information}
After we obtain a set of disjoint subgraphs from $G$, we build our
index by storing the following information for each subgraph $G_i$.

\begin{enumerate}[leftmargin=*]
\item \textbf{Bridge Vertex Set.}
If an edge $e$ is a \textit{cut}, then each of its two endpoints
is regarded as a \textit{bridge vertex}.
We store the set $B_i$ of the bridge vertices of each subgraph $G_i$.

\item \textbf{Non-Bridge Vertex Set.}
Vertices not in $B_i$ of a subgraph $G_i$ are
stored as a {\em non-bridge vertex} set $N_i$, i.e., $V_i \backslash
B_i=N_i$.
For each non-bridge vertex $u \in N_i$, we use $dis^{\downarrow}(u)$
to denote the lower bound distance of $u$, which is the shortest path
distance to its nearest bridge vertex $v \in B_i$ in the same
subgraph $G_i$.

\item \textbf{Subgraph set.} 
For each subgraph $G_i$, we store a list $S_i$ of the lower bound
distances, one entry for each of the other subgraphs.
Specifically, as the vertex of a subgraph is reachable from a vertex
of another subgraph only through the bridge vertices, the lower bound
distance \dist$(G_i,G_j)$ is calculated with the following equation.
\begin{equation}
\label{equ-lowdisTwographs}
\text{\dist}(G_i,G_j)=\left\{\begin{matrix}
0  & \text{if }G_i=G_j \\ 
\min_{u,v} \text{\dist}(u,v), \\
u \in B_i, v \in B_j  & \text{otherwise} \\
\end{matrix}\right.
\end{equation}

\item \textbf{Dispatched Driver Set.}
If the current location of a driver $d$ is situated in a subgraph
$G_i$, then $d$ is stored in the dispatched driver set $D_i$.
\end{enumerate}
\vspace{-14px}
\subsection{Bounding Distance Estimations}

Furthermore, we introduce two additional concepts on how to calculate
the lower bound distance from a vertex to a subgraph (Definition
\ref{def-lowdisTograph}) or another vertex (Definition
\ref{def-lowdisTovertex}).
\begin{definition}
	\label{def-lowdisTograph}
	(\textit{Lower Bound Distance Between a Vertex and a Subgraph}). 
  Given a vertex $u$ which belongs to $G_u$ and a subgraph $G_v$, the
  lower bound distance $dist^{\downarrow}(u,G_v)$ from $u$ to $G_v$
  is defined as follows:
  \begin{equation}
	\label{lowerdistance_sc}
	dis^{\downarrow}(u,G_v)=\left\{\begin{matrix}
	0 & \text{if }G_u=G_v \\ 
	\text{\dist}(G_u,G_v) & \text{if }G_u \neq G_v, u \in B  \\
	\text{\dist}(G_u,G_v)+dis^{\downarrow}(u) & \text{if }G_u \neq G_v, u \notin B  \\
	\end{matrix}\right.
	\end{equation}
\end{definition}


\begin{definition}
	\label{def-lowdisTovertex}
	(\textit{Lower Bound Distance Between Two Vertices}). 
	Given a vertex $u$ which belongs to $G_u$ and a vertex $v$ which is
	located at $G_v$, the lower bound distance $dist^{\downarrow}(u,v)$
	is defined as follows:
	\vspace{-2ex}
	\begin{equation}
		\small 
		\label{lowerdistance_sd}
		dis^{\downarrow}(u,v)=\left\{\begin{matrix}
		0 & \text{if }G_u=G_v \\ 
		\text{\dist}(G_u,G_v) & \text{if }G_u \neq G_v, u \in B_u, v \in B_v  \\
		\text{\dist}(G_u,G_v)+dis^{\downarrow}(u) & \text{if }G_u \neq G_v, u \notin B_u, v \in B_v  \\
		\text{\dist}(G_u,G_v)+dis^{\downarrow}(v) & \text{if }G_u \neq G_v, u \in B_u, v \notin B_v  \\
		\text{\dist}(G_u,G_v)+dis^{\downarrow}(u) \\ +dis^{\downarrow}(v) & \text{otherwise} \\
		\end{matrix}\right.
	\end{equation}
\end{definition}

In our implementation, we construct the road network index offline
and compute $dis^{\downarrow}(u,G_v)$ or $dis^{\downarrow}(u,v)$
online.
The index structure is memory resident, which includes
\dist$(G_u,G_v)$ and $dis^{\downarrow}(u)$ information.
Therefore, $dis^{\downarrow}(u,G_v)$ (or $dis^{\downarrow}(u,v)$) can
be computed using Eq. \ref{lowerdistance_sc} (or Eq.
\ref{lowerdistance_sd}) in $O(1)$ time complexity.
\begin{example}
A road network partitioning example is depicted in
Fig.~\ref{subfig-graphpartition}.
There are ten vertices and ten edges in the graph, and they are
divided into four subgraphs: $G_1$, $G_2$, $G_3$, and $G_4$.
The values in parenthesis after each edge denote the distance.
The red vertices represent the bridge vertices in each subgraph,
while the red edges denote the cut edges connecting two disjoint
subgraphs.
E.g., $e_1$ is a cut because it connects two
disjoint subgraphs $G_1$ and $G_2$.
$v_1$ and $v_3$ are two bridge vertices because they are two
endpoints of a cut $e_1$.

The corresponding road network index structure is illustrated in
Fig.~\ref{subfig-graphindex}.
For a subgraph $G_1$, $B_1$ includes two bridge vertices: $v_1$ and
$v_4$.
There is one remaining non-bridge vertex $v_7$ which belongs to
$N_1$.
Then the lower bound distance $dis^{\downarrow}(v_7)$ from $v_7$ to a
bridge vertex in the same subgraph is
$\min\{$\dist$(v_7,v_1),$\dist$(v_7,v_4)\}=1$.
Three subgraphs are connected with $G_1$ directly or indirectly
through cut edges.
The distance between $G_1$ and $G_2$ is
$\dist(G_1,G_2)=\min\{\dist(v_1,v_3),\dist(v_3,v_4)\}=2$,
and the distance between $G_1$ and $G_4$ is
$\dist(G_1,G_4)=\min\{\dist(v_4,v_6)\}=1$.

The lower bound distance $dis^{\downarrow}(v_7,G_3)$ between
$v_7$ and $G_3$ is
$\dist(G_1,G_3)+dis^{\downarrow}(v_7)=3+1=4$.
The lower bound distance $dis^{\downarrow}(v_7,v_9)$ between $v_7$
and $v_9$ is
$\dist(G_1,G_4)+dis^{\downarrow}(v_7)+dis^{\downarrow}(v_9)=1+1+2=4$.
\end{example}

	\vspace{-2ex}
\section{Pruning-based Rider Request Insertion}\label{sec:riderinsertion}

In this section, we propose a rider insertion algorithm on top of
several pruning rules to insert a new incoming rider request $r$ into
an existing trip schedule of a driver $d$ efficiently, such that a
customized utility function \ensuremath{U} is minimized, i.e.,
{\utilitydf} (Eq.~\ref{df:utilitydf}) for {\algoDistancefirst}
algorithm and {\utilitygr} (Eq. \ref{gr:utility}) for
{\algoGreedy} algorithm, respectively.
\vspace{-6.5px}
\subsection{Problem Assumption}\label{sec:ri:as}
First, same as prior work~\cite{cheng2017utility, zheng2018order,
ma2015real}, when receiving a new rider request, a driver maintains
the original, unchanged trip schedule sequence.
In other words, we do not reorder the current trip schedule to
ensure a consistent user experience for riders already scheduled.
For example, if a driver has been assigned to pick up $r_1$ first and
then pick up $r_2$, then the pickup timestamp of $r_1$ should be no
later than the pickup timestamp of $r_2$.
Second, in contrast to restrictions commonly adopted in prior
work~\cite{zheng2018order,
bei2018algorithms,kleiner2011mechanism,duan2016real},
where the number of
rider requests served by each driver is never greater than two, we assume
that more than two riders can be served as long as the seat capacity
constraint is not violated, which improves the usability and scalability
of the ridesharing system.
\vspace{-2.5ex}
\subsection{Approach Description}\label{sec:ri:des}
Given a set of drivers $D$ and a new rider $r$, we aim to find a
matching driver and insert {\source} and {\des} of $r$ into a 
driver's trip schedule.
A straightforward approach can be applied as follows: (1) for each
driver candidate $d$, we enumerate all possible insertion
positions for {\source} and {\des} in $d$'s current trip schedule
(its complexity is $O(n^2)$, where $n$ is the number of points in the
trip schedule); (2) for each possible insertion position pair
$\langle i,j \rangle$, we check whether it violates the waiting and
detour time constraints of both $r$ and the other riders who have been
scheduled for $d$ ($O(n)$).
Therefore, the total time consumption is $O(n^3)$.
As the insertion process is crucial to the overall efficiency, we
now propose several new pruning strategies.
Then we present our algorithm for rider insertion derived from these
pruning rules.

Before presenting the pruning strategies, we would like to introduce how the rider request is inserted and a preliminary called the slack distance.

Suppose that {\source} and {\des} are inserted in the $i$-th and
$j$-th location respectively, where $i \leq j$ must hold.
There are four ways to insert them as shown in
Fig.~\ref{fig-insertion}.
To accelerate constraint violation checking, we borrow the idea
of ``slack time'' \cite{huang2014large, tong2018unified} and define
``slack distance''.
\begin{definition}
	\label{def-slackdis}
	(\textit{Slack Distance}). 
  Given a trip schedule $tr=\langle o_0,o_1,o_2,\ldots,o_n \rangle$
  where $o_0$ is the driver's current location, the slack distance
  {\slack} (Eq. \ref{equ:sl}) is defined as the minimal surplus
  distance to serve new riders inserted before the $k$-th position in
  $tr$.
  The surplus distance w.r.t.\ a particular point $o_x$ ($k
  \leq x \leq n$) after the $k$-th position is discussed as follows:

  (1) If $o_x$ is a source point ($l_x^s$) of a rider request $r_x$,
  our only concern is whether the waiting time constraint $w_x$ will be
  violated.
  The actual pickup distance of $r_x$ from the driver's current
  location is $\distactual(o_0,o_x)$, following the trip schedule.
  In the worst case, $r_x$ is picked up just within $w_x$.
  Then the surplus distance generated by $o_x$ is
  $w_x-\distactual(o_0,o_x)$.

  (2) If $o_x$ is a destination point ($l_x^d$) of a rider request
  $r_x$, the detour time constraint will be examined.
  Similarly, the actual drop-off distance is
  $\distactual(l_x^s,l_x^d)$ following the trip schedule from the
  source point $l_x^s$.
  The worst case is that $r_x$ is dropped off within the border of
  detour time constraint, which is $(1+\theta_x)\dist(l_x^s,l_x^d)$
  from $l_x^s$.
  Then the surplus distance generated by $o_x$ is
  $(1+\theta_x)\dist(l_x^s,l_x^d) - \distactual(l_x^s,l_x^d)$.
	
	Thus we define the slack distance {\slack} as Eq.~\ref{equ:sl}.
	\begin{equation}
	\label{equ:sl}
	\text{\slack}=\left\{\begin{matrix}
	\min \{sd[k+1], w_k - \text{\distactual}(o_0,o_k)\}  & \text{if $o_k$ is a source}\\
	\min \{sd[k+1], (1+\theta_k)\text{\dist}(l_k^s,l_k^d)\\- \text{\distactual}(l_k^s,l_k^d)\} & \text{otherwise}
	\end{matrix}\right.
	\end{equation}
\end{definition}

The available seat capacity {\capacity} (Eq. \ref{equ:cp}) in
the process of pick-up and drop-off along the trip schedule changes
dynamically.
\begin{equation}
\label{equ:cp}
\text{\capacity}=\left\{\begin{matrix}
cp[k-1]-rn & \text{if it is a source point}\\
cp[k-1]+rn & \text{otherwise}
\end{matrix}\right.
\end{equation}
In addition, we use an auxiliary variable
$\detourdis(o_a,o_b,o_c)$ to indicate the incremental distance by
inserting $o_b$ between $o_a$ and $o_c$, then
$\detourdis(o_a,o_b,o_c)=\dist(o_a,o_b)+\dist(o_b,o_c)-\dist(o_a,o_c)$.
\begin{example}
Given a driver $d=\langle id,\loc,4,tr \rangle$ already carrying a rider
$\riderone=\langle 0,l_1^s,l_1^d,1,3,0.4 \rangle$, and the current
trip schedule $tr=\langle l, l_1^d \rangle$.
On the driver's way to drop off {\riderone}, $d$ receives a new rider
request $\ridertwo=\langle 1,l_2^s,l_2^d,2,5,0.6 \rangle$, then there
are three possible updated trip schedules $tr^{\prime}_1=\langle l,
l_2^s, l_2^d, l_1^d \rangle$, $tr^{\prime}_2=\langle l, l_1^d, l_2^s,
l_2^d \rangle$, and $tr^{\prime}_3=\langle l, l_2^s, l_1^d, l_2^d
\rangle$.
For $tr^{\prime}_1$, we need to check whether the detour time constraint
of {\riderone} and the waiting time constraint of {\ridertwo} are
violated.
For $tr^{\prime}_2$, we need to check whether the waiting time constraint
of {\ridertwo} is violated.
For $tr^{\prime}_3$, we need to check whether the detour time constraint
of {\riderone}, the waiting time constraint and detour time constraint of
{\ridertwo} are violated.
\end{example}
\vspace{-1.5ex}
\subsection{Pruning Rules}\label{index-prunning}
\subsubsection{Pruning driver candidates}
Based on our proposed index in Section~\ref{sec:is}, we can estimate
the lower bound distance from a driver to a rider's pickup point.
According to the waiting time constraint, drivers outside this range can
be filtered out.
\vspace{-0.5ex}
\begin{lemma}
	\label{lemma-one}
	Given a new rider $r$ and a subgraph $G_v$, if
	$dis^{\downarrow}(\source,G_v)>w$, then the drivers who
	are located in $G_v$ can be safely pruned.
	\end{lemma}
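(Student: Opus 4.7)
The plan is to prove the contrapositive at the level of distances. Since $w$ is stored as a distance (Section~\ref{sec:def_pf}), the waiting-time constraint for any candidate driver $d \in D_v$ reduces to $\dist(d.\dloc,\source) \leq w$; if I can show that $\dist(\source, d.\dloc) \geq dis^{\downarrow}(\source, G_v)$ for every such driver, then the hypothesis $dis^{\downarrow}(\source, G_v) > w$ immediately rules out every driver located in $G_v$ and the lemma follows. The entire argument therefore reduces to establishing this single lower-bound inequality.

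First I would dispose of the trivial case $G_\source = G_v$: Definition~\ref{def-lowdisTograph} gives $dis^{\downarrow}(\source, G_v) = 0$, so the hypothesis forces $w < 0$, which is impossible, and the lemma is vacuously true. The substantive case is $G_\source \neq G_v$. Here I would take any shortest $\source$-to-$d.\dloc$ path $P$ and note that, because the partition is edge-disjoint and the two subgraphs are distinct, $P$ must leave $G_\source$ through some bridge vertex $b_s$ of $G_\source$ and later enter $G_v$ through some bridge vertex $b_v$ of $G_v$. Splitting $P$ at these two vertices yields
\[
\dist(\source, d.\dloc) \;=\; |P| \;\geq\; \dist(\source, b_s) + \dist(b_s, b_v) + \dist(b_v, d.\dloc).
\]

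Each term can then be lower-bounded by an index quantity: $\dist(\source, b_s) \geq dis^{\downarrow}(\source)$ when $\source$ is a non-bridge vertex (by the definition of $dis^{\downarrow}(\cdot)$ as the distance to the \emph{nearest} bridge of the containing subgraph) and is trivially $\geq 0$ when $\source$ is itself a bridge; $\dist(b_s, b_v) \geq \dist(G_\source, G_v)$ by Eq.~\ref{equ-lowdisTwographs}, since that quantity is defined as the minimum over all such bridge pairs; and $\dist(b_v, d.\dloc) \geq 0$. Summing these three inequalities reproduces exactly the two non-trivial branches of Eq.~\ref{lowerdistance_sc}, so $\dist(\source, d.\dloc) \geq dis^{\downarrow}(\source, G_v)$ in both subcases, which combined with the hypothesis closes the argument.

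The main obstacle, in my view, is justifying the decomposition of $P$. The argument rests on the structural fact that any walk between two vertices lying in distinct subgraphs must cross at least one cut edge, and by construction both endpoints of a cut edge are bridge vertices; hence $P$ traverses at least one bridge of $G_\source$ at the moment it first exits and at least one bridge of $G_v$ at the moment it last enters, and the three resulting sub-paths are each no shorter than the corresponding pairwise shortest-path distances in $G$. Once this observation is nailed down, the rest of the proof is a direct substitution into Definition~\ref{def-lowdisTograph}.
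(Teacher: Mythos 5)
Your proof is correct, but it takes a genuinely different route from the paper's. The paper argues in two short steps: it observes, by comparing Eq.~\ref{lowerdistance_sc} with Eq.~\ref{lowerdistance_sd}, that $dis^{\downarrow}(\source,v)\geq dis^{\downarrow}(\source,G_v)$ for every vertex $v\in G_v$ (the vertex-to-vertex bound only adds the nonnegative term $dis^{\downarrow}(v)$), and then concludes that no driver in $G_v$ can meet the waiting time constraint --- implicitly invoking the fact that the true shortest-path distance dominates the vertex-to-vertex lower bound, which is essentially the content of Lemma~\ref{lemma-two} and is asserted rather than proven anywhere in the paper. You instead prove directly that $\dist(\source,d.\dloc)\geq dis^{\downarrow}(\source,G_v)$ by splitting a shortest path at the bridge vertex where it first exits $G_{\source}$ and the bridge vertex where it last enters $G_v$, and bounding the three segments by $dis^{\downarrow}(\source)$, $\dist(G_{\source},G_v)$ (via Eq.~\ref{equ-lowdisTwographs}), and $0$ respectively. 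The paper's route is shorter and modular, reducing the lemma to a formula comparison plus the soundness of the lower bound; your route is longer but self-contained, and in fact supplies the load-bearing structural argument (every inter-subgraph path must cross a cut edge, whose endpoints are bridges) that justifies calling $dis^{\downarrow}$ a lower bound in the first place --- something the paper never establishes. Your handling of the degenerate case $G_{\source}=G_v$ and of the ordering of the two bridge crossings along the path is also careful and correct.
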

\begin{proof}
\vspace{-1.5ex}
The lower bound distance $dis^{\downarrow}(\source,v)$ between the
rider's source point \source \xspace and any vertex $v$ in $G_v$ is
always greater than or equal to the lower bound distance
$dis^{\downarrow}(\source,G_v)$ from \source \xspace to $G_v$.
Therefore, if $dis^{\downarrow}(\source,G_v)>w$ holds, then $\forall
v \in G_v$, $dis^{\downarrow}(\source,v)>w$ also holds.
Thus, the drivers located at any vertex in $G_v$ cannot satisfy the
waiting time constraint of the rider.
\end{proof}
\vspace{-2.5ex}
\begin{lemma}
	\label{lemma-two}
	Given a new rider $r$ and a driver $d$, if
	$dis^{\downarrow}(\source,l)>w$, then $d$ can be safely
	pruned.
	\end{lemma}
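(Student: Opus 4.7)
The plan is to chain two inequalities: first relate the lower bound distance $dis^{\downarrow}(\source,l)$ to the true shortest path distance $\dist(\source,l)$, and then relate $\dist(\source,l)$ to the actual distance the driver $d$ must travel before reaching $\source$ under \emph{any} feasible insertion of $r$ into $d$'s trip schedule. Combined, these show that even under the most favorable schedule, the distance (and hence waiting time, using the uniform-speed conversion fixed in Section~\ref{sec:def_pf}) before picking up $r$ exceeds the threshold $w$.

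First I would invoke Definition~\ref{def-lowdisTovertex}: by the construction of the partitioned index, $dis^{\downarrow}(\source,l)$ is a sum of terms each of which is either zero or the shortest distance between bridge vertices along a cross-subgraph route from $\source$ to $l$, so it never overestimates the true shortest path distance, i.e.\ $dis^{\downarrow}(\source,l)\le \dist(\source,l)$. This is essentially the content of the (deleted) lemma hinted at in the manuscript, and it can be proved by a short case analysis on whether $\source$ and $l$ lie in the same subgraph and whether they are bridge vertices.

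Next I would observe that no matter where $\source$ is inserted in $d$'s updated schedule, the portion of the route the driver traverses between the current location $l=o_0$ and the source $\source$ has length at least $\dist(\source,l)$, since $\dist(\source,l)$ is by definition the shortest path distance in $G$ between these two vertices. Chaining the two bounds yields $\distactual(l,\source)\ge \dist(\source,l)\ge dis^{\downarrow}(\source,l) > w$, which violates $r$'s waiting time constraint. Hence no feasible insertion exists for $d$, and $d$ can be safely pruned.

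The only subtle point, and what I would call out as the main obstacle, is being explicit about the lower bound property $dis^{\downarrow}(\source,l)\le \dist(\source,l)$, since the paper states this only informally (``lower bound distance'') and does not furnish a formal proof. Once that is pinned down, the rest of the argument is a one-line inequality chain on top of Lemma~\ref{lemma-one}'s style of reasoning, restricted from a whole subgraph $G_v$ down to a single vertex $l$.
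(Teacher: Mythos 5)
Your proposal is correct and follows essentially the same route as the paper: the paper's proof is exactly the one-line chain $\dist(\source,l)\ge dis^{\downarrow}(\source,l)>w$, concluding that the waiting time constraint cannot be met. Your additional explicit steps (the case analysis justifying the lower-bound property from Definition~\ref{def-lowdisTovertex}, and the observation that the actual traversed distance is at least the shortest-path distance) merely spell out what the paper leaves implicit.
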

\begin{proof}
\vspace{-2ex}
Since the shortest path distance $\dist(\source,l)$ between the new
rider's source point and the driver's current location is no less
than the lower bound distance $dis^{\downarrow}(\source,l)$, we
have $\dist(\source,l)>w$.
\end{proof}

\subsubsection{Pruning rider insertion positions}
It is worth noting that the insertion of a new rider may violate the
waiting or detour time constraint of riders already scheduled for a
vehicle, thus we propose the following rules to reduce the insertion
time examination.
\vspace{-1ex}
\begin{lemma}
	\label{lemma-three}
	Given a new rider $r$, a trip schedule $tr$ and a source
	point insertion position $i$, if $\dist(l,o_i)>w$, then the
	rider should be picked up before the $i$-th point.
\end{lemma}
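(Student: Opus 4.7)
The plan is to argue by contrapositive. I would assume that the new rider's source \source\ is inserted at some position $j > i$ in the driver's trip schedule, and derive a violation of the waiting-time constraint; this forces every feasible insertion position for \source\ to satisfy $j \leq i$, i.e., the rider is picked up before the $i$-th point.

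Fixing the convention that ``insertion at position $j$'' places \source\ between the original $o_{j-1}$ and $o_j$, the first step is to observe that when $j > i$ the prefix $l = o_0, o_1, \ldots, o_i, \ldots, o_{j-1}$ of the schedule is traversed unchanged before the driver detours to pick up \source. Because cumulative travel distance along this fixed ordered sequence is monotone nondecreasing in the index, the actual pickup distance satisfies $\distactual(l, \source) \geq \distactual(l, o_i)$.

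The second step is to invoke the elementary bound $\distactual(l, o_i) \geq \dist(l, o_i)$: the actual distance traveled along any path between two endpoints is at least the shortest-path distance between them. Chaining with the hypothesis yields $\distactual(l, \source) \geq \dist(l, o_i) > w$, which, under the distance-based reformulation of the waiting-time constraint introduced after Definition~3, strictly violates the rider's waiting budget $w$. The contradiction rules out $j > i$, so any feasible source insertion must place \source\ before the $i$-th point.

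I expect the main obstacle to be purely notational: pinning down exactly what ``insertion at position $j$'' means for the resulting cumulative distance, and verifying that the prefix up to $o_i$ is left intact by an insertion that occurs strictly later (including the edge case where $i = n$ and the driver must finish the original schedule before reaching \source, which only strengthens the inequality). Once this bookkeeping is settled, the proof reduces to a two-step chain combining a monotonicity observation with the shortest-path lower bound, and requires no case analysis.
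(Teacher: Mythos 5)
Your proposal is correct and takes essentially the same route as the paper: the paper's proof likewise observes that picking the rider up at or after the $i$-th point forces the pickup distance to be $\dist(l,\source)=\dist(l,o_i)+\dist(o_i,\source)\geq\dist(l,o_i)>w$, violating the waiting-time constraint. Your additional care in separating the cumulative trip distance from the shortest-path lower bound just makes the same two-step monotonicity chain explicit.
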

\vspace{-2ex}
\begin{proof}
Since each vehicle travels following a trip schedule, we have
$dis(l,\source)=dis(l,o_i)+dis(o_i,\source)$, which is not less than
$dis(l,o_i)$.
Then we can get $dis(l,\source)>w$, which violates the waiting time 
constraint of $r$.
\end{proof}

\begin{lemma}
	\label{lemma-four}
	Given a new rider $r$, a trip schedule $tr$ and a source
	point insertion position $i$ ($i<n$), if
	$dis^{\downarrow}(o_i,\source)+dis^{\downarrow}(\source,o_{i+1})-\dist(o_i,o_{i+1})>sd[i+1]$,
	then the rider cannot be picked up at the $i$-th point.
	\end{lemma}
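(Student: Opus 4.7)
The plan is to convert the indexed lower-bound inequality in the hypothesis into a statement about the actual detour distance, and then invoke the semantics of the slack distance to derive a constraint violation. First I will write down the real incremental distance incurred by placing $\source$ between $o_i$ and $o_{i+1}$, namely $\detourdis(o_i,\source,o_{i+1}) = \dist(o_i,\source)+\dist(\source,o_{i+1})-\dist(o_i,o_{i+1})$. Because the indexed values are genuine lower bounds on the true shortest-path distances, I have $\dist(o_i,\source)\geq dis^{\downarrow}(o_i,\source)$ and $\dist(\source,o_{i+1})\geq dis^{\downarrow}(\source,o_{i+1})$, whence $\detourdis(o_i,\source,o_{i+1})\geq dis^{\downarrow}(o_i,\source)+dis^{\downarrow}(\source,o_{i+1})-\dist(o_i,o_{i+1}) > sd[i+1]$ by the hypothesis.

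Next I will argue that a detour exceeding $sd[i+1]$ injected immediately before position $i+1$ forces a violation of some downstream constraint. By the recursive definition in Eq.~\ref{equ:sl}, $sd[i+1]$ is the minimum surplus distance across all points $o_{i+1},\ldots,o_n$: for a source point $o_k = l_k^s$ the surplus quantifies how much further the pickup can be pushed before breaching the waiting tolerance $w_k$, and for a destination $o_k = l_k^d$ it quantifies how much further drop-off can be delayed before breaching $(1+\theta_k)\dist(l_k^s,l_k^d)$. Any extra distance injected before $o_{i+1}$ propagates unchanged through the remainder of the schedule, so exceeding $sd[i+1]$ must drive at least one such surplus negative and therefore breach either a waiting or detour constraint.

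Finally I will note that the drop-off point $\des$ is inserted at some later position $j \geq i+1$, which can only add more travel distance at or after $o_j$; it cannot compensate for a violation already locked in at some $o_k$ with $k \leq j$. Hence no placement of $\des$ rescues the schedule, and $\source$ cannot be picked up at position $i$.

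I expect the main obstacle to be the second step: rigorously translating the slack-distance definition into the claim that exceeding $sd[i+1]$ necessarily breaks a downstream constraint. A careful argument must unwind the recursion in Eq.~\ref{equ:sl}, identify the minimizing point $o_k$ in each of the two cases (source vs.\ destination), and verify the arithmetic linking ``slack exceeded'' to ``$w_k$ or $(1+\theta_k)\dist(l_k^s,l_k^d)$ violated along the updated trip schedule.''
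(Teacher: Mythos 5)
Your proposal is correct and follows essentially the same route as the paper's own proof: lower-bound the true incremental distance $\detourdis(o_i,\source,o_{i+1})$ by the indexed quantity, conclude it exceeds $sd[i+1]$, and invoke the semantics of the slack distance to infer a downstream waiting or detour violation. Your additional remarks --- that the later insertion of $\des$ cannot repair the violation, and that the slack-to-violation step deserves a careful unwinding of Eq.~\ref{equ:sl} --- go slightly beyond the paper's one-line treatment of that step, but do not change the argument.
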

\begin{proof}
	The incremental distance generated by picking up
	{\source} is
	$\detourdis(o_i,\source,o_{i+1})$, which is no
	smaller than
	$dis^{\downarrow}(o_i,\source)+dis^{\downarrow}(\source,o_{i+1})-\dist(o_i,o_{i+1})$.
	Thus, we obtain
	$\detourdis(o_i,\source,o_{i+1})>sd[i+1]$, which implies
	that the incremental distance exceeds the maximal waiting or
	detour tolerance range of point(s) after the $i$-th point.
\end{proof}
\begin{lemma}
	\label{lemma-five}
	Given a new rider $r$ whose source point {\source} is already
	inserted at the $i$-th position of a trip schedule $tr$, and
	a destination point insertion position $j$, if $i=j<n$ (i.e.,
	the insertion position as shown in Fig.~\ref{insertion_2})
	and
	$dis^{\downarrow}(o_i,\source)+\dist(\source,\des)+dis^{\downarrow}(\des,o_{i+1})-\dist(o_i,o_{i+1})>sd[j+1]$,
	then {\des} cannot be inserted at the $j$-th point.
	\end{lemma}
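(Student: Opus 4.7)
The plan is to mirror the argument of Lemma~\ref{lemma-four}, but account for the fact that \source{} and \des{} are inserted consecutively between $o_i$ and $o_{i+1}$, so the extra distance injected into the remainder of the trip schedule is a three-leg detour rather than a two-leg one. First I would write the exact incremental distance produced by this consecutive insertion (Fig.~\ref{insertion_2}) as
\[
\detourdis(o_i,\source,\des,o_{i+1}) \;=\; \dist(o_i,\source) + \dist(\source,\des) + \dist(\des,o_{i+1}) - \dist(o_i,o_{i+1}),
\]
noting that every point $o_x$ with $x \geq i+1$ in the original trip schedule is shifted in its actual travel distance by exactly this amount, since the sub-path from $o_{i+1}$ onwards is unchanged.

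Next I would invoke the bound $\dist(\cdot,\cdot) \geq dis^{\downarrow}(\cdot,\cdot)$ (justified by the construction in Section~\ref{sec:is}) on the first and third terms, which are the only legs between points that were not already adjacent in $tr$; the middle leg $\dist(\source,\des)$ must be kept as the true shortest-path distance because it will appear identically in the actual realized trip. This yields
\[
\detourdis(o_i,\source,\des,o_{i+1}) \;\geq\; dis^{\downarrow}(o_i,\source) + \dist(\source,\des) + dis^{\downarrow}(\des,o_{i+1}) - \dist(o_i,o_{i+1}),
\]
so the hypothesis of the lemma gives $\detourdis(o_i,\source,\des,o_{i+1}) > sd[j+1] = sd[i+1]$ (using $i=j$).

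I would then close the argument by appealing to Definition~\ref{def-slackdis}: the slack distance $sd[i+1]$ is, by construction, the maximum extra distance that can be absorbed by the sub-schedule starting at $o_{i+1}$ without breaking either the waiting-time constraint of a subsequent source point or the detour-time constraint of a subsequent destination point. Since the incremental distance strictly exceeds this quantity, at least one already-scheduled rider after position $i$ would have her constraint violated, so this insertion of \des{} at the $j$-th position is infeasible.

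The main obstacle is really just a bookkeeping one: making clear that $\dist(\source,\des)$ cannot be lower-bounded via $dis^{\downarrow}$ (otherwise one would be slackening two different legs and possibly producing a weaker inequality than advertised), and that the shift in the actual travel distance for every downstream point $o_x$, $x \geq i+1$, equals exactly $\detourdis(o_i,\source,\des,o_{i+1})$ and not something larger or smaller; once that is pinned down, the inequality chain and the appeal to the meaning of $sd[i+1]$ are straightforward.
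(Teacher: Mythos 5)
Your proposal is correct and follows essentially the same route as the paper's proof: write the consecutive-insertion incremental distance as the three-leg detour $\dist(o_i,\source)+\dist(\source,\des)+\dist(\des,o_{i+1})-\dist(o_i,o_{i+1})$, lower-bound the two new outer legs by $dis^{\downarrow}$ while keeping the true middle leg, and conclude from the hypothesis that the increment exceeds $sd[j+1]$, violating the slack of some downstream point. Your additional bookkeeping remarks (that the middle leg must stay exact and that every point after $o_{i+1}$ shifts by exactly this amount) are implicit in the paper's shorter argument but do not change the approach.
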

\begin{proof}
	Similar to Lemma \ref{lemma-four}, the additional distance as
	a result of inserting {\source} and {\des} is
	$\dist(o_i,\source)+\dist(\source,\des)+\dist(\des,o_{i+1})-\dist(o_i,o_{i+1})
	$, which is not less than
	$dis^{\downarrow}(o_i,\source)+\dist(\source,\des)+dis^{\downarrow}(\des,o_{i+1})-\dist(o_i,o_{i+1})$.
	Then the additional distance is greater than $sd[j+1]$, which
	violates the waiting or detour time constraint of the point(s)
	after the $j$-th point.
	\end{proof}
\begin{lemma}
	\label{lemma-six}
	Given a new rider $r$ whose source point {\source} is already
	inserted at the $i$-th position of a trip schedule $tr$, and
	a destination point insertion position $j$, if $i<j<n$ (i.e.,
	the insertion position as shown in Fig.~\ref{insertion_3})
	and
	$\detourdis(o_i,\source,o_{i+1})+dis^{\downarrow}(o_j,\des)+dis^{\downarrow}(\des,o_{j+1})-\dist(o_j,o_{j+1})>
	sd[j+1]$, then {\des} cannot be inserted at the $j$-th point.
	\end{lemma}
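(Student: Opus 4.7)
The plan is to mirror the reasoning used in Lemmas~\ref{lemma-four} and~\ref{lemma-five}, but now accounting for the fact that the source insertion at position $i$ and the destination insertion at position $j$ contribute two \emph{separate} detour terms instead of a single combined one. First I would write down the exact additional distance that the updated trip schedule incurs: picking up at $o_i$ adds $\detourdis(o_i,\source,o_{i+1}) = \dist(o_i,\source)+\dist(\source,o_{i+1})-\dist(o_i,o_{i+1})$, and then dropping off between $o_j$ and $o_{j+1}$ (with $i<j<n$ so the two insertions are non-consecutive) adds $\detourdis(o_j,\des,o_{j+1}) = \dist(o_j,\des)+\dist(\des,o_{j+1})-\dist(o_j,o_{j+1})$. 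Since the two insertions act on disjoint edges of $tr$, the total additional distance experienced by every point after the $j$-th is exactly the sum of these two quantities.

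Next I would lower-bound this total using the road-network index. By Definition~\ref{def-lowdisTovertex}, $\dist(o_j,\des) \geq dis^{\downarrow}(o_j,\des)$ and $\dist(\des,o_{j+1}) \geq dis^{\downarrow}(\des,o_{j+1})$, so
\begin{equation*}
\detourdis(o_i,\source,o_{i+1}) + \detourdis(o_j,\des,o_{j+1}) \;\geq\; \detourdis(o_i,\source,o_{i+1}) + dis^{\downarrow}(o_j,\des) + dis^{\downarrow}(\des,o_{j+1}) - \dist(o_j,o_{j+1}).
\end{equation*}
Under the hypothesis of the lemma the right-hand side already strictly exceeds $sd[j+1]$, hence so does the true total additional distance.

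Finally I would invoke the semantics of the slack distance from Definition~\ref{def-slackdis}: $sd[j+1]$ is precisely the maximum extra travel distance that can be absorbed by the remaining points $o_{j+1},\ldots,o_n$ without violating any of their waiting-time or detour-time constraints. Because the cumulative detour introduced by inserting $\source$ at $i$ and $\des$ at $j$ propagates to every such point, exceeding $sd[j+1]$ forces at least one constraint to fail. Therefore $\des$ cannot be inserted at the $j$-th position, completing the proof.

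The only subtle point, and the step I would take the most care with, is justifying that $sd[j+1]$ (a quantity defined on the \emph{original} schedule $tr$) is the correct threshold to compare against the \emph{sum} of both detours. The justification is that $sd[\cdot]$ is computed by a single backward pass independent of any particular insertion, and any insertion before position $j+1$ translates into a pure additive increase in the travel distance to each subsequent point; summing the two independent detour contributions and checking against $sd[j+1]$ is therefore equivalent to checking the tightest of the downstream constraints.
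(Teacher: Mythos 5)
Your proof is correct and takes essentially the same route as the paper's: lower-bound the drop-off detour $\detourdis(o_j,\des,o_{j+1})$ by $dis^{\downarrow}(o_j,\des)+dis^{\downarrow}(\des,o_{j+1})-\dist(o_j,o_{j+1})$, add the source detour $\detourdis(o_i,\source,o_{i+1})$, and conclude that the total exceeds the slack distance $sd[j+1]$, violating the waiting or detour constraint of some point after position $j$. Your extra care in justifying why $sd[j+1]$, computed on the original schedule, is the correct threshold for the summed detours is a welcome elaboration of a step the paper leaves implicit, but it does not constitute a different approach.
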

\begin{proof}
	The incremental distance generated by dropping off $r$ at {\des} is $\detourdis(o_j,\des,o_{j+1})$, 
	which is no smaller than
	$dis^{\downarrow}(o_j,\des)+dis^{\downarrow}(\des,o_{j+1})-\dist(o_j,o_{j+1})$. Then $\detourdis(o_i,\source,o_{i+1})+\detourdis(o_j,\des,o_{j+1})>sd[j+1]$,
	which violates the waiting or detour time constraint of point(s)
	after the $j$-th position.
	\end{proof}
\begin{lemma}
	\label{lemma-seven}
	Given a new rider $r$, a trip schedule $tr$, and two insertion
	positions $i$ and $j$, if $i<j$ and
	$dis^{\downarrow}(\source, o_{i+1})+\dist(o_{i+1},o_j)+dis^{\downarrow}(o_j,\des)>(1+\detour)\dist(\source,\des)$,
	then such two insertion positions for {\source} and {\des} can
	be pruned.
	\end{lemma}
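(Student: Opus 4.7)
The plan is to show that the hypothesis of the lemma implies the detour time constraint of the new rider $r$ itself is violated, so the pair of positions cannot yield a feasible schedule.

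First, I would describe the updated trip schedule that results from inserting $\source$ at position $i$ and $\des$ at position $j$ with $i<j$. In this situation (the case depicted in Fig.~\ref{insertion_3}), the rider is picked up right before $o_{i+1}$ and is on board all the way until dropped off right after $o_j$. Hence the \emph{actual} travel distance experienced by $r$ between source and destination in the updated schedule is
\begin{equation*}
\distactual(\source,\des)=\dist(\source,o_{i+1})+\dist(o_{i+1},o_j)+\dist(o_j,\des),
\end{equation*}
where $\dist(o_{i+1},o_j)=\sum_{k=i+1}^{j-1}\dist(o_k,o_{k+1})$ is the length of the preserved sub-trip along $tr$.

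Next, I would invoke the basic bounding property of the index established in Section~\ref{sec:is}: for any two vertices $u,v$ we have $dis^{\downarrow}(u,v)\leq \dist(u,v)$. Applying this to the two ``new'' legs $(\source,o_{i+1})$ and $(o_j,\des)$ gives $\dist(\source,o_{i+1})\geq dis^{\downarrow}(\source,o_{i+1})$ and $\dist(o_j,\des)\geq dis^{\downarrow}(o_j,\des)$, while the middle segment $\dist(o_{i+1},o_j)$ is unchanged. Summing these inequalities yields
\begin{equation*}
\distactual(\source,\des)\geq dis^{\downarrow}(\source,o_{i+1})+\dist(o_{i+1},o_j)+dis^{\downarrow}(o_j,\des).
\end{equation*}

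Combining this with the lemma's hypothesis, I obtain $\distactual(\source,\des)>(1+\detour)\dist(\source,\des)$, which directly contradicts the detour time constraint of $r$ (expressed in distance form as $\distactual(\source,\des)\leq(1+\detour)\dist(\source,\des)$ in Section~\ref{sec:def_pf}). Therefore any attempt to insert $\source$ at position $i$ and $\des$ at position $j$ would produce an infeasible schedule for $r$ and the pair $(i,j)$ can be pruned.

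There is no real obstacle here: the argument is a straightforward chaining of the lower-bound inequality of the index with the decomposition of $\distactual(\source,\des)$ along the inserted schedule. The only point to be a bit careful about is distinguishing this lemma from Lemmas~\ref{lemma-four}--\ref{lemma-six}, which all bound the \emph{slack} $sd[\cdot]$ of previously scheduled riders; by contrast, Lemma~\ref{lemma-seven} bounds the detour constraint of the \emph{new} rider $r$, so the right-hand side is $(1+\detour)\dist(\source,\des)$ rather than a slack value. Making this distinction explicit in the write-up will make the proof clean.
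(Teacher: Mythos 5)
Your proposal is correct and follows essentially the same route as the paper's own proof: decompose $\distactual(\source,\des)$ along the updated schedule as $\dist(\source,o_{i+1})+\dist(o_{i+1},o_j)+\dist(o_j,\des)$, lower-bound the two new legs by $dis^{\downarrow}(\cdot,\cdot)$, and conclude from the hypothesis that the detour constraint of $r$ is violated. Your explicit remark that this lemma targets the new rider's detour constraint rather than the slack values $sd[\cdot]$ of already-scheduled riders is a useful clarification, but not a different argument.
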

\begin{proof}
	 The actual travel distance $\distactual(\source,\des)$ from
	 {\source} to {\des} is
	 $\dist(\source,o_{i+1})+
	 \dist(o_{i+1},o_j)+\dist(o_j,\des)$, which is not less than $dis^{\downarrow}(\source,o_{i+1})+\dist(o_{i+1},o_j)+dis^{\downarrow}$($o_j,\des)$. Then we can obtain that $\distactual(\source,\des)>(1+\detour)\dist(\source,\des)$, which violates the detour time constraint of $r$.
	 \end{proof}
\begin{lemma}
	\label{lemma-eight}
	Given a new rider $r$, a trip schedule $tr$, and two insertion
	positions $i$ and $j$, $\forall k$ ($i \leq k \leq j$), if
	$rn>\capacity$, then such two insertion positions for {\source}
	and {\des} can be pruned.
\end{lemma}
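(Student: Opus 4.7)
The plan is to argue by contradiction that inserting the new rider's source at position $i$ and destination at position $j$ is incompatible with any intermediate position $k$ whose residual capacity is already too small. The key observation is that once \source{} is inserted at position $i$, the rider (and her $rn$ accompanying passengers) occupies a seat from the $i$-th point until they are dropped off at the $j$-th point, so every point $o_k$ with $i \le k \le j$ in the updated schedule must accommodate the additional $rn$ passengers simultaneously with whatever passengers were already on board just before that point.

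First I would recall the dynamic recurrence for available seat capacity given by Eq.~\ref{equ:cp}: \capacity{} is obtained from $cp[k-1]$ by subtracting $rn$ at each source and adding it back at each destination, so \capacity{} records exactly the number of free seats still available immediately after the vehicle visits $o_k$ in the \emph{original} trip schedule. Next, I would describe how this quantity transforms under the hypothesized insertion of $r$: for every index $k$ with $i \le k \le j$, the new available capacity after visiting $o_k$ becomes $cp[k] - rn$, because $r$'s $rn$ passengers have already been picked up at $o_i$ and have not yet been dropped off at $o_j$.

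I would then close the argument: if there exists some $k \in [i,j]$ with $rn > \capacity$, then $cp[k] - rn < 0$ in the updated schedule, meaning that more passengers would be on board at $o_k$ than the vehicle can physically hold. This directly violates the capacity constraint in Definition~\ref{def:problem}, and hence the pair of insertion positions $\langle i, j \rangle$ cannot produce a feasible trip schedule and can be safely pruned.

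I do not anticipate a hard step here: the lemma is essentially a direct bookkeeping consequence of Eq.~\ref{equ:cp} together with the fact that the rider remains on board throughout the interval $[i,j]$. The only subtlety is to be explicit about when the $rn$ new riders are counted (after the source insertion and before the destination insertion), so the minor obstacle is just keeping the indexing conventions consistent with the definition of \capacity{} so that the inequality $rn > \capacity$ translates cleanly to a negative residual capacity in the updated schedule.
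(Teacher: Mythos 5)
Your proposal is correct and follows the same route as the paper, whose proof is simply the one-line remark that the claim follows from the capacity constraint; you have just made the bookkeeping explicit, showing via Eq.~\ref{equ:cp} that the inserted rider reduces the residual capacity at every point $o_k$ with $i \le k \le j$ by $rn$, so $rn > \capacity$ forces a negative residual and violates the capacity constraint of Definition~\ref{def:problem}. No gap; your elaboration is a faithful expansion of the paper's intended argument.
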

	\begin{proof}
		It can be easily proved by the capacity constraint. 
	\end{proof}

In practice, we use the lower-bound distance to execute
the pruning rules.
If a possible insertion cannot be pruned, we use the true distance as a
further check, which can guarantee the correctness of the pruning
operation.
\setlength{\algomargin}{0.5em} 
\setlength{\textfloatsep}{0.1cm}  
	\begin{algorithm}[t]  
		\footnotesize 
		\caption{RiderInsertion ($d$, $r$, $U$)}   
		\label{riderInsertion}   
		\begin{algorithmic}[1] 
			\REQUIRE  
			a driver $d$ with a trip schedule \tripschedule $\langle o_0, o_1, \dots, o_n\rangle$, a new rider request $r$, a utility function $U$
			\ENSURE  
			return the utility value if $r$ can be served by $d$; Otherwise, return $-1$.
			\STATE \augdistance$\leftarrow -1$, \optaugdistance$\leftarrow\infty$  \label{ri:initial}	
			\FOR {$i \leftarrow 0$ \TO $n$} 								 \label{ri:for:start}
			\IF[Lemma \ref{lemma-three} and \ref{lemma-eight}] {$cp[i]\geq rn$ or \dist($l$,$o_i$)$\leq w$} 			\label{ri:for:if}
			\IF[Lemma \ref{lemma-four}] {$i \geq n$ and $dis^{\downarrow}$($o_i$,\source)+$dis^{\downarrow}$(\source,$o_{i+1}$)-\dist($o_i$,$o_{i+1}$) $\geq sd[i+1]$} 			\label{ri:for:if2}
			\FOR {$j \leftarrow i$ \TO $n$}	   \label{ri:for:for:start}	
			\IF[Lemma \ref{lemma-eight}] {$cp[j]\geq rn$}   \label{ri:for:for:if}	
			\IF{$j \textless n$}	
			\IF{$i$ = $j$}
			\IF[Lemma \ref{lemma-five}]{$dis^{\downarrow}$($o_i$,\source)+\dist(\source,\des)+$dis^{\downarrow}$(\des,$o_{i+1}$) $-$\dist($o_i$, $o_{i+1}$)$\textgreater sd[j+1]$}  \label{ri:for:for:if:if}	
			\CONTINUE
			\ENDIF
			\ELSE
			\IF[Lemma \ref{lemma-six} and					 \ref{lemma-seven}]{\detourdis($o_i$,\source,$o_{i+1}$)+$dis^{\downarrow}$($o_j$,\des)+$dis^{\downarrow}$(\des, $o_{j+1}$)-\dist( $o_j$,$o_{j+1}$)$\textgreater sd[j+1]$ or $dis^{\downarrow}$(\source,$o_{i+1}$)+\dist($o_{i+1}$,$o_j$)+$dis^{\downarrow}$($o_j$,\des)$\textgreater$ ($1$+\detour)\dist(\source,\des)}				\label{ri:for:for:if:else}	
			\CONTINUE
			\ENDIF
			\ENDIF
			\ENDIF			
			\STATE \augdistance $\leftarrow$ compute the utility value\label{ri:for:for:compute} using $U$ \label{ri:for:for:if:state}
			\IF { \augdistance $\textless$ \optaugdistance }	\label{ri:for:for:compare}
			\STATE \optaugdistance $\leftarrow$ \augdistance \label{ri:for:for:updatedis}
			\ENDIF	
			\ENDIF
			\ENDFOR	
			\ENDIF
			\ENDIF
			\ENDFOR
			\RETURN \optaugdistance
		\end{algorithmic}  
	\end{algorithm}

\vspace{-1em}
\subsection{Algorithm Sketch}\label{sec:ri:al}

The pseudocode for the rider insertion algorithm is shown in
Algo.~\ref{riderInsertion}.
We initialize two local variables: {\augdistance} to record the
utility value found each time, and the best utility value
{\optaugdistance} found so far (line \ref{ri:initial}), where a lower
value denotes better utility.
The pruning rules are first executed for the pickup point {\source}.
We examine whether the vacant vehicle capacity is sufficient to hold
the riders, and that the driver is close enough to provide the
ridesharing service (line \ref{ri:for:if}).
If the conditions hold for the $i$-th position, then we check whether
the detour distance will exceed the slack distance of the following
points starting from the $i$-th position.

If all of the conditions are satisfied, we continue to check the
destination point insertion (line \ref{ri:for:for:start}).
Otherwise, the source point is not added at the $i$-th position.
Similarly, we check whether the current capacity is sufficient (line
\ref{ri:for:for:if}).
Since the sequence between $i$ and $j$ leads to a different detour
distance, two cases are possible: $i=j$ (i.e., the source and the
destination are added to consecutive positions, shown in
Fig.~\ref{insertion_1} and Fig.~\ref{insertion_2}) and $i<j$ (i.e.,
the positions are not consecutive, as shown in Fig.~\ref{insertion_3}
and Fig.~\ref{insertion_4}).
If $i=j$, we compute the incremental distance generated by {\source}
and {\des}, then judge whether it is greater than the slack distance
of the following points after the $i$-th position
(line~\ref{ri:for:for:if:if}).
If $i<j$, besides checking whether the incremental distance generated
by {\source} and {\des} is larger than the slack distance, we also
need to check whether the detour time constraint of $r$ is violated (line
\ref{ri:for:for:if:else}).
If the current utility value obtained is better than the current
optimal value {\optaugdistance} found, we update {\optaugdistance}
(line \ref{ri:for:for:updatedis}).

\textbf{Time complexity analysis.}
The two nested loops (line~\ref{ri:for:start} and
\ref{ri:for:for:start}), each iterating over $n$ points takes
$O(n^2)$ total time.
The examination for constraint violations (line \ref{ri:for:if},
\ref{ri:for:if2}, \ref{ri:for:for:if}, \ref{ri:for:for:if:if}, and
\ref{ri:for:for:if:else}) only takes $O(1)$. 
Calculating the utility function value
\augdistance~(line \ref{ri:for:for:if:state}) also takes $O(1)$,
which will be explained later according to different utility
functions $U$.  
In our method, we use a fast hub-based labeling
algorithm~\cite{abraham2011hub} to answer the shortest path distance
query.
First, a hub label index is constructed in advance by assigning each
vertex $v$ (considered as a ``hub indicator'') a hub label which is a
collection of pairs ($u$, \dist($v$, $u$)) where $u \in V$.
Then given a shortest path query ($s$, $t$), the algorithm considers
all common vertices in the hub labels of $s$ and $t$.
Therefore, the shortest path distance from $s$ to $t$ depends on the
sizes of the hub label sets.
We assume that the time complexity is $O(\Omega)$, where $\Omega$
indicates the average label size of all label sets
\cite{li2017experimental}.
Therefore, the total time complexity is $O(n^2 \shortestpathquery)$.
\vspace{-8px}

	\section{Matching-based Algorithm}\label{sec:mb}
In this section, we introduce two heuristic algorithms
to bilaterally match a set of drivers and rider requests.
Furthermore, we describe the insertion/update process for
dynamic ridesharing.

\vspace{-8px}
\subsection{\algoDistancefirst Algorithm}\label{subsec:df}

The {\algoDistancefirst} algorithm is executed in the following way:
(1) we choose an unserved rider request $r$ with the earliest
submission time from $R$, and perform the pruning-based insertion
algorithm (Algo.~\ref{riderInsertion}) to find a driver $d$ with
minimal additional distance.
(2) If a driver $d$ can be found to match $r$, then the pair $(r,d)$
is joined and added to the final matching result.

The {\algoDistancefirst} algorithm has two important aspects.
First, a matching for a single rider insertion is made whenever the
constraints are not violated, even if the effectiveness is relatively
low.
Second, if multiple drivers are available to provide a ridesharing
service, the {\algoDistancefirst} algorithm always chooses the one
with the minimal additional distance.
Thus, we define the optimization utility function {\utilitydf} as the
additional distance $AD$.
\begin{equation}\label{df:utilitydf}
\text{\utilitydf}=AD
\end{equation}

Based on the insertion option used (Fig.~\ref{fig-insertion}), we
calculate $AD$ in Eq.~\ref{augdistance}.
As the distance $\dist(o_0,o_k)$ from a driver's location to any
existing point in $tr$ is already calculated and stored, the
computation of $\additiondis$ only takes $O(1)$ if the trip schedule
$tr$ and two insertion positions are given.
Specifically, $\additiondis$ is calculated as:
\begin{equation}
	\label{augdistance}
	AD=\left\{\begin{matrix}
	\text{\dist}(o_n,\text{\source})+\text{\dist}(\text{\source},\text{\des})  & \text{if }i=j=n \\ 
	\text{\dist}(o_i,\text{\source})+\text{\dist}(\text{\source},\text{\des})+\text{\dist}(\text{\des},o_{i+1})\\-\text{\dist}(o_i,o_{i+1})  & \text{if $i=j<n$} \\
	\text{\detourdis}(o_i,\text{\source},o_{i+1})+\text{\detourdis}(o_j,\text{\des},o_{j+1})  & \text{if $i<j<n$} \\
	\text{\detourdis}(o_i,\text{\source},o_{i+1})+\text{\dist}(o_n,\text{\des})  & \text{otherwise} \\
	\end{matrix}\right.
\end{equation}

%
%
%

The pseudocode of the {\algoDistancefirst} method is illustrated
in Algo.~\ref{bm:main}, which has two main phases: {\filter} and
{\refine}. We initialize an empty set $L$ to store the current valid
driver-rider pairs (line \ref{bm:initialize:result}).

In the {\filter} phase (lines \ref{bm:for:start}-\ref{bm:for:end}),
for each rider we prune out ineligible driver candidates who violate
the waiting time constraint.
Specifically, for each rider {\rider}, we maintain a driver candidate
list $D_j$ to store the drivers who can possibly serve {\rider}.
We first prune the subgraphs from which it is impossible for a driver
to serve {\rider} (line~\ref{bm:for:if}).
Then we further prune the drivers using a tighter bound, which is the
lower bound distance between the driver's location and the rider's
pickup point (line \ref{bm:for:if:for:if}).
The remaining drivers are inserted into $D_j$ as candidates.

In the {\refine} phase, the retained driver candidates are considered
in the driver-rider matching process.
We select one unserved rider with the earliest submission time
(line~\ref{bm:while:choose}) and find a suitable driver with minimal
additional distance.
In the for loop
(lines~\ref{bm:while:for:start}-\ref{bm:while:for:end}), for each
driver in $D_j$, we examine the feasibility by using
Algo.~\ref{riderInsertion} (line~\ref{bm:while:for:state}) and choose
the one with the smallest additional distance (line
\ref{bm:while:for:if:update}).
If a driver that can satisfy all the requirements is found,
then \rider \xspace is added to the corresponding trip schedule
(line~\ref{bm:while:if:insert}).

\textbf{Time complexity analysis.}
In the {\filter} phase, the two nested iterations
(line~\ref{bm:for:start} and \ref{bm:for:if:for:start}) takes
$O(\tau|R||D_u|)$. 
The time complexity of the {\refine} phase is
$O(|R||D_u|n^2\shortestpathquery)$.
Therefore, the total time complexity is
$O(\tau|R||D_u|+|R||D_u|n^2\shortestpathquery)$.

\setlength{\algomargin}{0.5em} 
\setlength{\textfloatsep}{0.1cm}
\begin{algorithm}[t]   
	\footnotesize
	\caption{\small The Distance-first Algorithm }   
	\label{bm:main}   
	\begin{algorithmic}[1] 
		\REQUIRE  
		a driver set $D$, a rider request set $R$
		\ENSURE  
		an assigned driver-rider pair list $L$;  										
		\STATE $L \leftarrow \emptyset$;													\label{bm:initialize:result}
		\FOR {$G_u \in G$ }																	\label{bm:for:start}
		\FOR {\rider $ \in R$ }																\label{bm:for1}
		\IF[Lemma \ref{lemma-one}] {$dis^{\downarrow}($\rider.\source$,G_u)\leq$\rider.$w$}	\label{bm:for:if}
		\FOR {\driver $ \in G_u.D_u$ }														\label{bm:for:if:for:start}
		\IF[Lemma \ref{lemma-two}] {$dis^{\downarrow}$(\rider.\source,\driver.$l$)$\leq$\rider.$w$}\label{bm:for:if:for:if}
		\STATE $D_j.push($\driver)															\label{bm:for:push}	
		\ENDIF																				\label{bm:for:end}
		\ENDFOR
		\ENDIF
		\ENDFOR
		\ENDFOR  													
		\WHILE {$R \neq \emptyset$} 														\label{bm:while:start}
		\STATE choose one rider \rider \xspace with the earliest submission time			\label{bm:while:choose}
		\STATE $R.pop($\rider$)$ 															\label{bm:while:pop}
		\STATE $d \leftarrow NIL$, $AD_{i,j} \leftarrow -1$, $AD_{i,j}^* \leftarrow\infty$
		\FOR {\driver \xspace in $D_j$}														\label{bm:while:for:start}	
		\STATE $AD_{i,j} \leftarrow$ RiderInsertion(\driver, \rider, \utilitydf)  \label{bm:while:for:state}	
		\IF {$AD_{i,j} \textless AD_{i,j}^*$}					       	 \label{bm:while:for:if}
		\STATE $AD_{i,j}^* \leftarrow AD_{i,j}$, $d \leftarrow $ \driver			\label{bm:while:for:if:update}
		\ENDIF
		\ENDFOR																				\label{bm:while:for:end}	
		\IF {$d \neq NIL$}
		\STATE insert rider \rider \xspace to the current trip schedule of $d$  			\label{bm:while:if:insert}	
		\STATE $L.push(d, $\rider$)$														\label{bm:while:if:push}
		\ENDIF	
		\ENDWHILE																			\label{bm:while:end}
		\RETURN $L$ 				  														\label{bm:end}
	\end{algorithmic}  
\end{algorithm}  

\vspace{-2ex}
\subsection{\algoGreedy Algorithm}\label{subsec:gr}
The drawback of the {\algoDistancefirst} algorithm is that the rider
request with the earliest submission time is selected in each
iteration, which neglects the served rate.
Therefore, we propose a {\algoGreedy} algorithm in
Algo.~\ref{gr:main} to deal with our dual-objective
problem: maximize the served rate and minimize the additional
distance. 

The {\filter} phase in {\algoGreedy} is similar to that in
{\algoDistancefirst}.
However, in the {\refine} phase, we adopt a dispatch strategy by
allocating the rider to the driver that results in the best utility
gain (i.e., minimum utility value) locally each time, until there is
no remaining driver-rider pair to refine.
Since we want to make more riders happy, where the served
riders can have less additional distance, we use a utility function
to represent the average additional distance of the served riders. 
Then if $r_j$ is served by $d_i$, the utility function {\utilitygr}
is defined as:
\vspace{-0.5ex}
\begin{equation}\label{gr:utility}
\text{\utilitygr}=\frac{AD}{\text{\rider}.rn}
\end{equation}

A rider request attached with more riders ($\rider.rn \textgreater
1$) is more favoured according to our utility gain
(Eq.~\ref{gr:utility}).
For example, if a rider makes a ridesharing request with a friend,
they tend to have the same source and destination point, which
implies that the detour distance and waiting time can be reduced or
even avoided (if not shared with other riders).
On the contrary, if two separate requests are served by a driver, it
is imperative for us to coordinate the dispatch permutation sequence
and guarantee each of them is satisfied.
We expect that $U$ is minimized as much as possible, i.e., more
riders are served with less detour distance.
\setlength{\algomargin}{0.5em} 
\setlength{\textfloatsep}{0.1cm}
\begin{algorithm}[t]
	\footnotesize
	\caption{\small The Greedy Algorithm (GR) }   
	\label{gr:main}   
	\begin{algorithmic}[1] 
		\REQUIRE  
		a driver set $D$, a rider request set $R$
		\ENSURE  
		assigned driver-rider pair list $L$;  
		\STATE $P \leftarrow$ a min-priority queue, $L \leftarrow \emptyset$   						\label{gr:initialize}
		\FOR {$G_u \in G$ }																	\label{gr:for:start}
		\FOR {\rider $ \in R$ }												\label{gr:for1}
		\IF[Lemma \ref{lemma-one}] {$dis^{\downarrow}($\rider.\source$,G_u)\leq$\rider.$w$}	\label{gr:for:if}
		\FOR {\driver $ \in G_u.D_u$ }														\label{gr:for:if:for:start}
		\IF[Lemma \ref{lemma-two}] {$dis^{\downarrow}$(\rider.\source,\driver.$l$)$\leq$\rider.$w$}	\label{gr:for:if:for:if}
		\STATE $\mathscr{U}_{i,j} \leftarrow$ RiderInsertion(\driver, \rider, \utilitygr)	\label{gr:for:if:for:if:compute}
		\IF { $\mathscr{U}_{i,j}\neq \infty$}									\label{gr:for:if:for:if:if}		    				
		\STATE $P.push($\driver,\rider,$\mathscr{U}_{i,j}$)												        \label{gr:for:push}	
		\ENDIF
		\ENDIF																				\label{gr:for:end}
		\ENDFOR
		\ENDIF
		\ENDFOR  
		\ENDFOR
		\label{gr:sorting}
		\WHILE {$P \neq \emptyset$} 											\label{gr:while:start}
		\STATE choose a pair $\langle d, r \rangle$ with minimal utility value from $P$	     	\label{gr:while:choose}
		\STATE insert rider $r$ to the current trip schedule of $d$  				\label{gr:while:if:insert}	
		\STATE $L.push(d, r)$																\label{gr:while:if:push}
		\STATE remove pairs $\langle *, r \rangle$ from $P$												\label{gr:while:if:remove}
		\FOR {($d$, \rider) $ \in$ ($d$, *) }												\label{gr:while:if:for:start}
		\IF {RiderInsertion($d$, \rider, \utilitygr) $\neq \infty$}												\label{gr:while:if:for:if:feasible}
		\STATE update the utility value of $\langle d, \rider \rangle$ in $P$ 		    	\label{gr:while:if:for:if:update}	
		\ELSE
		\STATE remove the pair $\langle d, \rider \rangle$ from $P$										\label{gr:while:if:for:else:remove}
		\ENDIF
		\ENDFOR 																			\label{gr:while:if:for:end}
		\ENDWHILE																			\label{gr:while:end}
		\RETURN $L$ 
	\end{algorithmic} 
\end{algorithm}

The {\algoGreedy} method is presented in Algo.~\ref{gr:main}.
We initialize a min-priority queue $P$ to save the valid driver-rider
pair candidates with their utility values, and an empty set $L$ to store
our final result (line~\ref{gr:initialize}).
In the {\filter} phase, we traverse each driver-rider pair to check
whether they can be matched
(lines~\ref{gr:for:start}-\ref{gr:for:end}).
If the driver can carry the rider, then we will calculate its utility
value (line~\ref{gr:for:if:for:if:compute}) and push it into a pool
$P$ (line~ \ref{gr:for:push}).
In the {\refine} phase, in each iteration we select the pair $\langle
d,r \rangle$ with the smallest utility value greedily
(line~\ref{gr:while:choose}), we perform an insertion
(line~\ref{gr:while:if:insert}), and append the pair into our result
list $L$ (line \ref{gr:while:if:push}).
Meanwhile, we remove all pairs related to $r$ from $P$
(line~\ref{gr:while:if:remove}).
For riders where $d$ was considered as a candidate, we check
whether it is still valid to include them as a pair since the
insertion of a new rider may influence the riders previously
considered
(lines~\ref{gr:while:if:for:start}-\ref{gr:while:if:for:end}).
If $d$ is still feasible, we update the utility gain value for those
riders (line~\ref{gr:while:if:for:if:update}).
Otherwise, the driver-rider pair is removed from $P$
(line~\ref{gr:while:if:for:else:remove}).

\textbf{Time complexity analysis.}
Firstly, the time complexity to traverse each
driver-rider pair to check whether they can be matched is
$O(\tau|R||D_u|n^2\shortestpathquery)$.
 Each time we select a driver-rider pair with the lowest utility
value greedily to insert ($O(|P|log|P|)$), remove those pairs related
to $r$ from $P$ ($O(|P|)$) and then update the other influenced
riders ($O(|P|n^2\shortestpathquery)$). Hence, the total time complexity is
$O(\tau|R||D_u|n^2\shortestpathquery$+$|P|^2log|P|$+$|P|^2$+$|P|^2n^2\shortestpathquery)$.

\vspace{-10px}
\subsection{Updating Process}\label{subsec:update}
After every {\updatetimeinter} time interval, we carry out the update
process, including updating each non-empty driver's current location,
trip schedule and index information.
If a driver is empty, then we assume that the state is static or
inactive and unnecessary to be updated.
\vspace{-2px}
\subsubsection{Update of the driver's current location} 
We first obtain the trip schedule segment where the driver is located
in a coarse-grained way, such as $(o_k,o_{k+1})$ according to the
actual completion distance $\dist(o_0,o_k)$ of $o_k$ and the vehicle
travel distance ($\speed \times \updatetimeinter$) within one
timeslot, which has an $O(n)$ time complexity.
Then we get the traversing edges set $E_k$ following the shortest
path between $o_{k}$ and $o_{k+1}$, such as
$e_{k1},e_{k2},\ldots,e_{kk}$.
According to the remaining travel distance
($\speed \times \updatetimeinter - \dist(o_0,o_k)$), we go through
each edge in $E_k$ and pinpoint the edge $e_d$ where the driver is
exactly, which has $O(|E_k|)$ time complexity.
Finally, we choose an endpoint of $e_d$ as the current location of
the driver approximately.
In the worst case, the driver location update takes $O(n+|E_k|)$.
\vspace{-1.5ex}
\subsubsection{Trip schedule updates with slack distance}
If the vehicle picks up new riders or drops off passengers on board,
then the corresponding source or destination point should be removed
from the original trip schedule.
Meanwhile, we update the slack distance of each point $o_k$, which
takes $O(n)$ time.
\vspace{-1.5ex}
\subsubsection{Update of the road network index information}
Each subgraph $G_i$ maintains a list of vehicles which belong to
$G_i$.
If the driver moves from $G_i$ to another subgraph $G_j$, then it
will be removed from $G_i$ and inserted into $G_j$.
Given a vertex, it only takes $O(1)$ to obtain its situated subgraph.
Thus updating the road network index takes $O(|D|)$ time.
\vspace{-10px}

	\section{Experiments}\label{sec:exp}
In this section, we perform an experimental evaluation on our
proposed approaches.
We first present the experimental settings in
Section~\ref{subsec:exp:setting}, and then validate the efficiency
and effectiveness in Section~\ref{subsec:exp:result:real} and
Section~\ref{subsec:exp:result:case}.
\vspace{-5px}
\subsection{Experimental Setting}\label{subsec:exp:setting}

\textbf{Datasets.}
We conduct all experimental evaluations on a real \textit{Shanghai}
dataset~\cite{huang2014large}, which includes road network data and
taxi trips.
The detailed statistical information is as follows: 

\begin{itemize}[leftmargin=*]
\item \textit{Road Network.}
There are $122{,}319$ vertices and $188{,}426$ edges in the
\textit{Shanghai} road network dataset~\cite{huang2014large}.
Each vertex contains the latitude and longitude information.
For each directly connected edge, the travel distance is given.
The shortest path distance between any two vertices can be obtained
by searching an undirected graph of the road network.

\item \textit{Taxi Trip.}
In the \textit{Shanghai} taxi trip dataset~\cite{huang2014large},
there are $432{,}327$ taxi requests on May 29, 2009.
Each taxi trip records the departure timestamp, pickup point, and
dropoff point.
Although we cannot estimate the exact request timestamp, we use the
departure timestamp to simulate the request submission timestamp.
The pickup and dropoff points are pre-mapped to the closest vertex on
the road network.
The initial location of a driver is randomly initialized as a vertex
on the road network.
For all initialized vehicles, we assume that there is no rider
carried at the beginning.
\end{itemize}

\noindent\textbf{Parameter settings.}
The key parameters of this work are listed in Table~\ref{table:parameter}, where the
default values are in bold.
For all experiments, a single parameter is varied while the rest are
fixed to the default value. 
Note that speed {\speed} is a constant value set to
48 km/hr by default for convenience, but can be any value, as
``speed'' in this context is an average over the entire trip.
Even though travel speed varies in reality, the travel time can be
easily obtained if the speed and travel distance are known.
Therefore, the scenario of varying speed is orthogonal to our
problem.

\noindent\textbf{Implementation.}
Experiments were conducted on an Intel(R) Xeon(R) E5-2690 CPU@2.60GHz
processor with 256GB RAM.
All algorithms were implemented in C++.
The experiments are repeated $10$ times under each experimental
setting and the average results are reported.
For the ``matching time'' and ``update time'' metrics, we also report
the results distribution.
The size of our road network index is 1.85GB, and the index used in \dsa is 2GB.

\noindent\textbf{Compared algorithms.}
The following methods were tested in our experiments:
\begin{itemize}[leftmargin=*]
\item {\dsa}~\cite{chen2018priceAndtime}.
The primary baseline in our experiments, which has been shown to
outperform another state-of-the-art approach
recently~\cite{huang2014large}.
\item {\df}.
The distance-first algorithm described in Section~\ref{subsec:df}. 
\item {\gr}.
The greedy algorithm described in Section~\ref{subsec:gr}.
\item {\dfp}.
The distance-first algorithm described in Section~\ref{subsec:df}
with our proposed pruning rules in Section~\ref{sec:riderinsertion}.
\item {\grp}.
The greedy algorithm described in Section~\ref{subsec:gr} with our
proposed pruning rules in Section~\ref{sec:riderinsertion}.
\end{itemize}

\noindent\textbf{Evaluation Metrics.}
We evaluate both \textit{efficiency} and \textit{effectiveness} under
different parameter configurations.
\begin{figure*}[t]
	\centering
	\subfloat[Served rate\label{wait_servedRate}]{
		\includegraphics[width=5.34cm]{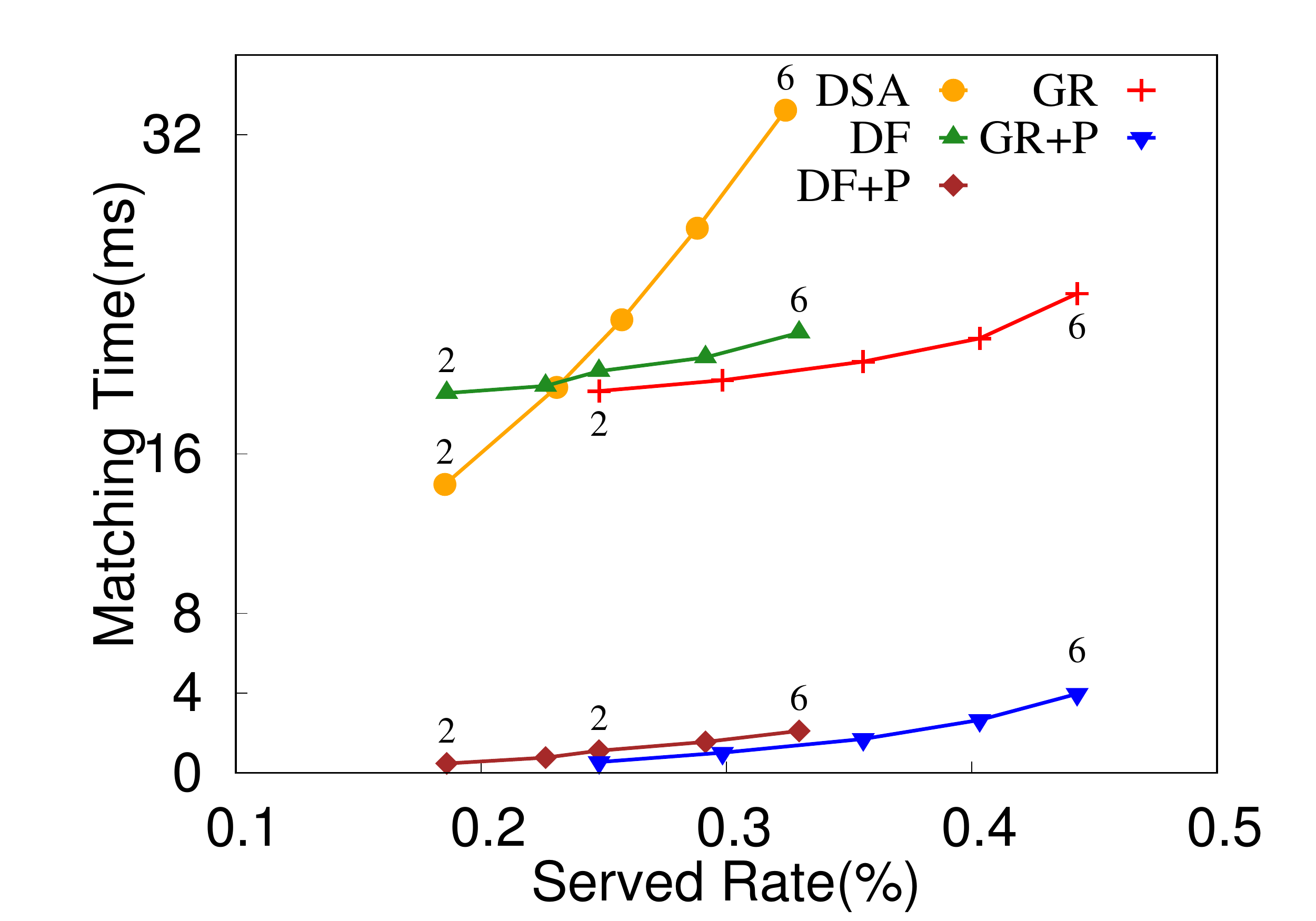}}
	\hspace{1px}
	\subfloat[Additional distance\label{wait_dis}]{
		\includegraphics[width=5.34cm]{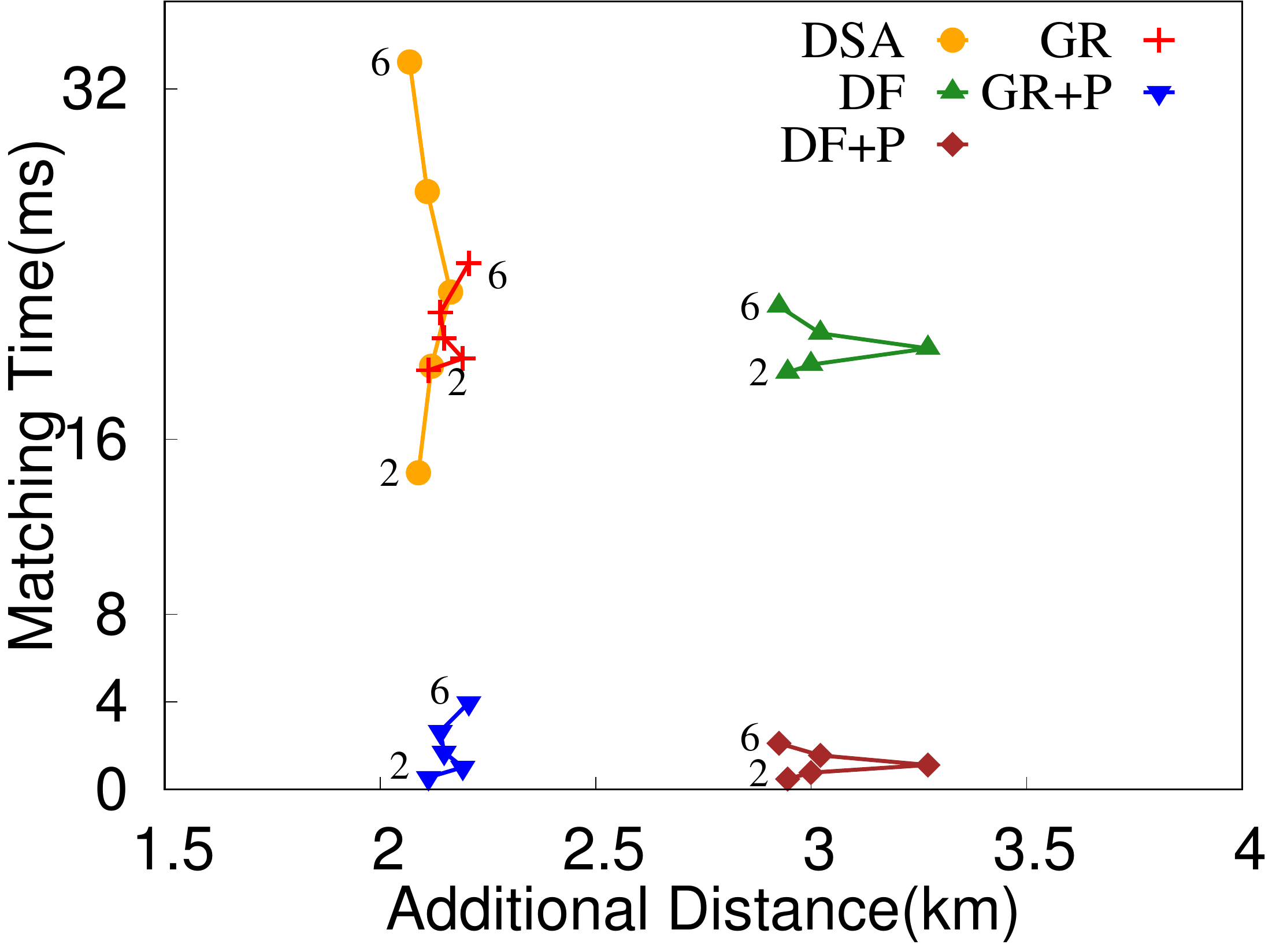}}
	\hspace{1px}
	\subfloat[Update time (The update time of \dsa is over $10$s, we omit it in the plot for better visualization)\label{wait_update}]{
		\includegraphics[width=5.34cm]{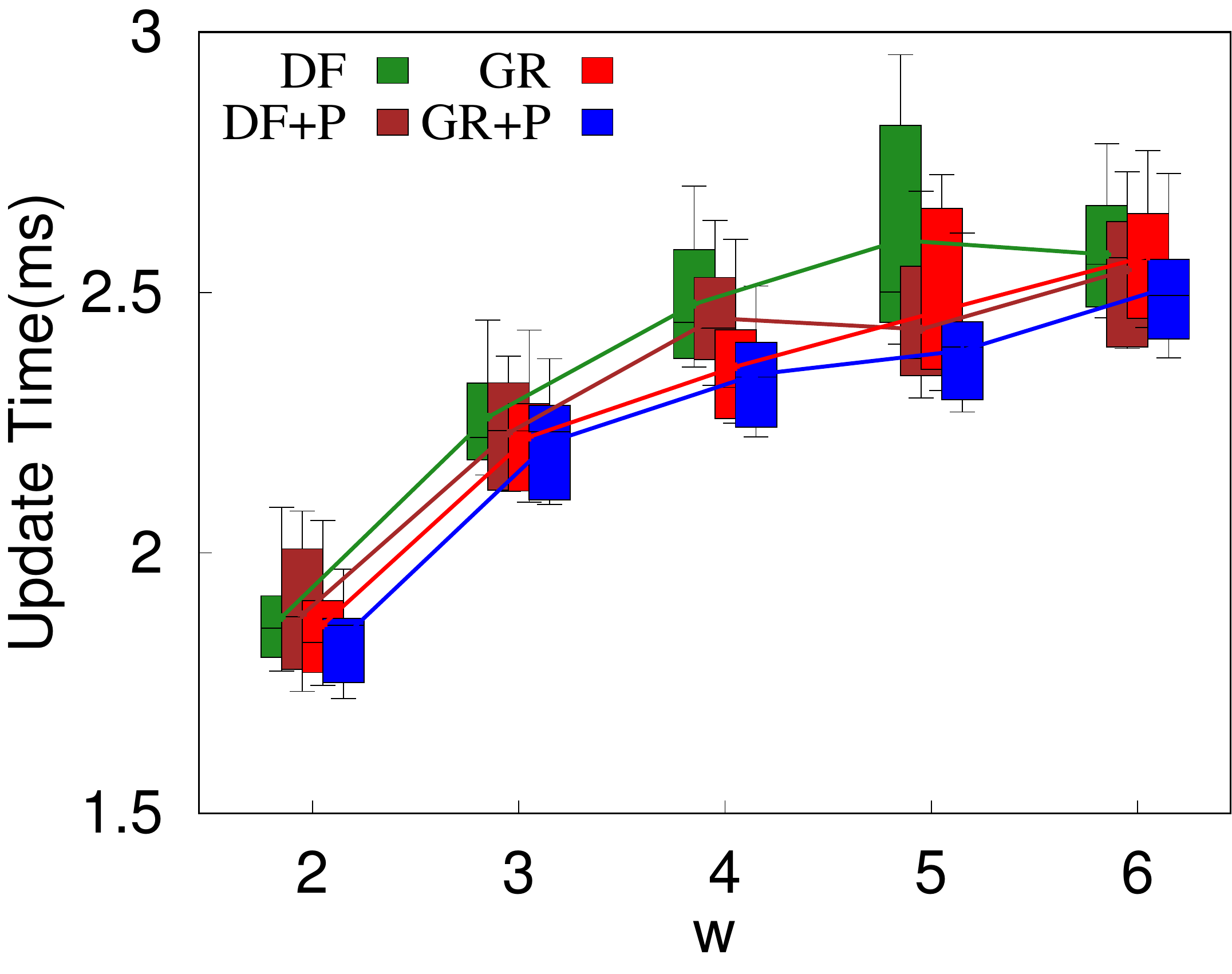}}
	\vspace{-1ex}
	\caption{Performance when varying the waiting time constraint {\wait} from $2$ to $6$ minutes}
	\label{fig-wait} 
	\vspace{-10pt} 
\end{figure*}
\begin{figure*}[t]
	\centering
	\subfloat[Served rate\label{seat_servedRate}]{
		\includegraphics[width=5.34cm]{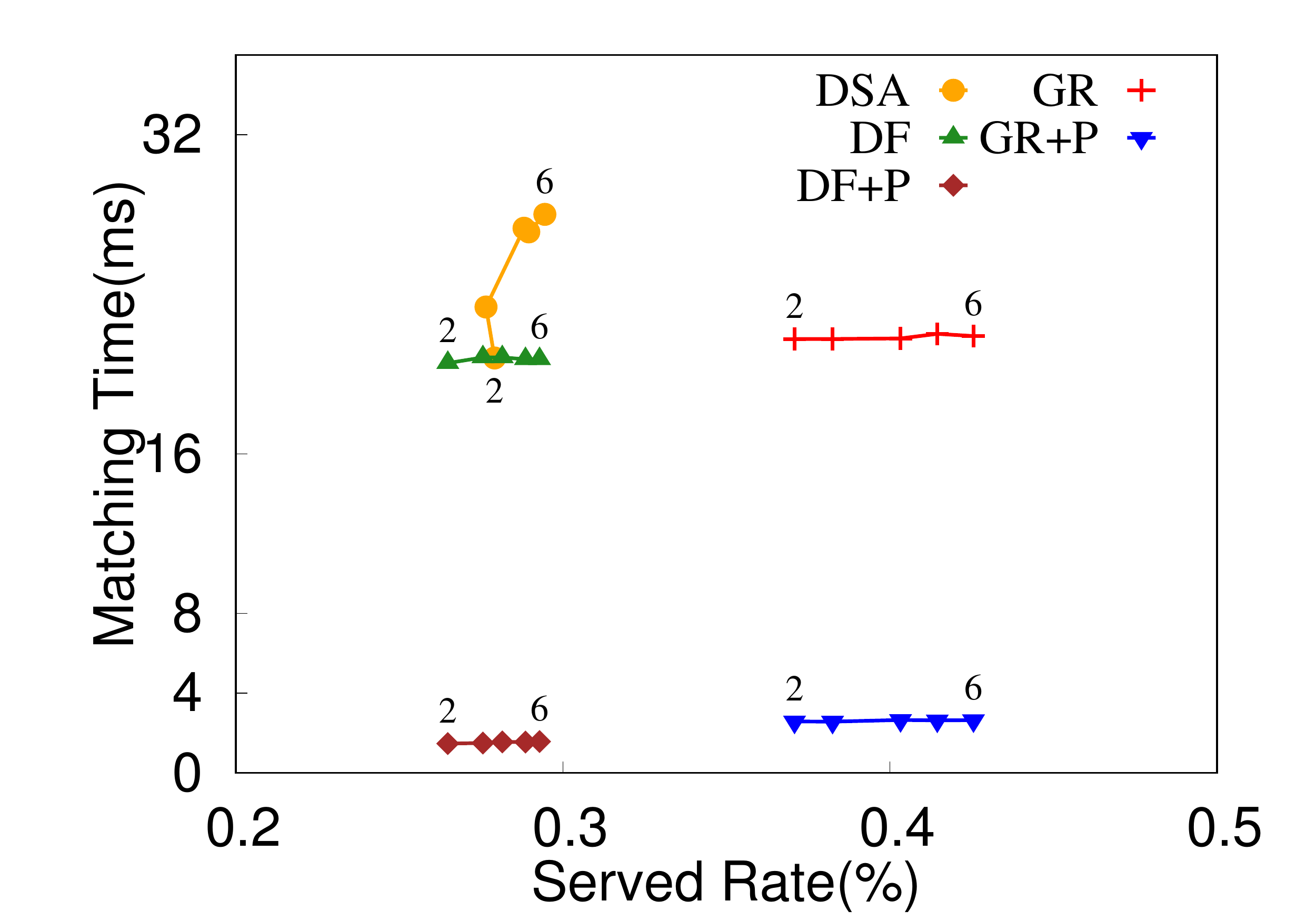}}
	\hspace{1px}
	\subfloat[Additional distance\label{seat_dis}]{
		\includegraphics[width=5.34cm]{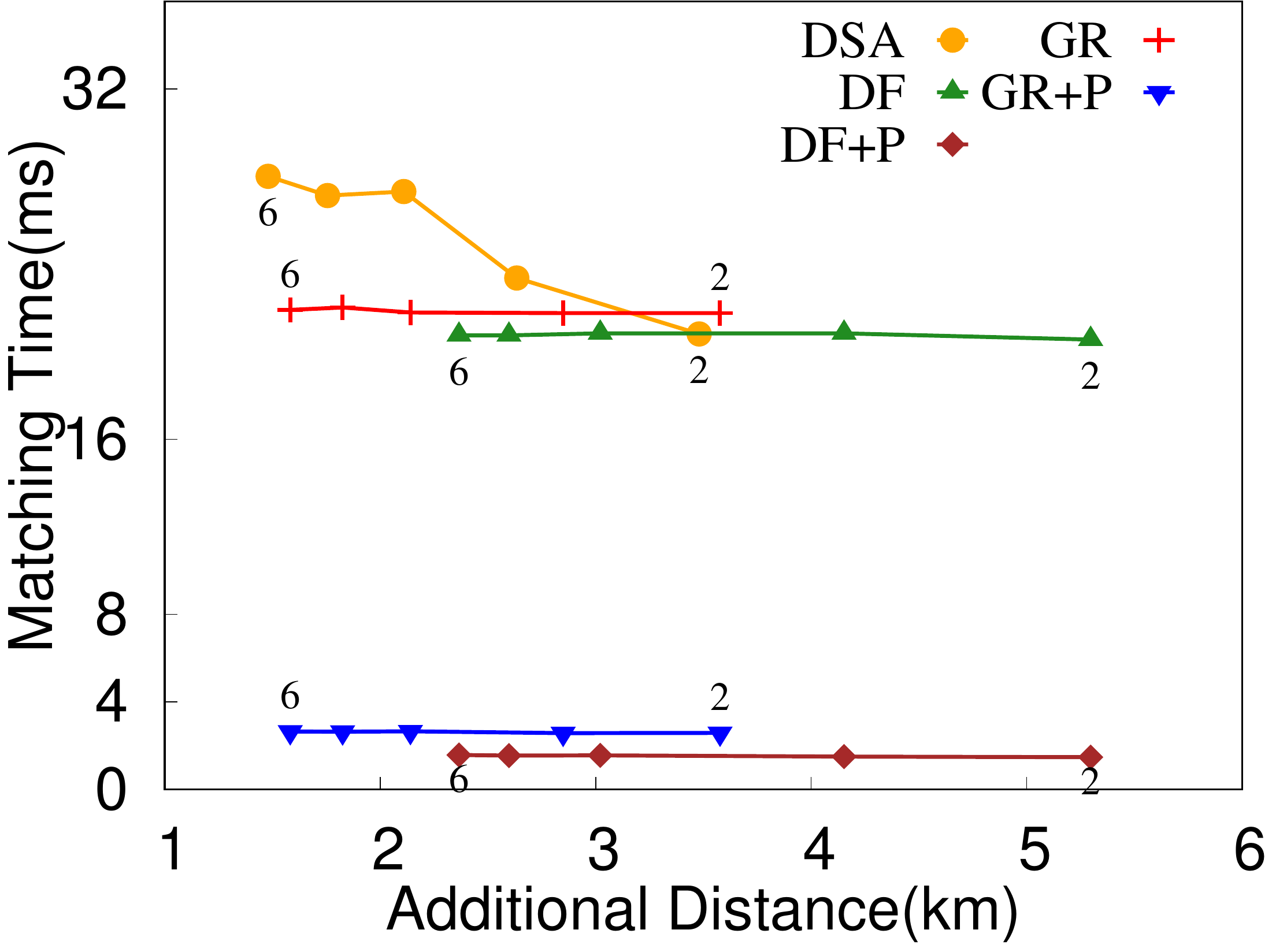}}
	\hspace{1px}
	\subfloat[Update time (The update time of \dsa is over $14$s, we omit it in the plot for better visualization)\label{seat_update}]{
		\includegraphics[width=5.34cm]{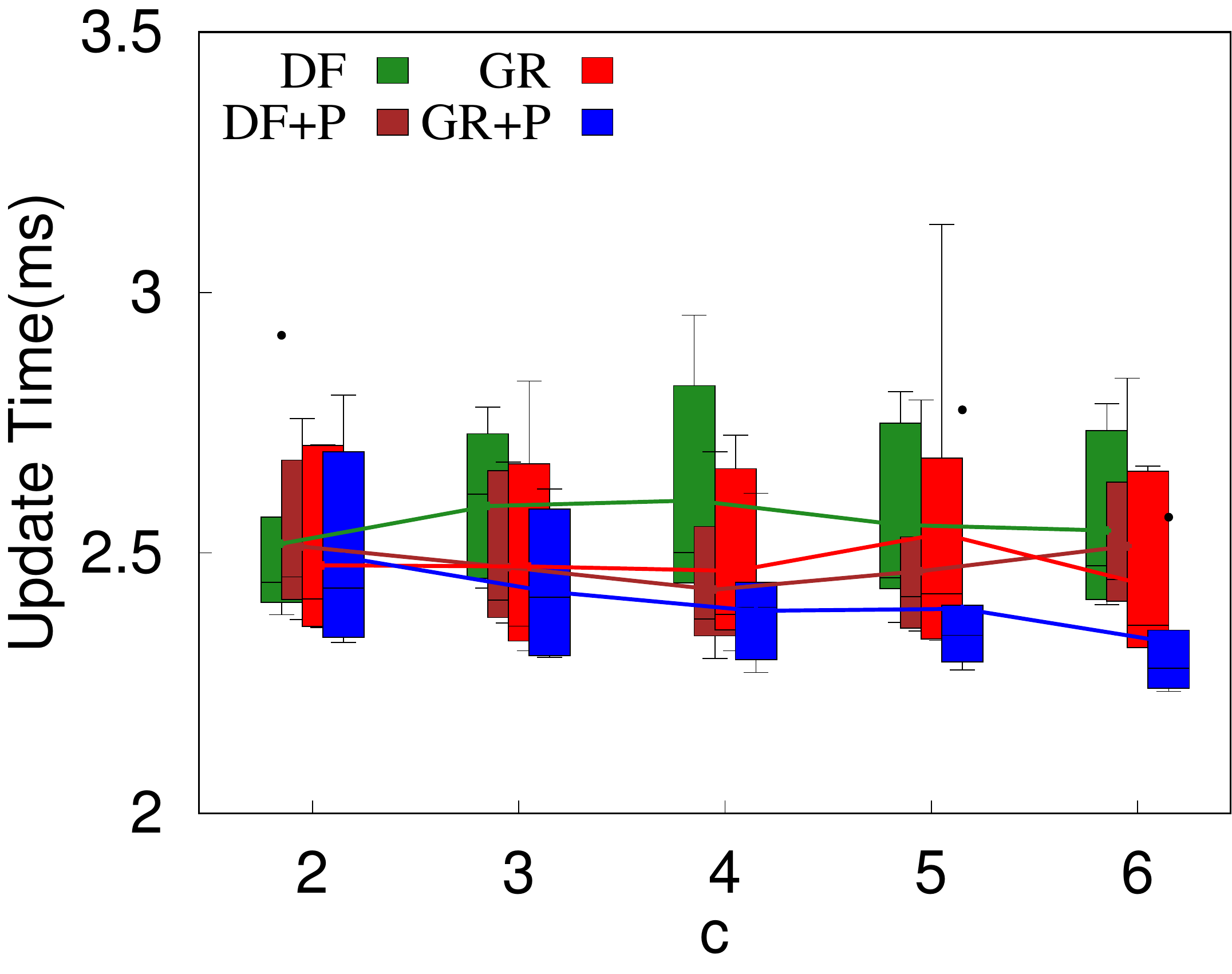}}
	\vspace{-1ex}
	\caption{Performance when varying the capacity {\ms} from $2$ to $6$}
	\label{fig-seat}  
	\vspace{-12px}
\end{figure*}
\begin{figure*}[t]
	\centering
	\subfloat[Served rate\label{driver_servedRate}]{
		\includegraphics[width=5.34cm]{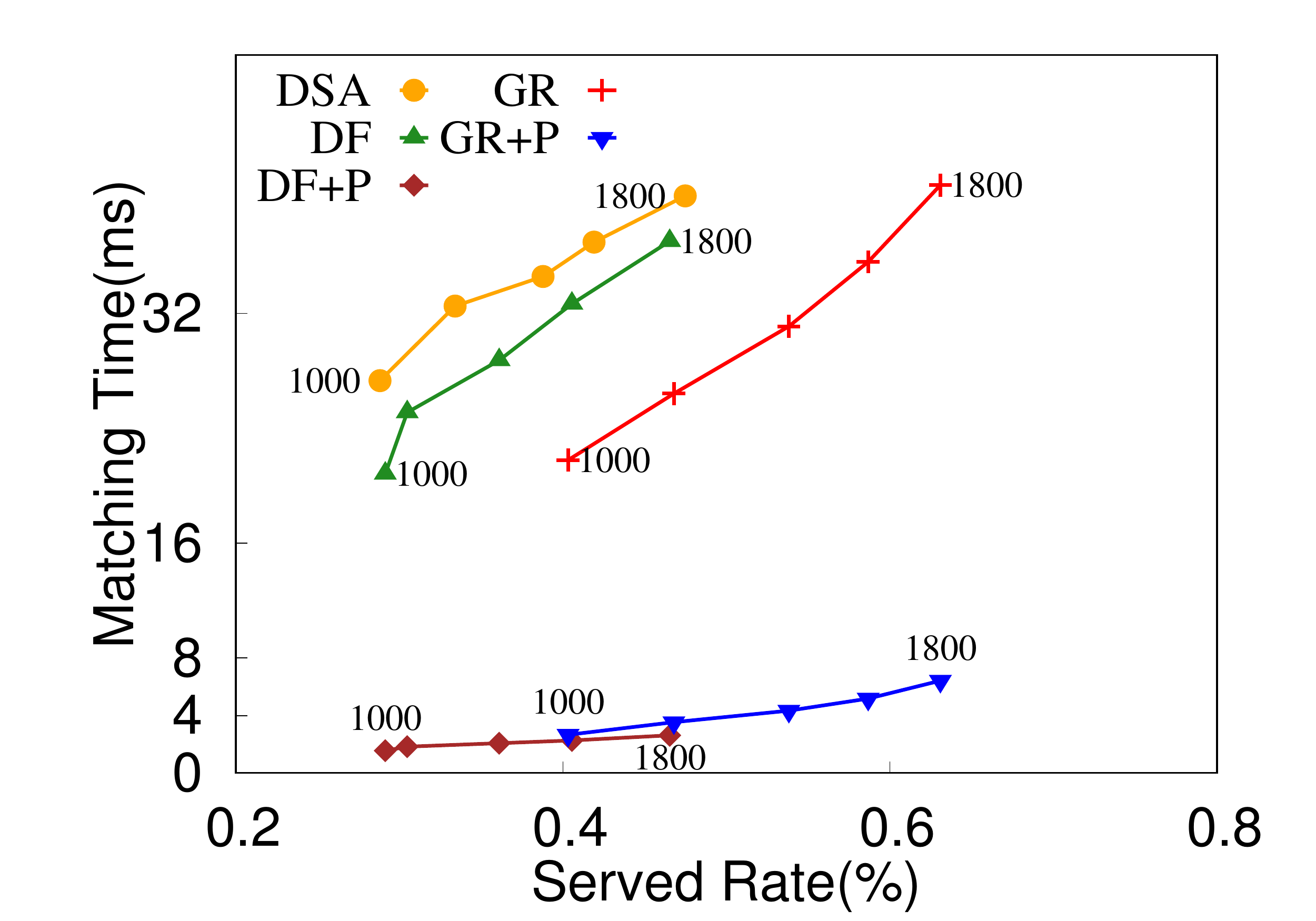}}
	\hspace{1px}
	\subfloat[Additional distance\label{driver_dis}]{
		\includegraphics[width=5.34cm]{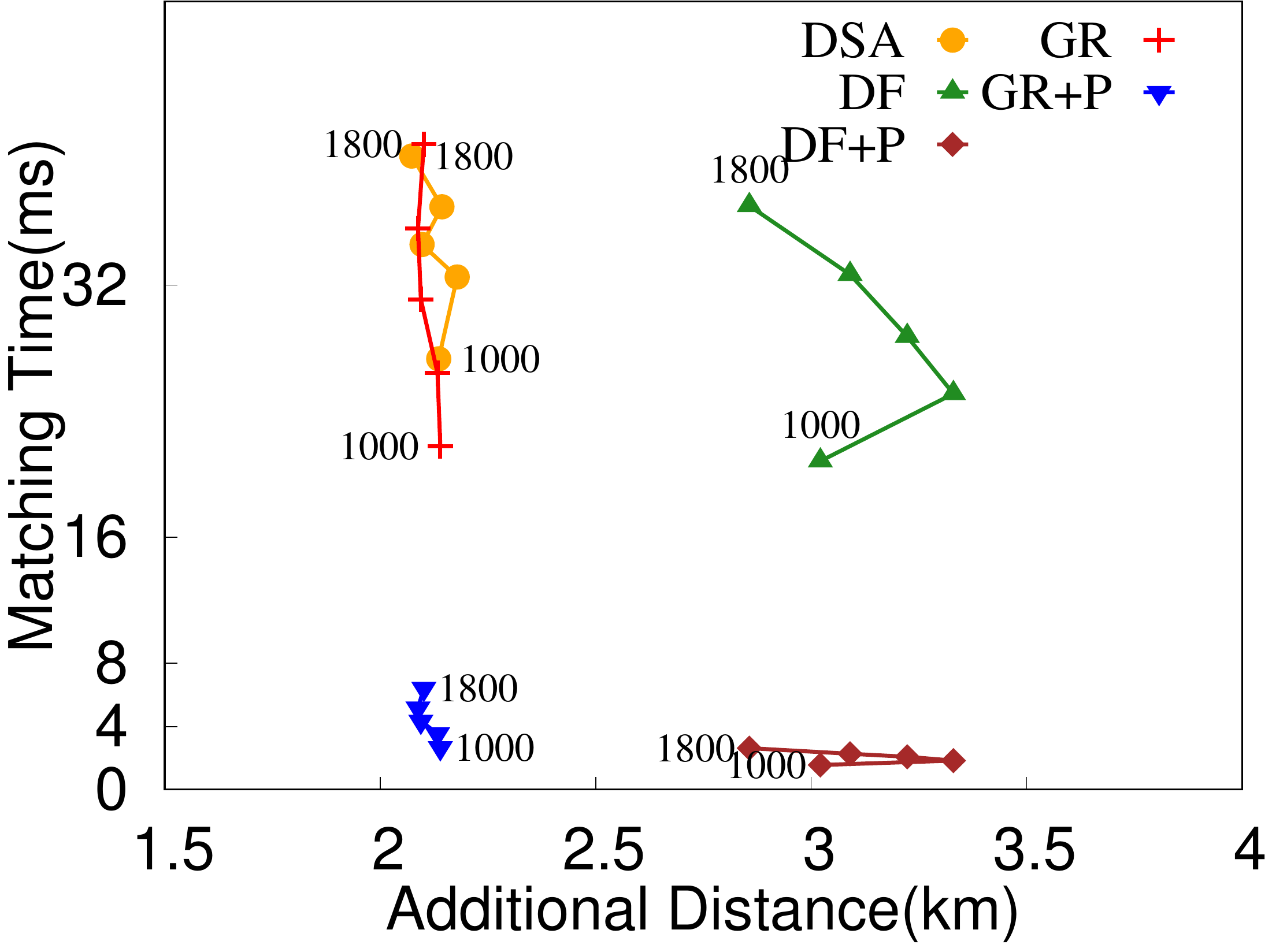}}
	\hspace{1px}
	\subfloat[Update time (The update time of \dsa is over $16$s, we omit it in the plot for better visualization)\label{driver_update}]{
		\includegraphics[width=5.34cm]{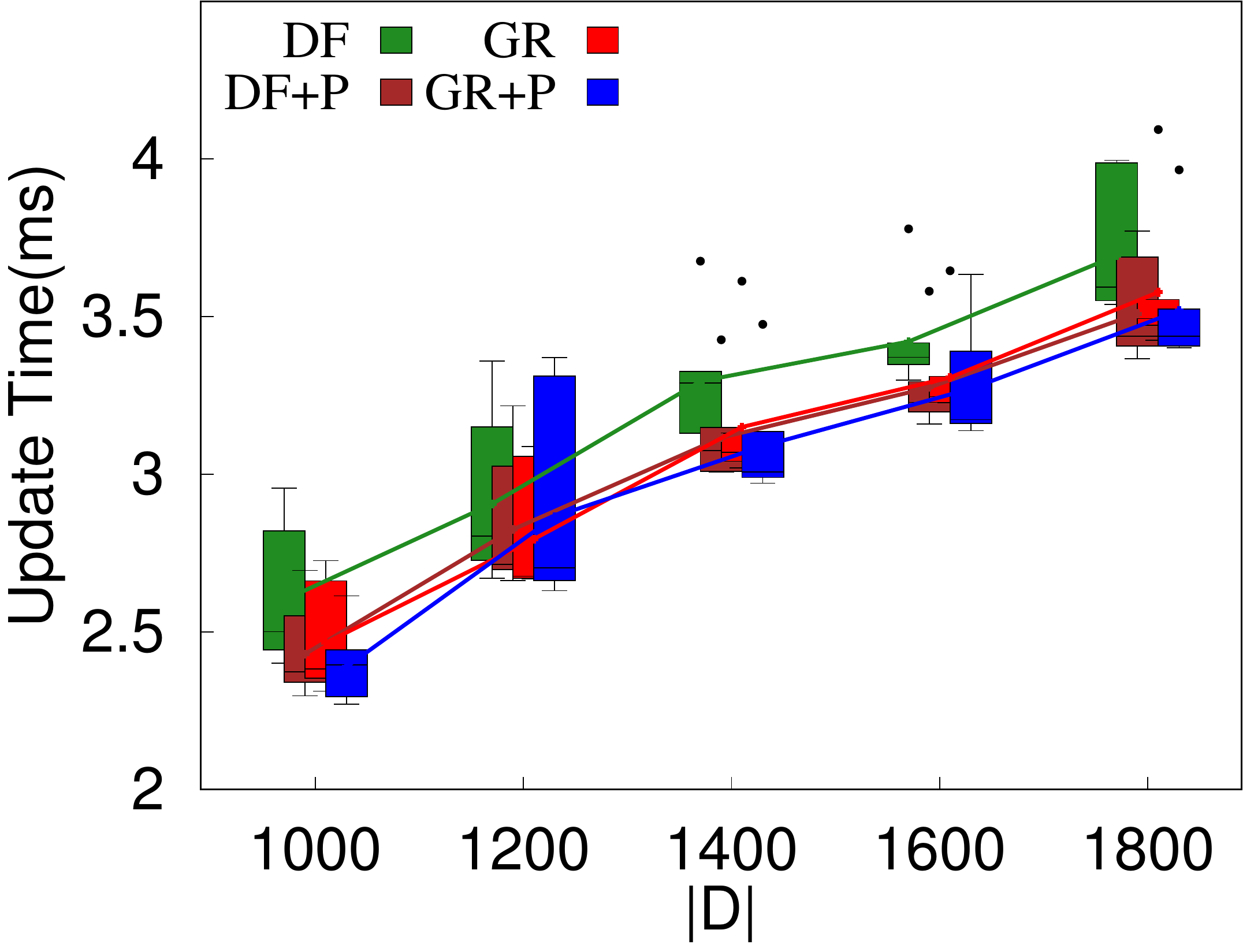}}
	\vspace{-1ex}
	\caption{Performance when varying the number of drivers $|D|$ from $1,000$ to $1,800$}
	\label{fig-driver}
	\vspace{-14px}
\end{figure*}
\begin{figure*}[t]
	\centering
	\subfloat[Served rate\label{speed_servedRate}]{
		\includegraphics[width=5.34cm]{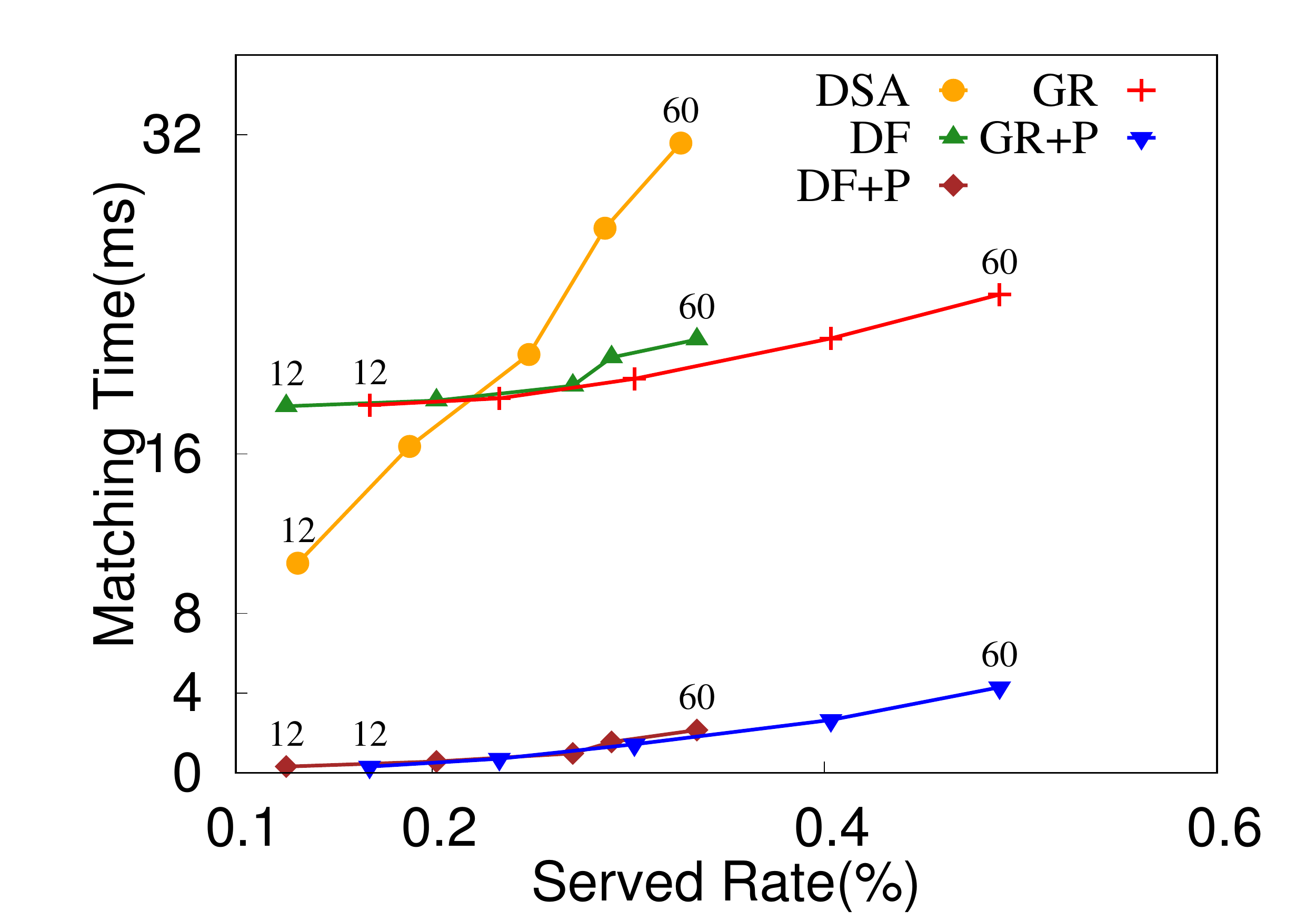}}
	\hspace{1px}
	\subfloat[Additional distance\label{speed_dis}]{
		\includegraphics[width=5.34cm]{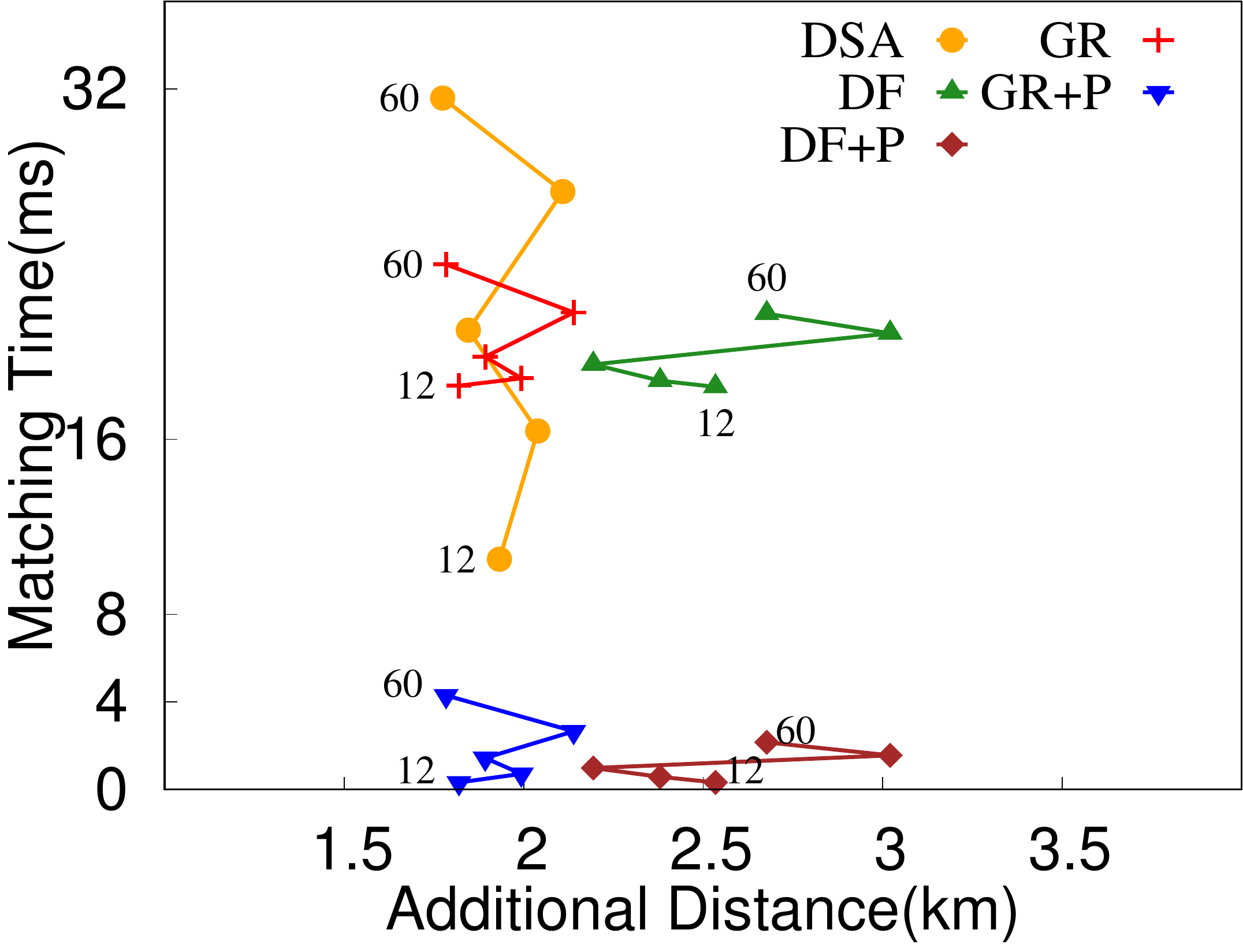}}
	\hspace{1px}
	\subfloat[Update time (The update time of \dsa is over $10$s, we omit it in the plot for better visualization)\label{speed_update}]{
		\includegraphics[width=5.34cm]{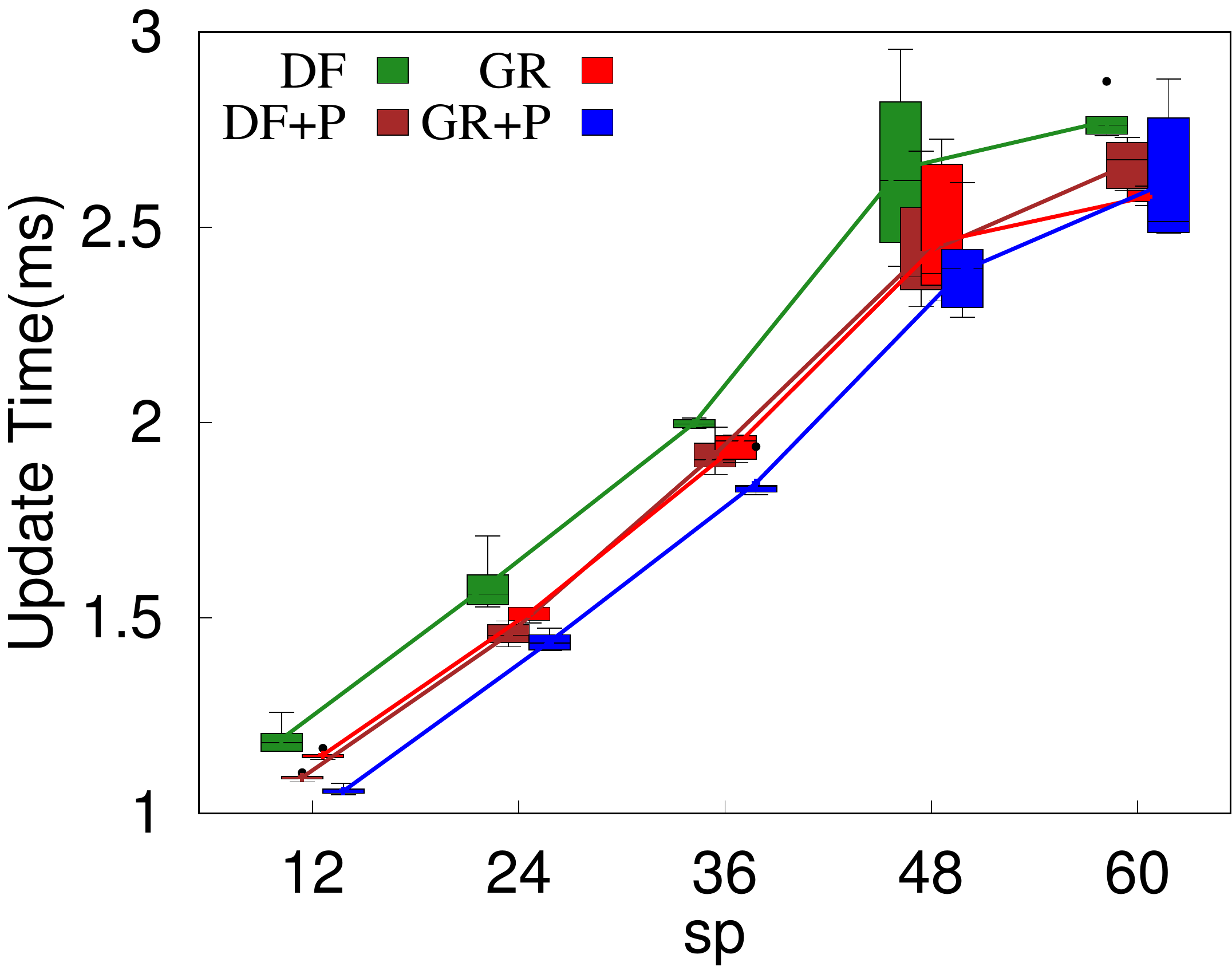}}
	\vspace{-1ex}
	\caption{Performance when varying the travel speed {\speed} from $12$ to $60$ km/hr}
	\label{fig-speed}  
	\vspace{-15px}
\end{figure*}

For the \textit{efficiency}, we compare the \textit{matching time}
and \textit{update time}, where \textit{matching time} represents the
average running time needed to match a driver to a single rider
request, and the \textit{update time} is the average running time
required to amend a driver's current trip schedule and update the
underlying index.
Note that we omit the measurement of the driver's current location
update because the driver's location can be acquired from the GPS
device in near realtime.

For the \textit{effectiveness}, we compare the \textit{served rate} and the
\textit{additional distance}.
The \textit{served rate} denotes the ratio of served rider number
divided by the total rider number, and the \textit{additional
distance} is computed as the total additional distance divided by the
number of served riders.

\vspace{-2.2ex}
\subsection{Experimental Results at Peak Travel Period}\label{subsec:exp:result:real}

In this section, we sample $6{,}000$ orders in peak hours, because
{\dsa} cannot process all the rider requests due to combinatorial
time complexity (explained later).
We show a series of \textit{efficiency} and \textit{effectiveness}
trade-off graphs in order to better compare the performance
differences between all of the algorithms.
Beginning and end sweep values are shown for each line to make it
easier to observe the performance trends for each algorithm.
Note that \df and \dfp (or \gr and \grp) maintain exactly the same
values on ``served rate'' and ``additional distance'' metrics.
As variants of certain algorithms have the similar efficiency
profile, but a different one for effectiveness (thus the use of
trade-off graphs).
For the ``update time'' metric, we have added a very small point skew
to the four lines to improve visualization.
\begin{figure*}[t]
	\centering
	\subfloat[Served rate\label{detour_servedRate}]{
		\includegraphics[width=4.53cm]{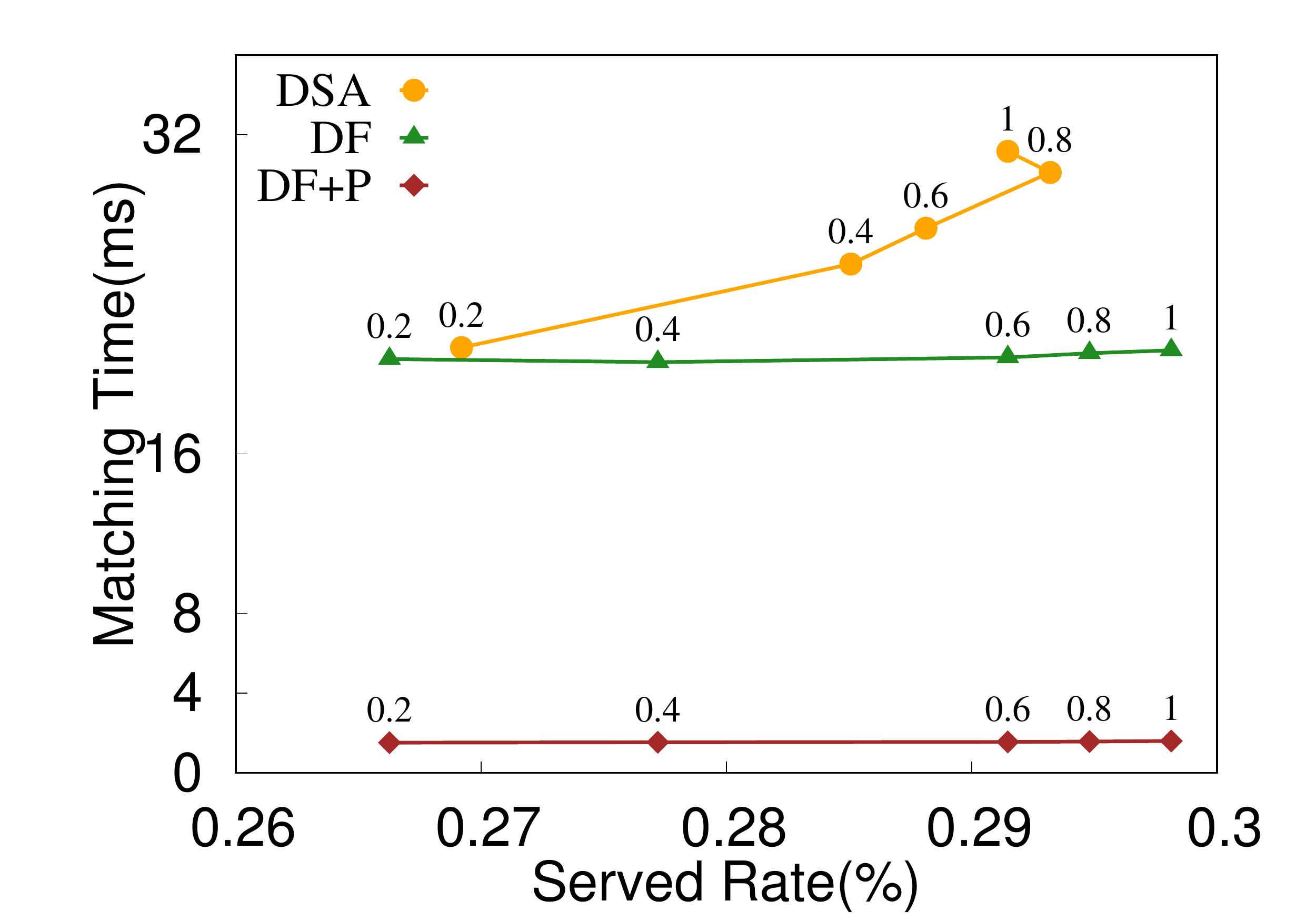}
		\includegraphics[width=4.53cm]{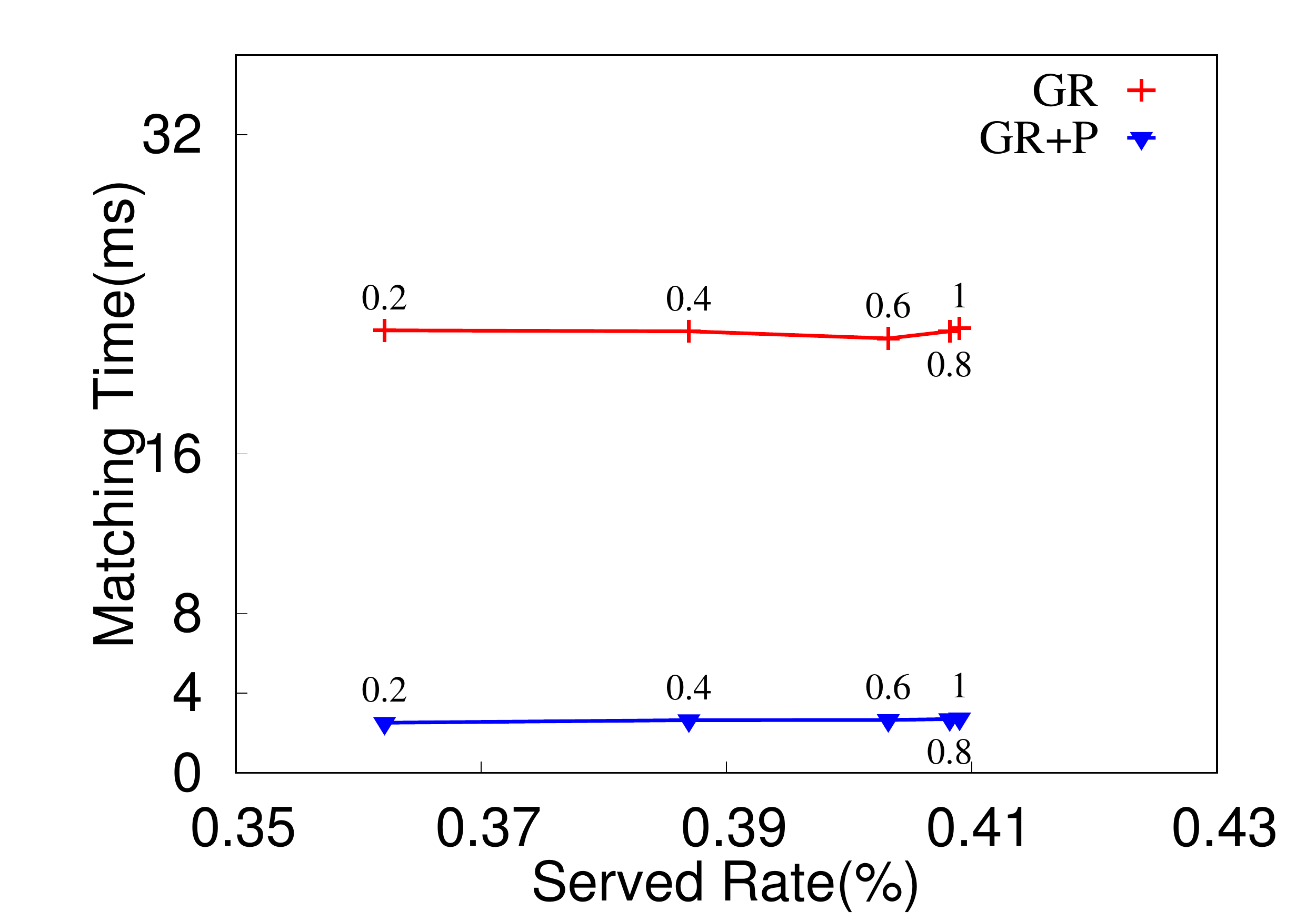}}
	\subfloat[Additional distance\label{detour_dis}]{
		\includegraphics[width=4.53cm]{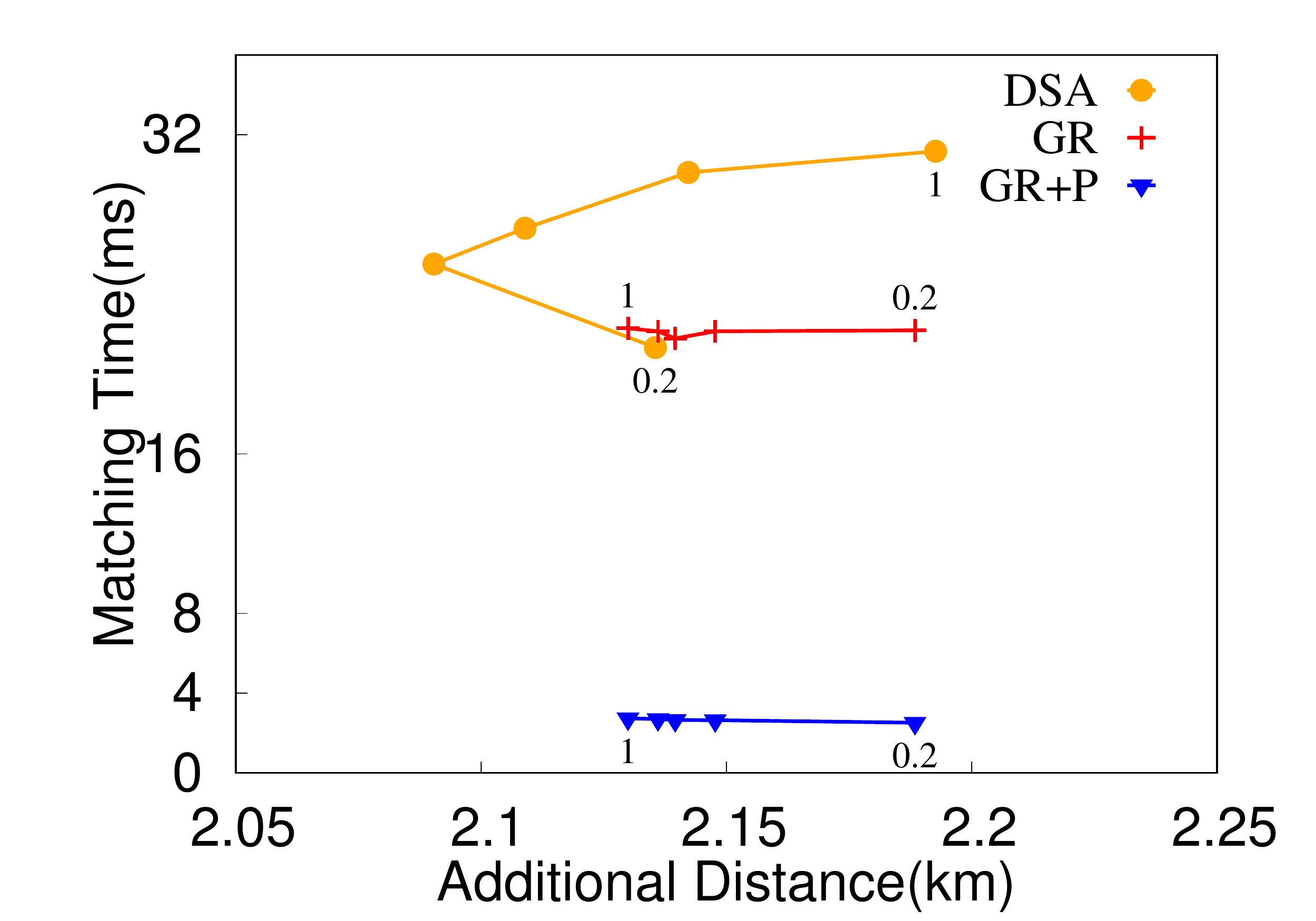}
		\includegraphics[width=4.53cm]{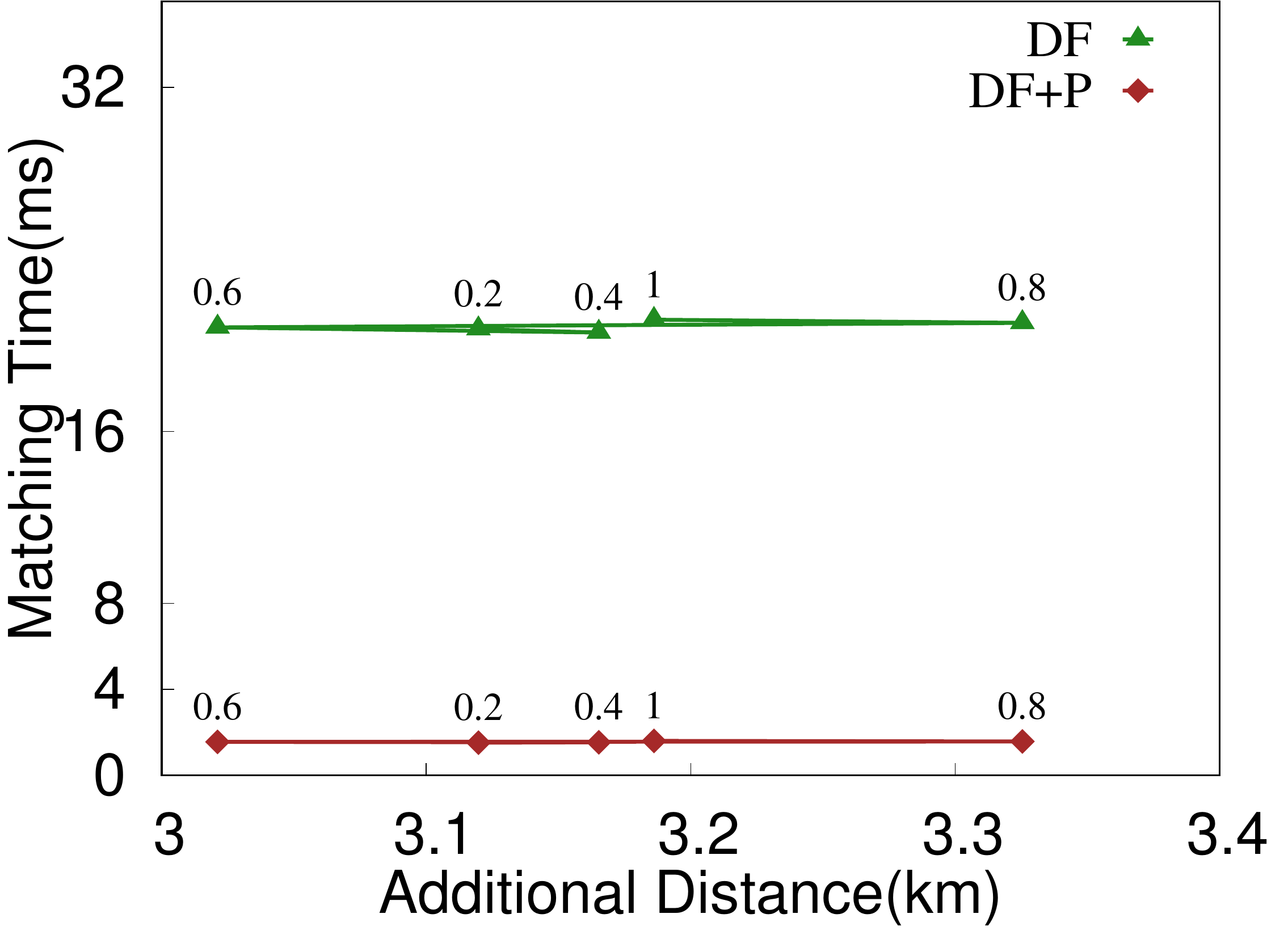}}
	\vspace{-1ex}
	\caption{Performance when varying the detour time constraint {\detour} from $0.2$ to $1$ (The update time is shown in Fig. \ref{detour_update})}
	\label{fig-detour}  
	\vspace{-12px}
\end{figure*}

\begin{figure*}[t]
	\centering
	\subfloat[Served rate\label{window_servedRate_line}]{
		\includegraphics[width=4.53cm]{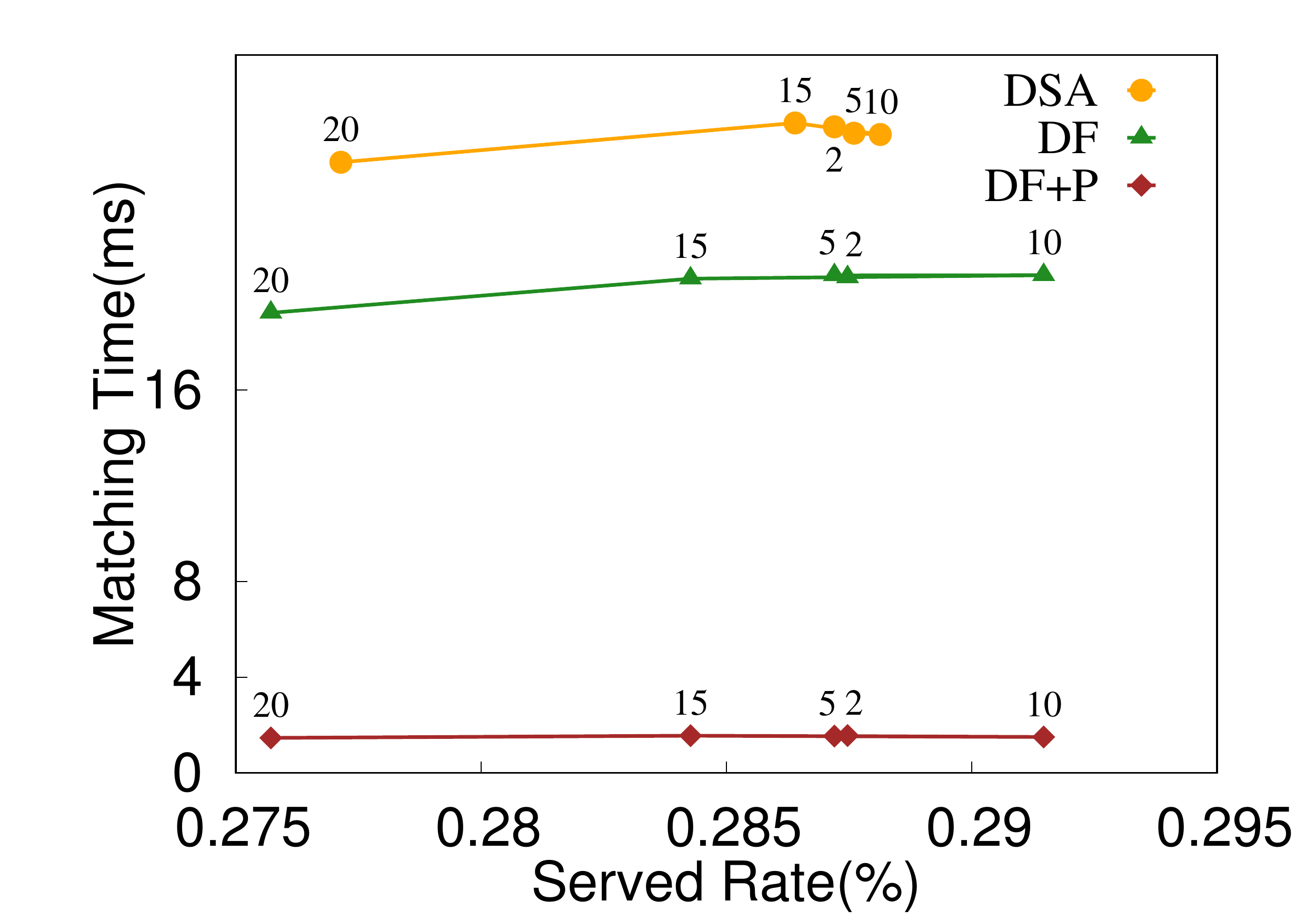}
		\includegraphics[width=4.53cm]{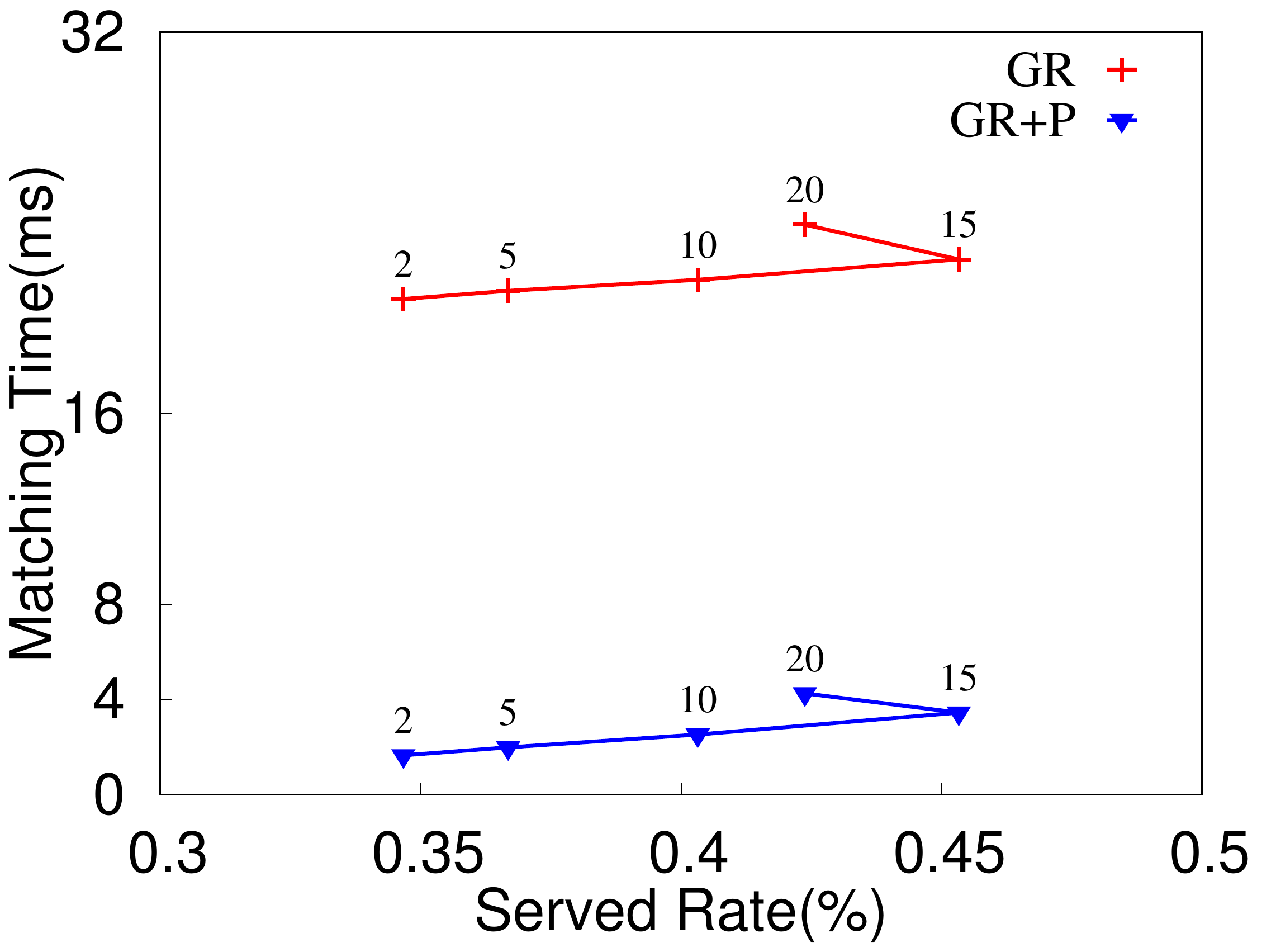}}
	\subfloat[Additional distance\label{window_dis_line}]{
		\includegraphics[width=4.53cm]{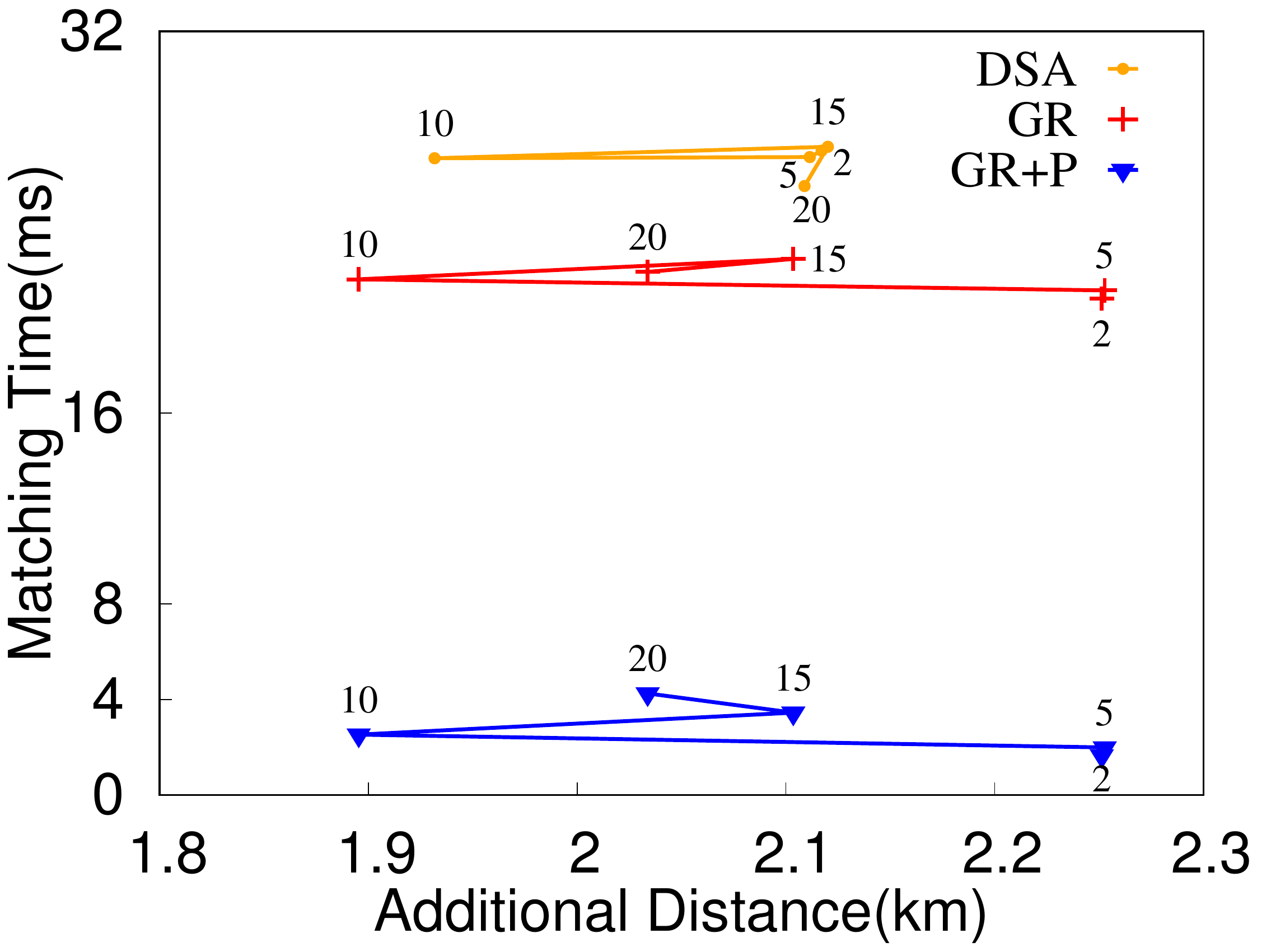}
		\includegraphics[width=4.53cm]{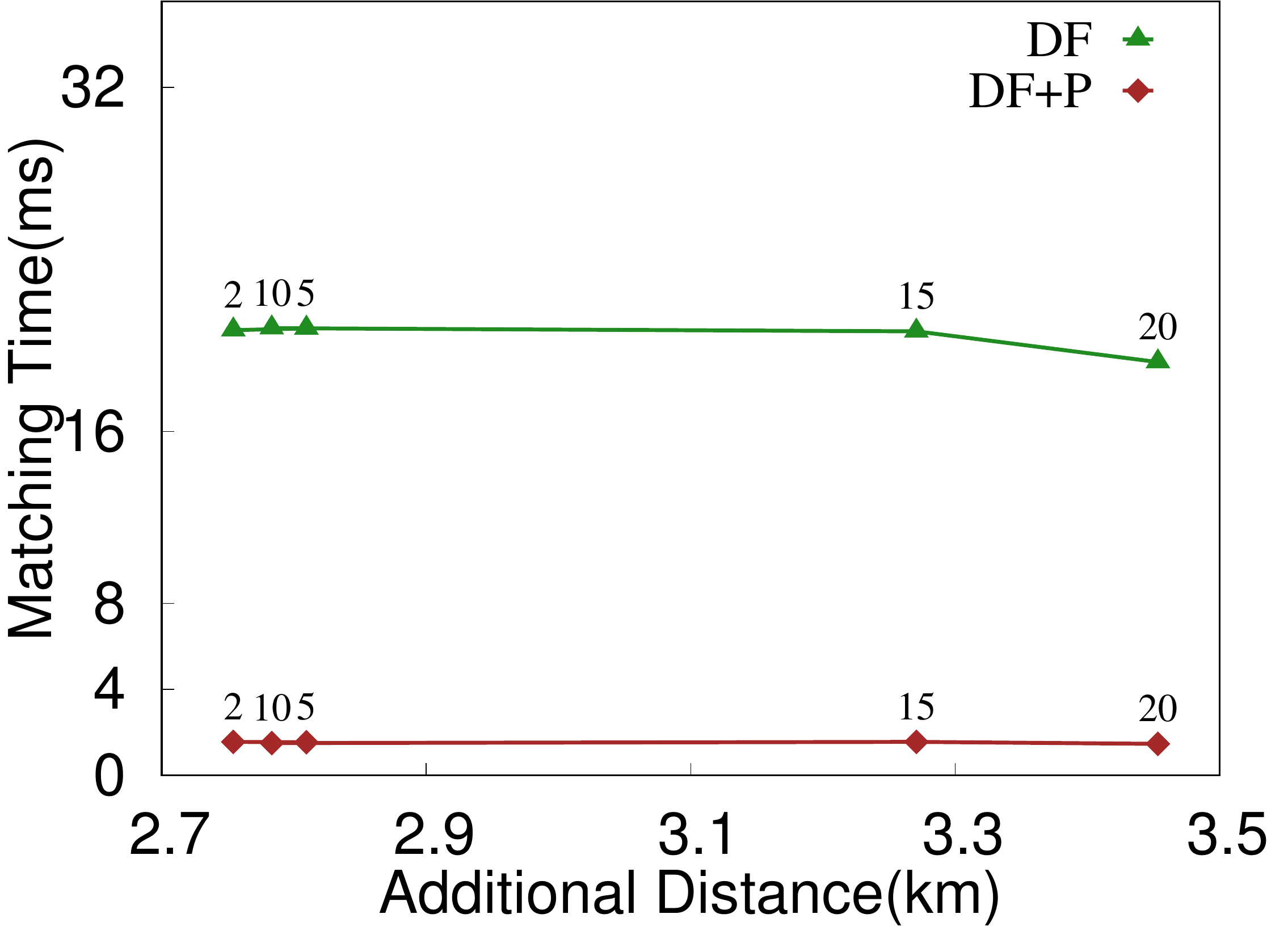}}
	\vspace{-1ex}
	\caption{Performance when varying the update time window {\updatetimeinter} from $2$ to $20$ seconds (The update time is shown in Fig. \ref{window_update})}
	\label{fig-window}  
	\vspace{-10px}
\end{figure*}
\noindent\textbf{Effect of the Waiting Time Constraint $w$. }
Fig.~\ref{fig-wait} shows the experimental results when varying
{\wait} from $2$ to $6$ minutes, where the first two sub-figures are
trade-off graphs.
With a larger waiting time, the served rates of all the algorithms
increase because more driver candidates can satisfy the waiting time
constraint.
We observe that {\grp} consistently outperforms {\dsa} by
approximately $10$\% on served rate in Fig.~\ref{wait_servedRate},
while the average additional distance is increased slightly in
Fig.~\ref{wait_dis}, which validates our proposed goal to serve more
riders with less additional distance.
{\dsa} and {\dfp} maintain a similar served rate because both process
rider requests in order of the submission time, regardless of the
served rate. Moreover, it is reasonable that the additional distance of \dfp may not keep a consistent decreasing trend in Fig. \ref{wait_dis}.
Increasing \wait implies that more riders can share a trip and are also required to be served to their destinations. This could either increase or decrease the additional distance for the riders depending on multiple factors, such as request time, pickup, and drop off locations.
In term of matching time, we observe that {\dsa} grows rapidly among
all the algorithms in Fig.~\ref{wait_servedRate} because {\dsa}
enumerates all the possible positions to insert {\source} and {\des}
for a new rider, which requires $O(n!)$ time, where $n$ is the number
of points in the trip schedule.
With a larger {\wait}, more riders may share the vehicle such that
$n$ is larger.
In addition, the matching time of {\grp} is more than $10$ times
faster than {\gr}, which validates the efficiency of our proposed
pruning rules.
For the update time in Fig.~\ref{wait_update}, our proposed methods
are three orders-of-magnitude faster than {\dsa}.
Because {\dsa} must update two kinds of index structures: (1) each
driver maintains a kinetic tree~\cite{huang2014large} to record a
possible trip schedule for all unfinished requests assigned to each
vehicle, which takes $O(|D| \times n!)$~\cite{chen2018priceAndtime};
(2) a grid index to store road network information and vehicles that
are currently located or scheduled to enter each grid, which also has
an $O(|D|\times n!)$ update time
complexity~\cite{chen2018priceAndtime}.
In contrast, our methods only take $O(|D|\times n)$ to update the
trip schedule for each vehicle, and $O(D)$ to update the road network
index.
Thus, we find that {\dsa} cannot finish updating with the allocated
window of time ($10$ seconds), while our methods can.
Although {\grp} performs worse than {\dfp}, it can
have a higher served rate with less additional distance.

\begin{table}[t]
	\renewcommand{\arraystretch}{1.3}
	\caption{Parameter settings}
	\vspace{-3ex}
	\centering
		\begin{tabular}{lp{3.2cm}}
			\hline
			{\bfseries Parameter} & {\bf Setting}\\ 
			\hline
			Waiting time constraint $w$ (min) & 2, 3, 4, \textbf{5}, 6\\ \hline
			Detour time constraint \detour &  0.2, 0.4, \textbf{0.6}, 0.8, 1 \\ \hline
			Capacity of vehicle \ms & 2, 3, \textbf{4}, 5, 6\\ \hline
			Number of vehicles $|D|$ & \textbf{1000},1200,1400,1600,1800\\ \hline
			Size of the update timeslot \updatetimeinter(s) & 2, 5, \textbf{10}, 15, 20\\ \hline
			Number of partitions $\tau$ & \textbf{500}\\ \hline
			Travel speed \speed (km/hr) & 12, 24, 36, \textbf{48}, 60 \\ \hline
		\end{tabular}
	\label{table:parameter}
\end{table}

\noindent\textbf{Effect of the Capacity \ms.}
Fig.~\ref{fig-seat} illustrates the results when varying the seat
capacity from $2$ to $6$.
With a larger {\ms}, all the algorithms have an increased served
rate and lower additional distance because the vehicle can hold more
riders.
In term of matching time, {\df} and {\gr} are relatively stable.
When the capacity is small, most vehicles can be filtered by the
capacity constraint, thus the matching time is lower.
Our proposed methods have a similar update mechanism, and thus the
update time is dramatically lower than {\dsa}.


\noindent\textbf{Effect of the Number of Drivers $|D|$. }
Fig.~\ref{fig-driver} plots the results when varying the number of
drivers from $1{,}000$ to $1{,}800$.
{\grp} still maintains the highest served rate relative to the other
algorithms in Fig.~\ref{driver_servedRate}, while {\grp} can have
a similar additional distance to {\dsa} as shown in 
Fig.~\ref{driver_dis}. 
In term of matching time, {\dsa}, {\df} and {\gr} maintain a steady
growth trend as more drivers are available to serve a rider,
which results in an increase in the number of driver verifications.
Nevertheless, our proposed pruning rules can reduce the matching time
of {\df} and {\gr} by an order-of-magnitude in
Fig.~\ref{driver_servedRate}.

\noindent\textbf{Effect of the Travel Speed \speed. }
Fig.~\ref{fig-speed} describes the results of varying the travel
speed from $12$ to $60$ km/hr, which was also explored by 
Wang et al.~\cite{wang2016speed} on speed variations during peak hour.
We perform the experiments with different travel speeds in order to
simulate diverse traffic conditions in peak hours that have a direct
influence on the waiting time constraint.
Obviously, the served rate of all the algorithms increases when the
travel speed increases, because more rider requests can satisfy
the waiting time constraint.
When the travel speed is $12$ km/h, \gr (or \grp) can achieve a
$28.2\%$ improvement on served rate over \dsa from Fig.
\ref{speed_servedRate}, which corresponds to a slightly lower
additional distance from Fig.~\ref{speed_dis}.
This demonstrates that our methods can outperform the baseline
consistently even under serious traffic congestion scenarios.
From Fig.~\ref{speed_servedRate}, the matching time of \dsa keeps
increasing with larger travel speed because more rider requests need
to be considered for the assignment.
In contrast, our methods are more robust.

\noindent\textbf{Effect of the Detour Time Constraint \detour.}
Fig.~\ref{fig-detour} depicts the experimental results as 
{\detour} is varied from $0.2$ to $1$.
The served rate for all of the algorithms increases slightly with the
change of {\detour} in Fig.~\ref{detour_servedRate}, 
as the existing riders arranged in the trip schedule
allow a greater detour distance to support more riders.
Specifically, we can observe that {\grp} consistently has a higher 
served rate than {\dsa} or {\dfp}, while {\grp} degrades with the
additional distance, and maintains a lower average additional
distance with {\dsa} when $\detour=1$, as shown in
Fig.~\ref{detour_dis}.
In term of matching time, {\dsa} increases linearly in
Fig.~\ref{detour_servedRate} because more riders can share the trip,
which leads to more verified vehicles and longer matching time.

\noindent\textbf{Effect of the Update Time Window \updatetimeinter. }
Fig.~\ref{fig-window} shows the results of varying the size of
update time window from $2$ to $20$ seconds.
The served rates of {\dsa}, {\df}, and {\dfp} are insensitive to
the change of {\updatetimeinter}, because they process each rider
request in a first-come-first-serve manner.
{\grp} has an increased serving rate with a larger {\updatetimeinter},
because it can consider more rider requests within a time window and
generate a better result to both maximize the served rate
and minimize the additional distance. 
However, the served rates of all the algorithms drop when
$\updatetimeinter > 15$, because we move the vehicle every
{\updatetimeinter}, and the seat capacity becomes saturated in the
current timeslot.
This also results in decreased matching times for most algorithms
when $\updatetimeinter > 15$, while {\grp} increases slightly because it must still find the best possible matching.

\noindent \textbf{Summary of the experimental results.}
\dfp and \grp are improved versions of \df and \gr whose pruning power consistently reduces the number of candidates considered in each time window. Pruning is more essential in high demand scenarios. In our experiments, \grp is preferable when the served rate exceeds $30\%$, and \dfp is preferable when rider assignments must be executed in milliseconds.  
\begin{figure}[t]
	\centering
	\subfloat[Update time (The update time of \dsa is over $14$s, we omit it in the plot for better visualization)\label{detour_update}]{
		\includegraphics[width=4.23cm]{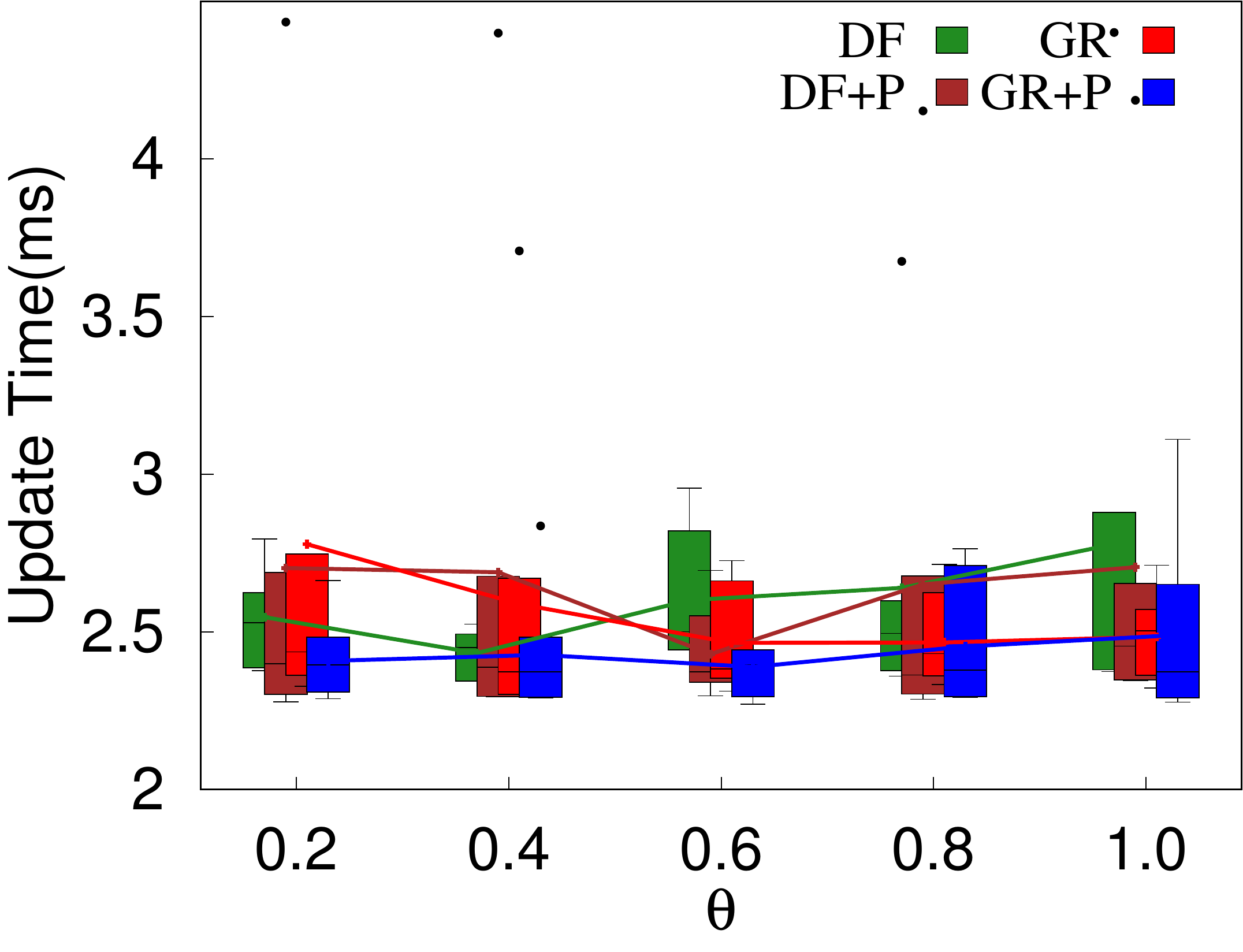}}
	\hspace{1px}
	\subfloat[Update time (The update time of \dsa is over $15$s, we omit it in the plot for better visualization)\label{window_update}]{
		\includegraphics[width=4.23cm]{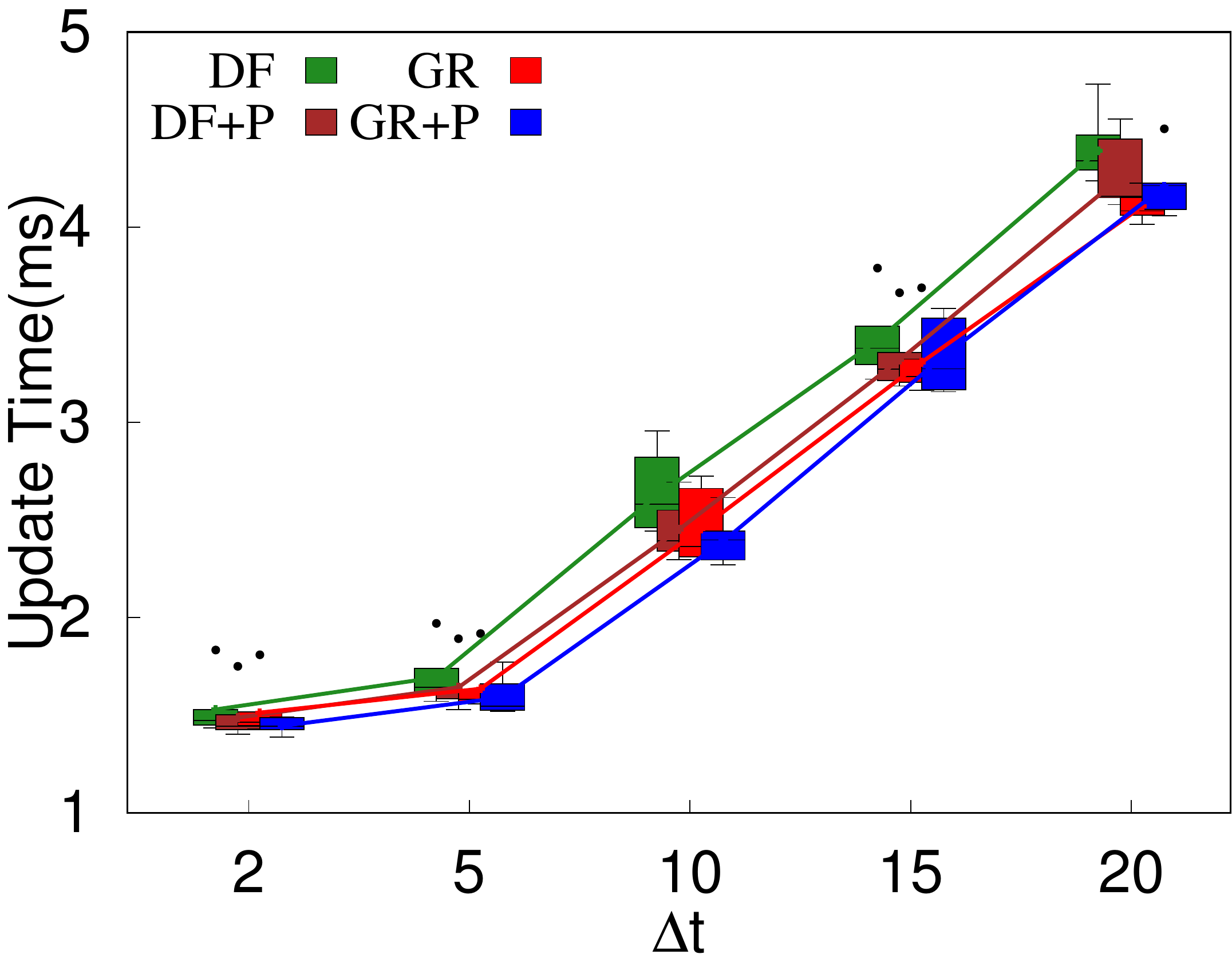}}
	\vspace{-1ex}
	\caption{The update time when varying {\detour} and {\updatetimeinter}}
	\label{fig-detourAndwindow-updateTime}  
	\vspace{-5px}
\end{figure}
\def\D{\leavevmode\hphantom{1}}

\begin{table*}[t]
  \renewcommand{\arraystretch}{1.3}
  \caption{The performance of DF+P and GR+P in a Peak Hour simulation as $|D|$ varies}
  \vspace{-8pt}
  \centering
  \begin{tabular}{cp{1cm}<{\centering}p{1cm}<{\centering}p{1.3cm}<{\centering}p{1.3cm}<{\centering}p{1cm}<{\centering}p{1cm}<{\centering}p{1cm}<{\centering}p{1cm}<{\centering}}
  \hline
  \multirow{2}{*}{\bfseries $|D|$} & \multicolumn{2}{c}{\bf Served Rate} & \multicolumn{2}{c}{\bf Additional Distance (km)} & \multicolumn{2}{c}{\bf Matching Time (ms)} & \multicolumn{2}{c}{\bf Update Time (ms)} \\ \cline{2-9}
  & \dfp               & \grp              & \dfp                     & \grp                     & \dfp                  & \grp                  & \dfp                 & \grp                 \\ \hline
  1000                 & 0.175            & 0.192           & 2.314                  & 1.806                  & 2.878               & \D3.962               & \D3.191              & \D3.495               \\ 
  3000                 & 0.422            & 0.507           & 2.294                  & 1.912                  & 10.543			     & \D12.375               & \D12.945              & \D13.141              \\ 
  5000                 & 0.601            & 0.711           & 2.270                   & 2.038                  & 12.177               & \D16.870                & 15.633             & 15.527             \\ 
  7000                 & 0.693             & 0.825           & 2.287                  & 2.130                  & 18.848               & \D34.195                & 21.049             & 28.226             \\ 
  9000                 & 0.770            & 0.884           & 2.275                  & 2.175                  & 25.875                & \D48.375              & 27.566              & 34.476              \\ \hline
  \end{tabular}
  \label{table:casestudy}
  \vspace{-8pt}
\end{table*}

\vspace{-7px}
\subsection{Case Study}\label{subsec:exp:result:case}
In order to show the scalability of our proposed
methods on both efficiency and effectiveness in peak travel periods
scenario, we adopt
{\dfp} and {\grp} to process $100,000$ orders in peak hours, where
there are $20$ rider requests per second on average.
We evaluate differing numbers of drivers in
Table~\ref{table:casestudy} to simulate a peak hour scenario, while
other parameter values are held constant.

We could observe that {\grp} can obtain a $18.3$\% higher served rate than
{\dfp} when $|D|=5000$, while requiring a smaller additional distance.
In term of efficiency, {\dfp} outperforms {\grp} because {\grp}
requires more rider insertion algorithm invocations to find the
minimal utility gain. 
Moreover, the update time of both {\dfp} and {\grp} can be finished
within the current timeslot.
\vspace{-10px}
\subsection{Performance Bound}\label{subsec:exp:result:performance}
In addition, we perform a performance bound
experiment to better observe the volatility of the algorithms
explored in this work.
As in prior work~\cite{pan2019ridesharing}, we use Simulated
Annealing (\sa) to find the best solution under each time window for
performance comparison.

Note that we have two objectives to achieve, thus we combine these
two objectives into one single function by minimizing the utility
function \utilitygr (i.e., Eq.~\ref{gr:utility}), where the smaller
the utility value is, and the better the method will be.
To be specific, we first generate an initial solution using {\grp}.
The initial solution could be chosen randomly without any
constraint violation, but using {\grp} achieves a better
result as verified in Section~\ref{subsec:exp:result:real}.
Next, we perform $P$ perturbations for a number of $T$
temperatures.
During each perturbation, we randomly select a rider request and then
reassign it to a different driver, where the constraints will be
further examined for the possible insertion.
If a smaller utility value is obtained, then we insert the rider and
update the driver's trip schedule.
Otherwise, if the utility value increases, then it is accepted
with probability $p = \exp (-\frac{\Delta U}{T})$, where
$\Delta U$ is the utility value difference before and after
reassignment, and $T$ is the current temperature.
The temperature is then reduced by a decay value $\delta*T$
proportional to the current temperature after every $P$
perturbations.
The algorithm terminates when the temperature reaches zero.
This alleviates the likelihood of converging to the same local optima
repeatedly.
If the temperature cool-down requires $t$ iterations, the time
complexity of {\sa} is $O(tPn^2 \shortestpathquery)$ including the
quadratic cost $O(n^2 \shortestpathquery)$ for each rider request
insertion as discussed in Section~\ref{sec:riderinsertion}.
We omit the time complexity of finding the initial solution which only
takes several milliseconds using {\grp} as it is not a factor
(quantified below).

For all experiments, we use the default parameter setting from Table
2.
The number of perturbations $P$ is $10{,}000$, the start temperature
$T$ is set to $5$, and the decay parameter $\delta = 0.001$.
We process $1,000$ orders in five successive time windows.
The experiments are repeated $10$ times and the average results are
reported.
The average matching time to process all the orders requires $28.3$
minutes ($1{,}698$ ms per rider request), while {\grp} only requires
$3.253$ seconds ($3.253$ ms per rider request).
For the effectiveness, the utility results of all methods are
presented in Table~\ref{table-pb}.
The effectiveness of {\df} and {\gr} are excluded since they
are the same as {\dfp} and {\grp}, respectively.
\vspace{-2ex}
\begin{table}[t]
	\centering
	\renewcommand{\arraystretch}{1.3}
	\caption{Utility Scores for all Methods over Five Successive Time Windows. }
	\vspace{-2ex}
	\begin{tabular}{cccccc}
		\hline
		& $1^{st}$ & $2^{nd}$ & $3^{rd}$ & $4^{th}$ & $5^{th}$ \\ \hline
		\dsa	&2.457	&2.223	&2.247	&2.305	&2.719      \\ \hline
		\dfp	&2.575	&2.580	&2.484	&2.537	&2.869      \\ \hline
		\grp	&2.399	&2.155	&2.023	&2.128	&2.670      \\ \hline
		\sa		&\textbf{2.157}	&\textbf{1.947}	&\textbf{1.889}	&\textbf{1.951}	&\textbf{2.281}		\\ \hline
	\end{tabular}
	\label{table-pb}
\end{table}

	\vspace{-5px}
\section{Conclusion} \label{sec:conclusion}
In this paper, we proposed a variant of the dynamic ridesharing
problem, which is a bilateral matching between a set of drivers and
riders, to achieve two optimization objectives: maximize
the served rate and minimize the total additional distance.
In our problem, we mainly focus on the peak hour case, where the
number of available drivers is insufficient to serve all of the rider
requests.
To speed up the matching process, we construct an index structure
based on a partitioned road network and compute the lower bound
distance between any two vertices on a road network.
Furthermore, we propose several pruning rules on top of the lower
bound distance estimation.
Two heuristic algorithms are devised to solve the bilateral matching
problem.
Finally, we carry out an experimental study on a large-scale real
dataset to show that our proposed algorithms have better efficiency
and effectiveness than the state-of-the-art method for dynamic
ridesharing.
In the future, we plan to investigate how
``fairness'' can be applied to the driver-and-rider matching problem
by defining a fair price mechanism for riders based on waiting time
tolerance.

	\vspace{-13px}
	\section*{Acknowledgments}
	This work was partially supported by ARC under Grants DP170102726, DP170102231, DP180102050, and DP200102611, and the National Natural Science Foundation of China (NSFC) under Grants 61728204, and 91646204. Zhifeng Bao and J. Shane Culpepper are the recipients of Google Faculty Award. 
	\vspace{-12px}
	\bibliographystyle{IEEEtran}
	\bibliography{ref_all}
	\vspace{-4em}
	\begin{IEEEbiography}[{\includegraphics[width=1in,clip,keepaspectratio]{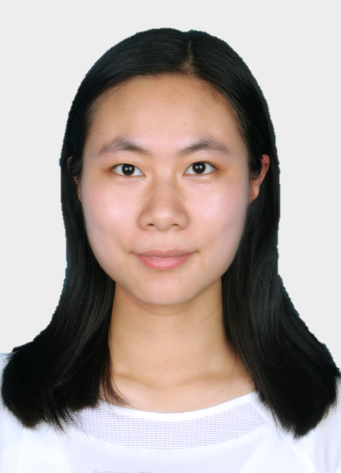}}]{Hui Luo}
		received the M.E. degree in computer science from Wuhan University,
		Hubei, China, in 2017. She is currently pursuing the
		Ph.D. degree at Computer Science and Information Technology, RMIT University, Melbourne, Australia. Her current research interests include data mining, intelligent transportation, and machine learning. 
	\end{IEEEbiography}
	\vspace{-4.5em}
	\begin{IEEEbiography}[{\includegraphics[width=1in,height=1.25in,clip,keepaspectratio]{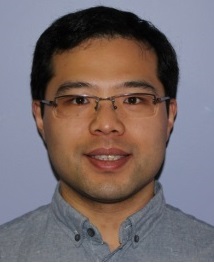}}]{Zhifeng Bao}
		received the Ph.D. degree in computer science from the National University of Singapore in 2011 as the winner of the Best PhD
		Thesis in school of computing. He is currently a
		senior lecturer with the RMIT University and leads
		the big data research group at RMIT. He is also
		an Honorary Fellow with University of Melbourne
		in Australia. His current research interests include
		data usability, spatial database, data integration,
		and data visualization. 
	\end{IEEEbiography}
	\vspace{-5.5em}
	\begin{IEEEbiography}[{\includegraphics[width=1in,height=1.25in,clip,keepaspectratio]{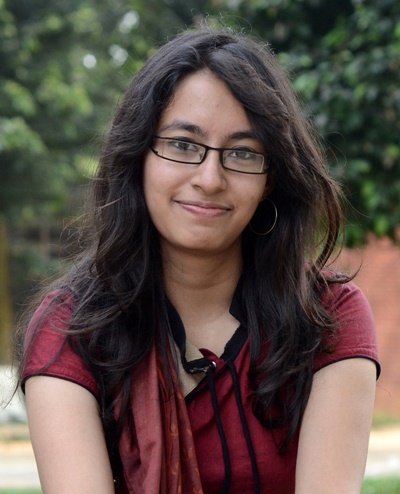}}]{Farhana M. Choudhury}
		received the Ph.D. degree in computer science from RMIT University in 2017. She is currently a lecturer at The University of Melbourne. Her current research interests include
		spatial databases, data visualization, trajectory queries, and applying machine learning techniques to solve spatial problems. 
	\end{IEEEbiography}
	\vspace{-5em}
	\begin{IEEEbiography}[{\includegraphics[width=1in,height=1.25in,clip,keepaspectratio]{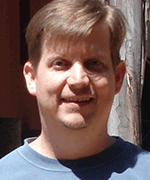}}]{J. Shane Culpepper}received the Ph.D. degree at The
	University of Melbourne in 2008. He is currently a Vice-Chancellor's Principal Research Fellow and Director of the Centre for Information Discovery and Data Analytics at RMIT University. His research interests include information retrieval, machine learning, text indexing, system evaluation, algorithm engineering, and spatial computing.
	\end{IEEEbiography}
\end{document}